\newcommand*{\citet}[1]{\AtNextCite{\AtEachCitekey{\defcounter{maxnames}{2}}} \textcite{#1}}
\newcommand*{\citep}[1]{\cite{#1}}
\def\disteq{\stackrel{d}{=}}
\def\output{\mathcal{S}}
\def\lr{\mathcal{R}}
\def\shuffler{\mathcal{A}_{\rm s}}
\def\bin{{\rm Bin}}
\def\D{\mathcal{D}}
\def\Aldp[#1]{\mathcal{R}^{(#1)}}
\def\out[#1]{\mathcal{S}^{(#1)}}
\def\2RR{\texttt{2RR}}
\def\3RR{\texttt{3RR}}
\def\kRR{\texttt{kRR}}
\newcommand{\decompprob}[1]{p_{#1}}
\newcommand{\dalpha}[1]{D_{#1}}
\newcommand{\extremalset}[1]{\mathcal{E}_{{#1}}}
\newcommand{\zo}{\{0,1\}}
\newcommand{\zot}{\{0,1,2\}}
\providecommand{\equ}[1]{

\begin{equation}
#1
\end{equation}}
\providecommand{\alequn}[1]{\begin{align*} #1 \end{align*}}
\let\originalleft\left
\let\originalright\right
\renewcommand{\left}{\mathopen{}\mathclose\bgroup\originalleft}
\renewcommand{\right}{\aftergroup\egroup\originalright}
\newcommand{\ex}[2]{{\ifx&#1& \mathbb{E} \else \underset{#1}{\mathbb{E}} \fi \left[#2\right]}}
\newcommand{\pr}[2]{{\ifx&#1& \mathbb{P} \else \underset{#1}{\mathbb{P}} \fi \left[#2\right]}}
\providecommand{\E}{\mathop{\mathbb E}}
\providecommand{\Pr}{\mathop{\mathbb{Pr}}}
\newcommand{\eps}{{\varepsilon}}
\newcommand{\Ber}{\ensuremath{\mathrm{Bern}}}
\newcommand{\cS}{\mathcal{S}}
\newtheorem{thm}{Theorem}[section]
\newtheorem{theorem}[thm]{Theorem}
\newtheorem{lemma}[thm]{Lemma}
\newtheorem{corollary}[thm]{Corollary}
	\theoremstyle{definition}
\newtheorem{definition}[thm]{Definition}
\newcommand{\Bin}{\operatorname{Bin}}
\newcommand{\unif}[1]{\mathcal{U}_{#1}}
\newcommand{\threedistqone}[3]{P_0\left(#1, #2, #3\right)}
\newcommand{\threedistqtwo}[3]{P_1\left(#1, #2, #3\right)}
\newcommand{\threedistqb}[3]{P_b\left(#1, #2, #3\right)}
\newcommand{\threedistone}[2]{P_0\left(#1,#2\right)}
\newcommand{\threedisttwo}[2]{P_1\left(#1,#2\right)}
\newcommand{\threedistb}[2]{P_b\left(#1,#2\right)}
\newcommand{\threedistonemain}[1]{P_0\left(#1\right)}
\newcommand{\threedisttwomain}[1]{P_1\left(#1\right)}
\title{Stronger Privacy Amplification by Shuffling for R\'enyi and Approximate  Differential Privacy}
\date{}
\author{
	Vitaly Feldman\\
	Apple\\
	\and {Audra McMillan}\\
	Apple\\
	\texttt{audra\textunderscore mcmillan@apple.com}\\
	\and Kunal Talwar\\
	Apple\\
	\texttt{ktalwar@apple.com}\\
}
\begin{document}

\maketitle

\begin{abstract}
The shuffle model of differential privacy has gained significant interest as an intermediate trust model between the standard local and central models \citep{ErlingssonFMRTT19,CheuSUZZ19}. A key result in this model is that randomly shuffling locally randomized data amplifies differential privacy guarantees. Such amplification implies substantially stronger privacy guarantees for systems in which data is contributed anonymously \citep{Bittau17}.

In this work, we improve the state of the art privacy amplification by shuffling results both theoretically and numerically.
Our first contribution is the first asymptotically optimal analysis of the R\'enyi differential privacy parameters for the shuffled outputs of LDP randomizers. Our second contribution is a new analysis of privacy amplification by shuffling. This analysis improves on the techniques of \citet{FeldmanMT:2020} and leads to tighter numerical bounds in all parameter settings.
\end{abstract}

\newpage

\section*{Errata}
The general upper bounds stated in the original version of this paper had an error in the proof. The error occurs in the proof of Lemma 3.5 when $p<\frac{1}{e^{\eps_0}+1}$. The error affects the theorems and corollaries that use Lemma 3.5 in their proof: Theorem 3.1, Theorem 3.2, Corollary 4.3 and Theorem 5.2.

In this errata we will outline the error, and give updated results:
\begin{itemize}
    \item The general upper bounds Theorem 3.1 and Theorem 3.2, previously stated for any sequence of adaptive local randomizers, hold for a restricted class of local randomizers that contains several important local randomizers. The numerical results in Figure 1, Figure 2, and Figure 3 also hold for this class of local randomizers.
    \item The key lemma showing that the privacy analysis can be reduced to comparing two multinomial distributions, Lemma 3.5, holds with a slightly different decomposition of the local randomizers. Given any specific local randomizer, such a decomposition is guaranteed to exist, and hence this provides a method for numerically computing a privacy amplification by shuffling bound for any local randomizer.
    \item The asymptotic upper bound on the privacy amplification by shuffling in terms of R\'enyi differential privacy (Corollary 4.3) still holds. The original proof depended on Lemma 3.5, but the same result holds with slightly worse constants using the results from \cite{FeldmanMT:2020}. 
    \item A slight variant on the upper bound for privacy amplification by shuffling for \texttt{kRR} (Theorem 5.2) holds. The new upper bound no longer exactly matches the lower bound given in Theorem 5.3, although we show that the two bounds are numerically close.
\end{itemize}

An outline of the error, as well as the updated results stated above can be found in Section~\ref{errata}. The affected theorems are flagged in the main body, although in order to maintain consistency with the original paper, the prose has not been changed. We conjecture that the general results stated in the original paper do hold, although this is left as an open problem in this errata.

\section{Introduction}
We consider privacy-preserving data analysis in the federated setting augmented with a shuffler. In this model each client sends a report of their data and these reports are then anonymized and randomly shuffled before being sent to the server. Systems based on this model were first proposed by \cite{Bittau17} as a simple way to improve the privacy of the user data.

The interest in this model was spurred by two works \citep{ErlingssonFMRTT19,CheuSUZZ19}  demonstrating that shuffling can provably amplify DP guarantees. Specifically, if each of $n$ clients randomizes their data with $\eps_0$ {\em local} DP then the shuffled reports satisfy $(\eps(\eps_0,\delta,n),\delta)$ DP for some $\eps(\eps_0,\delta,n) \ll \eps_0$ (when $n$ and $1/\delta$ are sufficiently large). This fact can also be used to analyze models augmented with an aggregator such as PRIO \citep{Prio}
(since the sum of real values provides even less information than shuffled values). In particular, privacy amplification by shuffling was used in Apple's and Google's Exposure Notification Privacy-preserving Analytics \citep{ENPA:2021}. Since then a number of works have studied privacy amplification by shuffling and the model augmented with a shuffler more generally (e.g.~\citep{Balle:2019,Ghazi:2019,GhaziPV19,Balle:2020,Balle2020,GMPV20,cheu2020limits,CheuSUZZ19,BalcerCheu,WangXDZHHLJ20,ErlingssonFMRSTT20,FeldmanMT:2020,girgis2020shuffled}).

The key to applications of privacy amplification by shuffling is bounding the resulting privacy parameter $\eps$ (as a function of $\eps_0$, $\delta$ and $n$), especially in the $\eps_0>1$ regime that is crucial for obtaining sufficiently accurate results in practice. A number of works addressed these bounds although, until recently, known bounds were asymptotically suboptimal \citep{ErlingssonFMRTT19,Balle:2019,Balle:2020} or only applied to the binary randomized response \citep{CheuSUZZ19}. For a summary of these results see \cite[Table 1]{FeldmanMT:2020}.

In a recent work, \citet{FeldmanMT:2020} give an asymptotically optimal analysis of privacy amplification by shuffling for $(\eps, \delta)$-DP. They show a more general result\footnote{In their result the inputs are shuffled {\em before} applying local randomizers. It is more general since it implies amplification when outputs of identical randomizers are shuffled. In addition, it allows adaptive choice of local randomizers which is necessary for analyzing iterative optimization algorithms such as stochastic gradient descent.} that running an adaptive sequence of arbitrary $\eps_0$-DP local randomizers on a uniformly random permutation of $n$ data items, yields an $(\eps, \delta)$-DP algorithm, where $\eps = O\left((1-e^{-\eps_0})\frac{\sqrt{e^{\eps_0}\ln(1/\delta)}}{\sqrt{n}}\right)$. Their result relies on a reduction from analysis of the privacy parameter for general adaptive protocols to analysis of divergence between a fixed pair of distributions on 3 values. In particular, they obtain a way to compute an upper bound on $\eps(\eps_0,\delta,n)$ numerically leading to numerical bounds that significantly improve on prior work.

One limitation of bounds on the approximate DP parameter $\eps(\eps_0,\delta,n)$ is that they are not well-suited for analysis of multi-step algorithms common in machine learning. Analysis of such algorithms requires composition whereas composition in terms of $(\eps,\delta)$ parameters typically leads to a $\sqrt{\ln(1/\delta)}$ overhead in the resulting bound. This overhead can be addressed by analyzing the privacy loss in terms of R\'enyi differential privacy (RDP) \cite{DworkR16,DLDP,mironov2017renyi,Bun:2016}. R\'enyi DP for order $\alpha$ upper bounds the privacy loss using R\'enyi divergence of order $\alpha$. Crucially, it leads to simple and relatively tight bounds for composition and can be easily converted to approximate DP.

Bounds on RDP parameters of privacy amplification by shuffling were first given by \citet{ErlingssonFMRTT19}. Their bound is asymptotically optimal for $\eps_0 < 1$, but for $\eps \geq 1$, their bound of $O( \alpha  e^{6\eps_0}/n)$ on RDP of order $\alpha$ is suboptimal. \citet{girgis2020shuffled} improved the bound to $O( \alpha  e^{2\eps_0}/n)$ (albeit in a rather limited range of $\alpha$) and also prove a lower bound of $\Omega( \alpha  e^{\eps_0}/n)$. Numerically, the strongest bounds on RDP are given in \citep{FeldmanMT:2020} who also demonstrate the advantages of RDP-based bounds for composition (of shuffled outputs). Another numerical approach to composition for shuffled outputs is given in \citep{koskela2021tight}. Their bounds rely on the Fourier accountant \citep{koskela2020computing} and the reduction from \citep{FeldmanMT:2020}.

\subsection{Our contribution}
We improve the existing bounds in two, largely independent, ways.

Our first contribution is an asymptotically tight upper bound on the R\'enyi DP of shuffled outputs of $n$ local randomizers. Specifically, we show that for $\alpha \leq c_0 \frac{n}{\eps_0 e^\eps_0}$ (for some fixed constant $c_0 >0$) and $\eps_0 >1$, the RDP parameter of order $\alpha$ is $O(\alpha \frac{e^{\eps_0}}{n})$. This improves on the results in \citep{girgis2020shuffled} both in terms of the bound and in terms of the range of $\alpha$ as they only prove their bound of $O( \alpha  e^{2\eps_0}/n)$ for $\alpha \leq c_1 (\frac{n}{e^{5\eps_0}})^{1/4}$. In particular, their bound can only be used for $\eps_0 \leq \ln(n)/5$, whereas our bound is non-trivial for $\eps_0 \leq \ln(n)-\ln\ln(n)$. Bounds on higher order $\alpha$'s are necessary for converting the RDP bounds to approximate DP bounds with relatively small $\delta$. We also note that for $\alpha > \frac{n \eps_0}{ e^{\eps_0}}$, $\alpha \frac{e^{\eps_0}}{n} > \eps_0$ and thus our bound applies to almost the entire range of $\alpha$ where the bound $O(\alpha \frac{e^{\eps_0}}{n})$ is non-trivial.

Our proof relies on a general conversion from truncated Gaussian tail bounds of the privacy loss random variable to a bound on R\'enyi privacy loss that might be useful in other contexts. Previously such conversion was only known for pure DP \citep{mironov2017renyi,Bun:2016}. We apply this general conversion to the asymptotically optimal bounds for approximate DP that we derive. We remark that for the purpose of obtaining asymptotically optimal bounds we can also apply this conversion to the bounds in \citep{FeldmanMT:2020}. See Section \ref{sec:rdp} for more details.

Our second contribution is a new, stronger analysis of privacy amplification by shuffling. We follow the basic approach from \citep{FeldmanMT:2020} which reduces the divergence between distribution on the outputs of an adaptive application of $n$ local $\eps_0$-DP randomizers on two datasets that differ in a single element to analysis of the divergence between a fixed pair of distributions on 3 values.
Informally, the reduction in \citep{FeldmanMT:2020} shows that an LDP randomizer on any input can be seen as producing the output of the randomizer on either of the inputs on which the datasets differ with some probability. Thus running the randomizer on inputs that are identical in both datasets can be seen as outputting a random number of samples from the output distribution of the randomizer on the elements on which the datasets differ (referred to as ``clones'').

At a high level we rely on more delicate analysis that instead of ``cloning'' the entire output distributions on differing elements only clones the part of those distributions where the distributions actually differ. This analysis improves the probability of producing a ``clone'' from $1/(2e^{\eps_0})$ to $1/(1+e^{\eps_0})$. For $\eps_0 > 1$ this leads to a roughly factor 2 improvement in the expected number of ``clones'' which translates to roughly factor $\sqrt{2}$ improvement in the resulting bound (or allowing a factor 2 more steps of an algorithm for the same overall privacy budget). For comparison, we note that the gap between the known numerical upper and lower bounds is typically less than a factor 2 and our improvement closes most of this gap (see Figure \ref{approxgraphs}).

Our reduction has the property that it can exploit additional structure in the local randomizer to give improved bounds. In particular, for the standard $k$-randomized response (or $k$-RR) randomizer our reduction leads to a tight bound. We note that \citet{FeldmanMT:2020} also give a reduction that can exploit the additional structure present in $k$-RR. However their analysis requires a separate reduction for this case and does not lead to a tight bound.

\section{Preliminaries}

Differential privacy (DP) is a stability notion for randomized algorithms. Intuitively, an algorithm is differentially private if the distribution on outputs doesn't change too much when a single individual changes their data. There are several ways to formalize the notion of closeness of distributions that are commonly used to define variants of DP. The most popular are the hockey-stick divergence, used to define $(\eps,\delta)$-differential privacy, and the R\'enyi divergence, used to define $(\rho(\alpha),\alpha)$-R\'enyi differential privacy (RDP).

\begin{definition}
The {\em hockey-stick} divergence between two random variables $P$ and $Q$ is defined by: \[\dalpha{e^{\eps}}(P\|Q) = \int \max\{0, P(x)-e^{\eps} Q(x)\} dx,\] where we use the notation $P$ and $Q$ to refer to both the random variables and their probability density functions. We say that $P$ and $Q$ are $(\eps, \delta)$-indistinguishable if $\max\{\dalpha{e^{\eps}}(P\|Q), \dalpha{e^{\eps}}(Q\|P)\}\le\delta$. \end{definition}

\begin{definition}[R\'enyi divergence]
For two random variables $P$ and $Q$, the R\'enyi divergence of $P$ and $Q$ of order $\alpha>1$ is \[D^{\alpha}(P\|Q) = \frac{1}{\alpha-1}\ln \E_{x\sim Q}\left[\left(\frac{P(x)}{Q(x)}\right)^{\alpha} \right] .\]
For $\alpha =1$, $D^{1}(P\|Q) = \mbox{KL}(P\|Q) = \E_{x\sim P}\left[\ln\left(\frac{P(x)}{Q(x)}\right)\right]$.
\end{definition}

The hockey-stick divergence and the R\'enyi divergence share several important properties that make them appropriate distance measures for measuring privacy. The data processing inequality is considered a hallmark of distance measures used for measuring privacy. It states that the privacy guarantee can not be degraded by further analysis of the output of a private mechanism. All the commonly used notions of privacy satisfy the post-processing inequality, including $D_{e^{\eps}}$ and $D^{\alpha}$.

\begin{definition}\label{postprocess} A distance measure $D:\Delta(\mathcal{S})\times\Delta(\mathcal{S})\to[0,\infty]$ on the space of probability distributions satisfies the data processing inequality if for all distributions $P$ and $Q$ in $\Delta(\mathcal{S})$ and (possibly randomized) functions $f:\mathcal{S}\to\mathcal{S'}$, \[D(f(P)\|f(Q))\le D(P\|Q).\]
\end{definition}

There are also several differential trust models of DP. We will be primarily interested in the shuffle model, but let us first introduce the more common central model and local model.
In the central model \citep{Dwork:2006}, the data of the individuals is held by the curator. The curator is trusted to analyse the data and enforce the privacy constraint.
In the local model, formally introduced in \cite{Kasiviswanathan:2008}, each individual (or client) randomizes their data before sending it to data curator (or server). This means that individuals are not required to trust the curator. The central model requires a high level of trust, but allows for significantly more accurate algorithms.
Since it requires less trust, most deployment of DP in industry use the local DP model \citep{Erlingsson:2014, Apple2017, Bolin:2017, ErlingssonFMRSTT20}.

We say that two databases are neighboring if they differ on the data of a single individual. We'll define the trust models with respect to the hockey-stick divergence, but note that the definitions for R\'enyi DP only differ in the choice of distance measure.

\begin{definition}[Central DP]
An algorithm $\mathcal{A}:\mathcal{D}^n\to\output$ is $(\eps, \delta)$-\emph{differentially private} if for all neighboring databases $X$ and $X'$, $\mathcal{A}(X)$ and $\mathcal{A}(X')$ are $(\eps, \delta)$-indistinguishable.
\end{definition}

In local DP, users interact with the server and send outputs of randomized algorithm. In the fully adaptive case, they can communicate with the server in an arbitrary order with adaptive interaction. Formally, a protocol satisfies local $(\eps,\delta)$-DP if the transcripts of the interaction on any two pairs of neighbouring datasets are $(\eps,\delta)$-indistinguishable. In this paper, we will only be considering the adaptive, single round model, where each user sends a single report to the server. In this setting, the condition on transcripts reduces to each user interacting with the server using a mechanism that is $(\epsilon,\delta)$-differentially private with respect that that users data. We call such mechanisms for the local reports of a user \emph{local randomizers.}

\begin{definition}[Local randomizer]\label{localrandomizer}
An algorithm $\lr\colon \D\to \cS$ is $(\eps, \delta)$-DP \emph{local randomizer}  if
for all pairs $x,x'\in \D$, $\lr(x)$ and $\lr(x')$ are $(\eps, \delta)$-indistinguishable.
\end{definition}

Formally, an adaptive single pass $(\eps,\delta)$-DP local protocol can be described by a sequence of local randomizers $\Aldp[i]:\out[1]\times\cdots\times\out[i-1]\times\mathcal{D}\to\out[i]$ for $i\in[n]$, where $\D$ is the data domain, $\out[i]$ is the range space of $\Aldp[i]$ and the $i$-th user returns $z_i=\Aldp[i](z_{1:i-1}, x_i)$. We require that the local randomizer $\Aldp[i](z_{1:i-1}, \cdot)$ be $(\eps,\delta)$-DP for all values of auxiliary inputs $z_{1:i-1}\in\out[1]\times\cdots\times\out[i-1]$. 

In all models, if $\delta=0$ then we will refer to an algorithm as $\eps$-differentially private. Further, we will occasionally refer to $\delta=0$ as pure differentially private and $\delta>0$ as approximate DP.

\section{Stronger Analysis of Privacy Amplification by Shuffling}\label{generalamplificationreduction}

In this section, we present a new reduction from analyzing the privacy guarantee of shuffling an adaptive series of local randomizers to analyzing the privacy guarantee of shuffling the output of a simple non-adaptive local algorithm with three outputs. Our reduction improves upon that of \cite{FeldmanMT:2020}, resulting in tighter numerical bounds for privacy amplification by shuffling for both approximate and R\'enyi DP. Specifically, we show that there exists two families of multinomial distributions $\threedistone{\eps_0}{p}$ and $\threedisttwo{\eps_0}{p}$ such that for any two neighbouring datasets $X_0$ and $X_1$, and any adaptive series of $\eps_0$-local randomizers, there exists a post-processing function $f$ and $p\in[0,1/(e^{\eps_0}+1)]$ such that $f(\threedistone{\eps_0}{p})$ and $f(\threedisttwo{\eps_0}{p})$ are identically distributed to the output of the shuffled mechanism on $X_0$ and $X_1$, respectively. As a result, the privacy loss of the general adaptive setting of shuffling is no worse than the divergence between $\threedistone{\eps_0}{p}$ and $\threedisttwo{\eps_0}{p}$.

Let use begin by formally defining the distributions $\threedistone{\eps_0}{p}$ and $\threedisttwo{\eps_0}{p}$.
For any $p\in[0,1/(e^{\eps_0}+1)]$, define random variables $Y_p$, $Y_{1,p}^0$ and $Y_{1,p}^1$ as follows
\begin{equation}\label{individualreports}
Y_p = \begin{cases} 0 & \text{w.p.   } p\\ 1 & \text{w.p.   } p \\ 2 & \text{w.p.   } 1-2p \end{cases},\;\;\;\; Y_{1,p}^0 = \begin{cases} 0 & \text{w.p.   } e^{\eps_0}p\\ 1 & \text{w.p.   } p \\ 2 & \text{w.p.   } 1-e^{\eps_0}p-p \end{cases}\text{, and }\;\;\;\;Y_{1,p}^1 = \begin{cases} 0 & \text{w.p.   } p\\ 1 & \text{w.p.   } e^{\eps_0}p \\ 2 & \text{w.p.   } 1-e^{\eps_0}p-p \end{cases}
\end{equation}
For $b\in\{0,1\}$, to obtain a sample from $\threedistb{\eps_0}{p}$, sample one copy from $Y_{1,p}^b$ and $n-1$ copies of $Y_p$, the output $(n_0,n_1)$ where $n_0$ is the total number of 0s and $n_1$ is the total number of 1s. Equivalently, let $C\sim \bin(n-1,2p)$, $A\sim\bin(C, 1/2)$ and $\Delta_1 \sim \Ber(e^{\eps_0}p)$ and $\Delta_2 \sim \Bin(1-\Delta_1,
p/(1-e^{\eps_0}p))$, where $\Ber(q)$ denotes a Bernoulli random variable with bias $q$. Let
\begin{equation}\label{componentdistributions}
\threedistone{\eps_0}{p}=(A+\Delta_1, C-A+\Delta_2)\;\;\; \text{and} \;\;\;\threedisttwo{\eps_0}{p}=(A+\Delta_2, C-A+\Delta_1).
\end{equation}

The following theorem is our improved general upper bound.

\begin{theorem}\label{mainanalyticalthm}
\textcolor{magenta}{Errata: The error in the proof of Lemma 3.5 affects this theorem. It holds for a restricted class of local randomizers (see Theorem~\ref{newtheorem31}). The replacement for Lemma 3.5 (Lemma~\ref{shuffletobinoms2new}) implies a  bound that depends on a particular local randomizer.} For a domain $\mathcal{D}$, let $\Aldp[i]:\out[1]\times\cdots\times\out[i-1]\times\mathcal{D}\to\out[i]$ for $i\in[n]$ (where $\out[i]$ is the range space of $\Aldp[i]$) be a sequence of algorithms such that $\Aldp[i](z_{1:i-1}, \cdot)$ is an $\eps_0$-DP local randomizer for all values of auxiliary inputs $z_{1:i-1}\in\out[1]\times\cdots\times\out[i-1]$. Let $\shuffler:\mathcal{D}^n\to \out[1]\times\cdots\times \out[n]$ be the algorithm that given a dataset $x_{1:n}\in\mathcal{D}^n$, samples a permutation $\pi$ uniformly at random, then sequentially computes $z_i=\Aldp[i](z_{1:i-1}, x_{\pi(i)})$ for $i\in[n]$ and outputs $z_{1:n}$. Let $X_0$ and $X_1$ be two arbitrary neighboring datasets in $\mathcal{D}^n$. Then for any distance measure $D$ that satisfies the data processing inequality,
\[D(\shuffler(X_0)\|\shuffler(X_1))\le D\left(\threedistone{\eps_0}{\frac{1}{e^{\epsilon_0}+1}}\Big\|\threedisttwo{\eps_0}{\frac{1}{e^{\epsilon_0}+1}}\right).\]
\end{theorem}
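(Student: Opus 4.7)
The plan is to exhibit a randomized map $f$, independent of $b\in\{0,1\}$, such that $\shuffler(X_b) \disteq f(\threedistb{\eps_0}{p})$ for $p=1/(e^{\eps_0}+1)$. The theorem then follows from the data processing inequality (Definition~\ref{postprocess}) applied to $D$.

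The technical core is a decomposition lemma for any $\eps_0$-DP local randomizer $\mathcal{R}:\mathcal{D}\to\mathcal{S}$ and any pair of neighboring inputs $x_0, x_1$: I want to produce distributions $q_0, q_1$ on $\mathcal{S}$ and a family of residuals $\{\omega_x\}_{x\in\mathcal{D}}$ such that
\[
\mathcal{R}(x_0) = \tfrac{e^{\eps_0}}{e^{\eps_0}+1}\, q_0 + \tfrac{1}{e^{\eps_0}+1}\, q_1, \qquad \mathcal{R}(x_1) = \tfrac{1}{e^{\eps_0}+1}\, q_0 + \tfrac{e^{\eps_0}}{e^{\eps_0}+1}\, q_1,
\]
and $\mathcal{R}(x) = \tfrac{1}{e^{\eps_0}+1}(q_0 + q_1) + \tfrac{e^{\eps_0}-1}{e^{\eps_0}+1}\omega_x$ for every other $x$. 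Solving the linear system forces $q_0 = \tfrac{1}{2}(\mathcal{R}(x_0)+\mathcal{R}(x_1)) + \tfrac{e^{\eps_0}+1}{2(e^{\eps_0}-1)}(\mathcal{R}(x_0)-\mathcal{R}(x_1))$, with $q_1$ symmetric, and pointwise nonnegativity of $q_0, q_1$ reduces \emph{exactly} to the $\eps_0$-DP bounds $e^{\eps_0}\mathcal{R}(x_1)(z)\geq\mathcal{R}(x_0)(z)$ and its mirror.

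Given the decomposition, I would couple $\shuffler(X_b)$ with the following auxiliary process. For each position $i=1,\dots,n$, and in the order they are processed, first draw a label $Y_i\in\{0,1,2\}$: if position $i$ holds the differing element, draw $Y_i$ from $Y_{1,p}^b$; otherwise, from $Y_p$. Then, given history $z_{1:i-1}$, draw $z_i$ from $q_{Y_i}^{(i,\,z_{1:i-1})}$ when $Y_i\in\{0,1\}$ and from $\omega_{x_{\pi(i)}}^{(i,\,z_{1:i-1})}$ when $Y_i=2$. By the decomposition lemma the marginal of each $z_i$ matches the true randomizer, so the joint law of this process equals that of $\shuffler(X_b)$, and the count pair $(n_0,n_1)$ has exactly the distribution $\threedistb{\eps_0}{p}$. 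The critical structural point at $p=1/(e^{\eps_0}+1)$ is that $Y_{1,p}^b$ puts zero mass on $2$, so the differing position is never labeled $2$; every label-$2$ position then holds an input shared by $X_0$ and $X_1$, the shuffle makes their assignment to label-$2$ positions uniform over a set independent of $b$, and the conditional distribution of $z_{1:n}$ given $(n_0,n_1)$ is $b$-independent. That conditional distribution is the desired post-processing $f$.

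The hard step will be verifying $\omega_x(z)\geq 0$ for every third input $x$ and every outcome $z$, equivalently $(e^{\eps_0}+1)\mathcal{R}(x)(z) \geq \mathcal{R}(x_0)(z) + \mathcal{R}(x_1)(z)$. Plain $\eps_0$-DP supplies only $\mathcal{R}(x)(z)\geq e^{-\eps_0}\max(\mathcal{R}(x_0)(z), \mathcal{R}(x_1)(z))$, which falls short of the required inequality by a factor roughly $(e^{\eps_0}+1)/(2e^{\eps_0})$. I expect the main obstacle to be bridging this gap, either by a more delicate pointwise argument that exploits the joint structure of $\mathcal{R}(x_0), \mathcal{R}(x_1), \mathcal{R}(x)$, or by restricting attention to a subclass of local randomizers (e.g.\ those for which the pair $\mathcal{R}(x_0),\mathcal{R}(x_1)$ is in some sense ``extremal'') where this inequality is automatic.
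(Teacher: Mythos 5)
You have correctly identified the obstruction, and your instincts about it are exactly right. In fact, as the errata notes, Theorem~\ref{mainanalyticalthm} is \emph{false} as stated for arbitrary $\eps_0$-DP local randomizers, and the corrected statement (Theorem~\ref{newtheorem31}) restricts to precisely the class $\extremalset{\eps_0}$ of randomizers admitting the decomposition you describe at $p = 1/(e^{\eps_0}+1)$ --- i.e., those for which your residuals $\omega_x$ are pointwise nonnegative. Your forced solutions for $q_0, q_1$, the observation that their nonnegativity is equivalent to the pairwise $\eps_0$-DP constraints between $x_0$ and $x_1$, and your accounting showing that $(e^{\eps_0}+1)\mathcal{R}(x)(z) \geq \mathcal{R}(x_0)(z)+\mathcal{R}(x_1)(z)$ does \emph{not} follow from $\eps_0$-DP alone are all correct. (A two-point output space with $\mathcal{R}(x_0)(a)=\mathcal{R}(x_1)(a)=1/2$ and $\mathcal{R}(x)(a)=1/(2e^{\eps_0})$ is a concrete $\eps_0$-DP randomizer with $\omega_x(a)<0$.) Your coupling step --- drawing the label from $Y_{1,p}^b$ or $Y_p$, noting that at $p=1/(e^{\eps_0}+1)$ the differing user never receives label $2$, and therefore that the assignment of shared inputs to label-$2$ positions is $b$-independent --- is also sound given the decomposition.

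Where your route genuinely differs from the paper's original (flawed) argument is worth noting. The paper instead applies the Ye--Barg extremal reduction (Lemma~\ref{sizetwopdf}) to obtain a three-component decomposition (Corollary~\ref{decomposition}) with a possibly \emph{smaller} $p$, which always exists for any $\eps_0$-DP randomizer, and then in Lemma~\ref{shuffletobinoms} tries to lift to $p^* = 1/(e^{\eps_0}+1)$ via a post-processing $g$. The bug is in exactly the step your construction avoids: when $p < 1/(e^{\eps_0}+1)$ the differing user can be assigned label $2$, and whether user~1 lands in the label-$2$ set is then no longer independent of $b$ given $(n_0, n_1)$, breaking the coupling. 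By forcing $p = 1/(e^{\eps_0}+1)$ from the start, so that user~1 never gets label $2$, you sidestep that issue at the cost of the nonnegativity requirement on $\omega_x$; that requirement is precisely membership in $\extremalset{\eps_0}$. In short: your ``incomplete'' proof is in fact the correct proof of the corrected theorem, and your closing suggestion --- restrict to a subclass of randomizers where the inequality is automatic --- is exactly the fix the errata adopts.
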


Theorem~\ref{mainanalyticalthm} is a strict improvement over \cite[Theorem 3.2]{FeldmanMT:2020}, the best known upper bound from prior work for adaptive shuffling using arbitrary $\eps_0$-DP local randomizers.
When $p=1/(e^{\eps_0}+1)$, the formula for $\threedistb{\eps_0}{p}$ simplifies to $\threedistone{\eps_0}{p} = (A+\Delta, C-A+1-\Delta)$ and $\threedisttwo{\eps_0}{p} = (A+1-\Delta, C-A+\Delta)$ where $C\sim \bin(n-1,2/(e^{\eps_0}+1))$, $A\sim\bin(C, 1/2)$ and $\Delta \sim \Ber(e^{\eps_0}/(e^{\eps_0}+1))$.
\cite[Theorem 3.2]{FeldmanMT:2020} is also a reduction to the divergence of a specific pair of local randomizers: let $C'\sim \bin(n-1,e^{-\eps_0})$, $A'\sim\bin(C', 1/2)$ and $\Delta' \sim \Ber(e^{\eps_0}/(e^{\eps_0}+1))$ then $Q_0(\eps_0) = (A'+\Delta', C'-A'+1-\Delta')$ and $Q_1(\eps_0) = (A'+1-\Delta', C'-A'+\Delta')$. Given the similarity of these two pairs of distributions, it is easy to show that there exists a post-processing function $g$ such that $g(Q_b(\eps_0))=\threedistb{\eps_0}{p}$. The post-processing inequality then implies that Theorem~\ref{mainanalyticalthm} is strictly tighter than \cite[Theorem 3.2]{FeldmanMT:2020}.

The similarity between $Q_b(\eps_0)$ and $\threedistb{\eps_0}{1/(e^{\eps_0}+1)}$ means that we can immediately obtain an analytic bound that improves on \citep{FeldmanMT:2020} by a factor of about $\sqrt{2}$.

\begin{theorem}\label{thm:shuffling4adp}
\textcolor{magenta}{Errata: The error in the proof of Lemma 3.5 affects this theorem. It holds for a restricted class of local randomizers. This theorem with slightly worse constants appears in \cite{Feldman:2022focs}.} For any domain $\mathcal{D}$, let $\Aldp[i]:\out[1]\times\cdots\times\out[i-1]\times\mathcal{D}\to\out[i]$ for $i\in[n]$ (where $\out[i]$ is the range space of $\Aldp[i]$) be a sequence of algorithms such that $\Aldp[i](z_{1:i-1}, \cdot)$ is an $\eps_0$-DP local randomizer for all values of auxiliary inputs $z_{1:i-1}\in\out[1]\times\cdots\times\out[i-1]$.
 Let $\shuffler:\mathcal{D}^n\to\out[1]\times\cdots\times \out[n]$ be the algorithm that given a dataset $x_{1:n}\in\mathcal{D}^n$, samples a uniform random permutation $\pi$ over $[n]$, then sequentially computes $z_i=\Aldp[i](z_{1:i-1}, x_{\pi(i)})$ for $i\in[n]$ and outputs $z_{1:n}$.
 Then for any $\delta\in[0,1]$ such that $\eps_0\le\ln(\frac{n}{8\ln(2/\delta)}-1)$, $\shuffler$ is $(\eps, \delta)$-DP,  where
\begin{equation}\label{epsbound}
\eps\le \ln\left(1+(e^{\eps_0}-1)\left(\frac{4\sqrt{2\ln(4/\delta)}}{\sqrt{(e^{\eps_0}+1)n}}+\frac{4}{n}\right)\right)
\end{equation}
\end{theorem}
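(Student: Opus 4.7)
The plan is to invoke Theorem~\ref{mainanalyticalthm} at $p = 1/(e^{\eps_0}+1)$, which reduces the problem to bounding the hockey-stick divergence between $\threedistone{\eps_0}{1/(e^{\eps_0}+1)} = (A+\Delta, C-A+1-\Delta)$ and $\threedisttwo{\eps_0}{1/(e^{\eps_0}+1)} = (A+1-\Delta, C-A+\Delta)$, where $C \sim \Bin(n-1, 2/(e^{\eps_0}+1))$, $A \mid C \sim \Bin(C, 1/2)$ and $\Delta \sim \Ber(e^{\eps_0}/(e^{\eps_0}+1))$ are independent. Since the coordinate sum is deterministically $C+1$ in both distributions, it suffices to bound the divergence of the first coordinates $N_0 = A+\Delta$ versus $N_1 = A+1-\Delta$ conditioned on $C$.

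The first step is a direct calculation of the pointwise privacy loss. Conditioning on $C=c$ and expanding over the two values of $\Delta$ gives
\[ \frac{\Pr[N_0 = k \mid C=c]}{\Pr[N_1 = k \mid C=c]} = \frac{e^{\eps_0}\binom{c}{k-1} + \binom{c}{k}}{\binom{c}{k-1} + e^{\eps_0}\binom{c}{k}}. \]
Applying $\binom{c}{k-1}/\binom{c}{k} = k/(c-k+1)$ and reparametrizing by $d = k - (c+1)/2$, this simplifies to $(1+\beta)/(1-\beta)$ with $\beta = (e^{\eps_0}-1)(2k-c-1)/((e^{\eps_0}+1)(c+1))$. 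For $\beta \leq 1/2$ the log-ratio is at most $\ln(1+4\beta)$, so the remaining task is to show that $|2k-c-1|/(c+1)$ is small except on an event of probability $\delta$.

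The tail bound combines two standard concentration inequalities. The hypothesis $\eps_0 \leq \ln(n/(8\ln(2/\delta))-1)$ is exactly the statement $\mathbb{E}[C] = 2(n-1)/(e^{\eps_0}+1) \geq 16\ln(2/\delta)$, so a multiplicative Chernoff bound yields $C \geq (n-1)/(e^{\eps_0}+1)$ except with probability $\delta/2$. Conditioned on such a $c$, Hoeffding's inequality applied to $A \mid C=c \sim \Bin(c,1/2)$ gives $|2A - c| \leq \sqrt{2c\ln(4/\delta)}$ except with conditional probability $\delta/2$. On the union of complements, of total measure at most $\delta$, we discard the points; elsewhere $|2k-c-1| \leq \sqrt{2c\ln(4/\delta)} + 1$, and using $c+1 \geq (n-1)/(e^{\eps_0}+1)$ this yields
\[ \beta \leq (e^{\eps_0}-1)\left( \sqrt{\frac{2\ln(4/\delta)}{(e^{\eps_0}+1)\,n}} + \frac{1}{n} \right). \]
Substituting into $\ln(1+4\beta)$ recovers the stated bound, and the reverse direction $D_{e^\eps}(P_1\|P_0) \leq \delta$ is immediate from the symmetry $A \leftrightarrow C-A$ of $\Bin(c,1/2)$.

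The main obstacle is bookkeeping of constants: matching the exact factors $4\sqrt{2\ln(4/\delta)}$ and $4/n$ requires keeping $\beta \leq 1/2$ (so that $\ln((1+\beta)/(1-\beta)) \leq \ln(1+4\beta)$ is valid), carefully choosing the Chernoff threshold to be the half-mean rather than a smaller fraction, and tracking the $\pm 1$ differences between $c$ and $c+1$ and between $n$ and $n-1$. Conceptually, however, nothing beyond Theorem~\ref{mainanalyticalthm} and standard binomial tail bounds is needed.
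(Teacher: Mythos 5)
Your overall strategy — reduce via Theorem~\ref{mainanalyticalthm} to the pair $P_0,P_1$, condition on $C$, write the privacy loss ratio as a rational function of binomial coefficients, and discard a probability-$\delta$ tail via Chernoff (on $C$) and Hoeffding (on $A$) — is indeed what the paper does; the paper just packages the tail bound as Lemma~\ref{binomials-aux} (imported from \cite{FeldmanMT:2020}) and then passes through a mixture argument in Lemma~\ref{lem:tail-bound}. Your reparametrization $\frac{\Pr[N_0=k\mid C=c]}{\Pr[N_1=k\mid C=c]} = \frac{1+\beta}{1-\beta}$ with $\beta = \frac{(e^{\eps_0}-1)(2k-c-1)}{(e^{\eps_0}+1)(c+1)}$ is correct and is a clean way to see the structure.

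However, there is a genuine gap in the step you flag as ``bookkeeping.'' The inequality $\ln\frac{1+\beta}{1-\beta}\le\ln(1+4\beta)$ is equivalent to $\beta\le 1/2$, and the hypothesis $\eps_0\le\ln(\frac{n}{8\ln(2/\delta)}-1)$ does \emph{not} force $\beta\le 1/2$ on your good event. Plugging $n\ge 8(e^{\eps_0}+1)\ln(2/\delta)$ and $\ln(4/\delta)\le 2\ln(2/\delta)$ into your own bound gives only
\[
|\beta|\ \le\ \frac{e^{\eps_0}-1}{e^{\eps_0}+1}\left(\tfrac{1}{\sqrt 2}+\tfrac 18\right)\approx 0.83\cdot\frac{e^{\eps_0}-1}{e^{\eps_0}+1},
\]
which exceeds $1/2$ once $\eps_0$ is a bit above $1$. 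So for part of the allowed parameter range, your pointwise bound on the privacy loss is not established, and the route through $\ln(1+4\beta)$ breaks down.

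The fix is to replace the $1+4\beta$ bound with the exact identity, which is what the paper's Lemma~\ref{lem:tail-bound} does in disguise. Writing $z=\binom{c}{k-1}/\binom{c}{k}=k/(c-k+1)$ and $\alpha=e^{\eps_0}/(e^{\eps_0}+1)$, one has
\[
\frac{1+\beta}{1-\beta}\;=\;\frac{\alpha z+(1-\alpha)}{(1-\alpha)z+\alpha}\;=\;1+\frac{(2\alpha-1)(z-1)}{(1-\alpha)z+\alpha}\;\le\; 1+(2\alpha-1)(z-1),
\]
where the last inequality is unconditional since $(1-\alpha)z+\alpha\ge 1$ whenever $z\ge 1$. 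This shows you should be controlling $z-1$ (equivalently $\frac{2k-c-1}{c-k+1}$), not $\frac{2k-c-1}{c+1}$, and that is precisely what Lemma~\ref{binomials-aux} delivers: a high-probability bound $z\le e^{\eps'}$ with $e^{\eps'}-1=\frac{\sqrt{32\ln(4/\delta)}}{\sqrt{pn}}+\frac{4}{pn}$. Your final constants happened to match the theorem statement because $4B=(2\alpha-1)(e^{\eps'}-1)$ numerically, but that is an artifact of the particular slack on each side — the factor~$4$ in the theorem comes from $\sqrt{32}=4\sqrt 2$ and the $4/(pn)$ term in the tail bound on $z$, not from a factor-$2$ slack in bounding $(1-\beta)^{-1}$. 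With the identity above, the remainder of your tail computation (Chernoff lower bound on $C$, Hoeffding on $A\mid C$) carries over and gives the stated bound without any restriction on $\beta$.
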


 The proof of Theorem~\ref{mainanalyticalthm} relies on the following lemma that converts any $\eps_0$-DP local randomizer to one where for each output, the probability of the output takes one of two extremal values.

\begin{lemma}\label{sizetwopdf}\cite[Lemma IV.4]{ye2018optimal}
If $\mathcal{R}:\mathcal{D}\to\mathcal{S}$ is an $\eps$-DP local randomizer, and both $\mathcal{D}$ and $\mathcal{S}$ are finite, then there exists a finite output space $\mathcal{Z}$, a local randomizer $\mathcal{R}':\mathcal{D}\to\mathcal{Z}$, and a post-processing function $\Phi:\mathcal{Z}\to\mathcal{S}$ such that for all $z\in\mathcal{S}$, there exists $p_z\in[0,1/(e^{\eps_0}+1)]$ such that for all $x\in\mathcal{D}$, $\Pr(\mathcal{R}'(x)=z)\in\{p_z, e^{\eps_0}p_z\}$, and $\Phi\circ\mathcal{R'}=\mathcal{R}$.
\end{lemma}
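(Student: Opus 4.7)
The plan is to decompose the distribution of $\mathcal{R}$ column-by-column into a conic combination of ``extremal'' probability vectors, and then let $\mathcal{R}'$ emit which extremal vector was used alongside the original label, with $\Phi$ forgetting that extra information.

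For each $z \in \mathcal{S}$, let $v_z \in \mathbb{R}^{\mathcal{D}}_{\ge 0}$ denote the vector $v_z(x) = \Pr(\mathcal{R}(x) = z)$. The $\eps_0$-DP constraint is exactly $\max_x v_z(x) \le e^{\eps_0} \min_x v_z(x)$ for every $z$. For each subset $S \subseteq \mathcal{D}$ define the extremal vector $v_S$ by $v_S(x) = e^{\eps_0}$ if $x \in S$ and $v_S(x) = 1$ otherwise; these are precisely the shapes of probability vectors that the conclusion allows. The goal is to write $v_z = \sum_{S \subseteq \mathcal{D}} p^{(z)}_S\, v_S$ with $p^{(z)}_S \ge 0$, which induces $\mathcal{R}'$ naturally.

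The key step is giving this decomposition explicitly. Drop any $z$ with $v_z = 0$; for the rest, set $m_z = \min_x v_z(x) > 0$ and $\alpha_z(x) = (v_z(x) - m_z)/((e^{\eps_0}-1) m_z) \in [0,1]$, so coordinate-wise $v_z(x) = m_z\bigl((1-\alpha_z(x)) + e^{\eps_0} \alpha_z(x)\bigr)$. Let $\mu_z$ be the product-of-Bernoullis distribution on $2^{\mathcal{D}}$ in which each $x$ lies in $S$ independently with probability $\alpha_z(x)$, so that $\mu_z(\{S : x \in S\}) = \alpha_z(x)$. A direct calculation gives $\sum_{S} \mu_z(S)\, v_S(x) = e^{\eps_0} \alpha_z(x) + (1 - \alpha_z(x))$, hence $v_z = m_z \sum_S \mu_z(S)\, v_S$. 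Setting $p^{(z)}_S = m_z \mu_z(S)$ yields the desired conic combination. This step effectively identifies the cone $\{v \ge 0 : \max v \le e^{\eps_0} \min v\}$ with the cone generated by $\{v_S\}_{S \subseteq \mathcal{D}}$, and is the main technical content.

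With the decomposition in hand, take $\mathcal{Z} = \{(z,S) : p^{(z)}_S > 0\}$ (finite, since $\mathcal{D}$ and $\mathcal{S}$ are) and let $\mathcal{R}'(x)$ output $(z,S)$ with probability $p^{(z)}_S\, v_S(x) \in \{p^{(z)}_S, e^{\eps_0} p^{(z)}_S\}$. The totals satisfy $\sum_{(z,S)} p^{(z)}_S\, v_S(x) = \sum_z v_z(x) = 1$, so $\mathcal{R}'$ is a well-defined randomizer; setting $\Phi(z,S) = z$ gives $\Pr(\Phi(\mathcal{R}'(x)) = z) = v_z(x) = \Pr(\mathcal{R}(x) = z)$, so $\Phi \circ \mathcal{R}' = \mathcal{R}$. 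Finally, to enforce $p_z \le 1/(e^{\eps_0}+1)$ for every output, split any $(z,S)$ whose $p^{(z)}_S$ exceeds this threshold into $K$ equal copies with $K \ge (e^{\eps_0}+1)\, p^{(z)}_S$; this preserves both the post-processing identity and the finiteness of $\mathcal{Z}$. The main obstacle is the conic decomposition of paragraph three, where the independent-Bernoulli trick replaces what could be a delicate extreme-ray argument with a clean direct construction.
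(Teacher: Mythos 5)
Your proof is correct and is in the same spirit as the paper's (both reduce to writing each column $v_z(\cdot) = \Pr(\mathcal{R}(\cdot)=z)$ as a conic combination of ``extremal'' vectors whose coordinates lie in a two-element set $\{c, e^{\eps_0} c\}$), but the mechanics differ in several ways, each buying something concrete. First, the paper takes $\mathcal{Z} = \{1,e^{\eps_0}\}^{|\mathcal{D}|}$, aggregates the decomposition weights as $p_z = \sum_s p_s \lambda_{s,z}$, and then needs a \emph{randomized} post-processing $\Phi$ (with $\Pr(\Phi(z)=s) = p_s\lambda_{s,z}/p_z$) to recover $\mathcal{R}$; you instead index $\mathcal{Z}$ by pairs $(z,S)$ and get a \emph{deterministic} $\Phi$ (projection to the first coordinate), which is arguably cleaner. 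Second, the paper invokes the hypercube decomposition (Lemma~\ref{hypercube}) as a black box, whereas your product-of-Bernoullis construction --- put $x$ into $S$ independently with probability $\alpha_z(x) = (v_z(x)-m_z)/((e^{\eps_0}-1)m_z)$ --- is precisely a self-contained proof of that lemma; this is a nice explicit alternative to the usual extreme-point argument. Third, and most notably, you explicitly handle the $p_z \le 1/(e^{\eps_0}+1)$ clause by splitting oversized outputs into $K$ equal copies. The paper's appendix proof never actually argues this bound, and indeed it can fail without such a step (e.g.\ for a constant randomizer, where all mass lands on the all-ones vertex). Your fix is simple and correct. The only caveat worth noting is that this individual-$p_z$ bound is not actually used downstream: Corollary~\ref{decomposition} only bounds the \emph{aggregated} quantity $p = \sum_{z\in L} p_z$, and gets that bound from normalization rather than from per-$z$ bounds --- so your splitting step, while it makes the lemma literally true as stated, is not load-bearing for the rest of the paper.
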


The proof of Lemma~\ref{sizetwopdf} is contained in the proof of Lemma IV.4 in \cite{ye2018optimal}. For clarity we include a proof in Appendix~\ref{appendixreduction}. A special case of such a lemma was also proved in \cite{KOV14}. This allows us to prove the following corollary expressing randomizer outputs as mixtures of base distributions.

\begin{corollary}\label{decomposition}
Given any $\eps_0$-DP local randomizer $\mathcal{R}:\mathcal{D}\to\mathcal{S}$, and any $n+1$ inputs $x_1^0, x_1^1, x_2, \cdots, x_n \in\mathcal{D}$, if $\mathcal{S}$ is finite then there exists $p\in[0,1/(e^{\eps_0}+1)]$ and distributions $\mathcal{Q}_1^0$, $\mathcal{Q}_1^1$, $\mathcal{Q}_1, \mathcal{Q}_2, \cdots, \mathcal{Q}_n$ such that
\begin{align}
\nonumber\mathcal{R}(x_1^0) &= e^{\eps}p\mathcal{Q}_1^0+p\mathcal{Q}_1^1+(1-p-e^{\eps_0}p)\mathcal{Q}_1,\\
\nonumber\mathcal{R}(x_1^1) &= p\mathcal{Q}_1^0+e^{\eps_0}p\mathcal{Q}_1^1+(1-p-e^{\eps_0}p)\mathcal{Q}_1\\
\forall i\in[2,n], \;\;\mathcal{R}(x_i) &= p\mathcal{Q}_1^0+p\mathcal{Q}_1^1+(1-2p)\mathcal{Q}_i. \label{eq:decomposition}
\end{align}
\end{corollary}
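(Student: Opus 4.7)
The approach is to first invoke Lemma~\ref{sizetwopdf} to replace $\mathcal{R}$ with a canonical randomizer $\mathcal{R}':\mathcal{D}\to\mathcal{Z}$ together with a post-processing $\Phi$ satisfying $\Phi\circ\mathcal{R}'=\mathcal{R}$, where for each $z\in\mathcal{Z}$ there is $p_z\in[0,1/(e^{\eps_0}+1)]$ with $\Pr(\mathcal{R}'(x)=z)\in\{p_z,e^{\eps_0}p_z\}$ for every $x\in\mathcal{D}$. Since post-processing commutes with taking mixtures, it suffices to build the decomposition for $\mathcal{R}'$ over $\mathcal{Z}$ and then push each component distribution through $\Phi$.

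For $b\in\{0,1\}$ let $L_b:=\{z\in\mathcal{Z}:\Pr(\mathcal{R}'(x_1^b)=z)=e^{\eps_0}p_z\}$, and set
\[p:=\sum_{z\in L_0\setminus L_1}p_z.\]
The first key step is the identity $p=\sum_{z\in L_1\setminus L_0}p_z$, obtained by equating $\sum_z\Pr(\mathcal{R}'(x_1^0)=z)=1=\sum_z\Pr(\mathcal{R}'(x_1^1)=z)$: the two sums agree termwise on $(L_0\cap L_1)\cup(L_0\cup L_1)^c$, so the contributions of $L_0\setminus L_1$ (which differ by $(e^{\eps_0}-1)p_z$) and $L_1\setminus L_0$ must balance. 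The bound $p\le 1/(e^{\eps_0}+1)$ then follows from $1\ge e^{\eps_0}p+p$.

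Next I would define $\mathcal{Q}_1^0$ supported on $L_0\setminus L_1$ with $\mathcal{Q}_1^0(z)=p_z/p$; $\mathcal{Q}_1^1$ supported on $L_1\setminus L_0$ with $\mathcal{Q}_1^1(z)=p_z/p$; and $\mathcal{Q}_1(z):=\Pr(\mathcal{R}'(x_1^0)=z)/(1-(1+e^{\eps_0})p)$ on the remaining outputs and zero elsewhere (on these outputs the two distributions agree). Each is a probability distribution by the balance identity of the previous paragraph, and a short four-case check on the partition $L_0\cap L_1,\ L_0\setminus L_1,\ L_1\setminus L_0,\ (L_0\cup L_1)^c$ verifies the two displayed equations for $\mathcal{R}'(x_1^0)$ and $\mathcal{R}'(x_1^1)$. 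For $i\ge 2$ set $\mathcal{Q}_i:=(\mathcal{R}'(x_i)-p\,\mathcal{Q}_1^0-p\,\mathcal{Q}_1^1)/(1-2p)$; the total mass is $(1-2p)/(1-2p)=1$ by construction.

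The main obstacle is verifying that each $\mathcal{Q}_i$ is actually non-negative, and this is where Lemma~\ref{sizetwopdf} is used crucially a second time. At any $z$ with $\mathcal{Q}_1^0(z)+\mathcal{Q}_1^1(z)>0$ one has $p\,\mathcal{Q}_1^0(z)+p\,\mathcal{Q}_1^1(z)=p_z$, and the lemma's guarantee applied to $x_i$ gives $\Pr(\mathcal{R}'(x_i)=z)\in\{p_z,e^{\eps_0}p_z\}\ge p_z$; at other $z$ the subtracted mass is zero. The two degenerate corners are handled separately: if $p=0$ then $L_0=L_1$ forces $\mathcal{R}'(x_1^0)=\mathcal{R}'(x_1^1)$ and the choice of $\mathcal{Q}_1^0,\mathcal{Q}_1^1$ is immaterial, while if $p=1/(e^{\eps_0}+1)$ the weight on $\mathcal{Q}_1$ is zero and the balance identity forces its support to be empty as well. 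Pushing every component distribution through $\Phi$ yields the stated decomposition for $\mathcal{R}$.
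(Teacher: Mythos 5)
Your proof is correct and follows essentially the same approach as the paper: you partition $\mathcal{Z}$ (via Lemma~\ref{sizetwopdf}) into the set where $\mathcal{R}'(x_1^0)$ takes the high value and $\mathcal{R}'(x_1^1)$ the low, its mirror, and the set where they agree, then build $\mathcal{Q}_1^0,\mathcal{Q}_1^1,\mathcal{Q}_1$ as the normalized restrictions and push through $\Phi$. Your $L_0\setminus L_1$ and $L_1\setminus L_0$ coincide with the paper's $L$ and $U$, and you supply a few details the paper leaves implicit (the mass-balance identity, the non-negativity of $\mathcal{Q}_i$, and the degenerate corners $p=0$ and $p=1/(e^{\eps_0}+1)$).
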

\begin{proof}
Note that restricted to inputs $\{x_1^0,x_1^1,x_2,\cdots,x_n\}$, $\mathcal{R}$ satisfies the constraints of Lemma~\ref{sizetwopdf} so there exists an $\eps_0$-DP local randomizer $\mathcal{R}':\mathcal{D}\to \mathcal{Z}$, and post-processing function $\Phi$ such that for $z\in\mathcal{Z}$, there exists $p_z\in[0,1/(e^{\eps_0}+1)]$ such that for all $x\in\{x_1^0,x_1^1,x_2,\cdots,x_n\}$ $\Phi(\mathcal{R'}(x))=\mathcal{R}(x)$, and $\Pr(\mathcal{R}'(x)=z)\in\{e^{\eps_0}p_z,p_z\}$.

Let $L=\{z\in\mathcal{Z}\:|\; \Pr(\mathcal{R'}(x_1^0)=z)=e^{\eps_0}p_z
\text{ and } \Pr(\mathcal{R'}(x_1^1)=z)=p_z\}$ and $U=\{z\in\mathcal{S}\:|\; \Pr(\mathcal{R'}(x_1^0)=z)=p_z \text{ and } \Pr(\mathcal{R'}(x_1^1)=z)=e^{\eps_0}p_z\}$.
Let $M=\mathcal{Z}\backslash (L\cup U)$ and $p=\sum_{z\in L} p_z = \sum_{z\in U} p_z.$ Note that conditioned on the output lying in $L$, the distributions $\mathcal{R'}(x_1^0)$ and $\mathcal{R'}(x_1^1)$ are the same. Let $\mathcal{W}_1^0=\mathcal{R'}(x_1^0)|_L=\mathcal{R'}(x_1^1)|_L$. Similarly, let $\mathcal{W}_1^1=\mathcal{R'}(x_1^0)|_U=\mathcal{R'}(x_1^0)|_U$ and $\mathcal{W}_1 = \mathcal{R'}(x_1^0)|_{M}=\mathcal{R'}(x_1^1)|_{M}$.Then,
\begin{align*}
\mathcal{R}'(x_1^0) &= e^{\eps_0}p \mathcal{W}_1^0+p \mathcal{W}_1^1 + (1-e^{\eps_0}p-p)\mathcal{W}_1 \\
\mathcal{R}'(x_1^1) &= p \mathcal{W}_1^0+e^{\eps_0}p \mathcal{W}_1^1 + (1-e^{\eps_0}p-p)\mathcal{W}_1
\end{align*}
Further, for all $x_i\in\{x_2,\cdots, x_n\}$, $\mathcal{R}'(x_i)\ge p\mathcal{W}_1^0+p\mathcal{W}_1^1$, so there exists $\mathcal{W}_i$ such that \[\mathcal{R}'(x_i)=p \mathcal{W}_1^0+p \mathcal{W}_1^1 + (1-2p)\mathcal{W}_i.\]
Letting $\mathcal{Q}_1^0=\Phi(\mathcal{W}_1^0)$, $\mathcal{Q}_1^1=\Phi(\mathcal{W}_1^1)$, $\mathcal{Q}_1=\Phi(\mathcal{W}_1)$ and for all $i\in\{2,\cdots,n\}$, $\mathcal{Q}_i=\Phi(\mathcal{W}_i)$, we are done. Since we must have $p+e^{\eps_0}p\le 1$, $p\in[0,1/(e^{\eps_0}+1)].$
\end{proof}
We will use $\decompprob{\mathcal{R}}$ to denote the smallest value of $p$ for which $\mathcal{R}$ can be decomposed as in eqns~\eqref{eq:decomposition}, when $X_0$ and $X_1$ are clear from context. Intuitively, the smaller $\decompprob{\mathcal{R}}$ is, the closer $\mathcal{R}(x_1^0)$ and $\mathcal{R}(x_1^1)$ are, and the more amplification we'll see when using this local randomizer. We'll make this formal in Lemma~\ref{maxdivergence}.

\begin{lemma}\label{shuffletobinoms}
\textcolor{magenta}{Errata: there is an error in the proof of this lemma when $p<1/(e^{\eps_0}+1)$. An alternative version appears in Lemma~\ref{shuffletobinoms2new}.} For a domain $\mathcal{D}$, let $\Aldp[i]:\out[1]\times\cdots\times\out[i-1]\times\mathcal{D}\to\out[i]$ for $i\in[n]$ (where $\out[i]$ is the range space of $\Aldp[i]$, and $\out[i]$ is finite for all $i$) be a sequence of algorithms such that $\Aldp[i](z_{1:i-1}, \cdot)$ is an $\eps_0$-DP local randomizer for all values of auxiliary inputs $z_{1:i-1}\in\out[1]\times\cdots\times\out[i-1]$. Let $\shuffler:\mathcal{D}^n\to \out[1]\times\cdots\times \out[n]$ be the algorithm that given a dataset $x_{1:n}\in\mathcal{D}^n$, samples a permutation $\pi$ uniformly at random, then sequentially computes $z_i=\Aldp[i](z_{1:i-1}, x_{\pi(i)})$ for $i\in[n]$ and outputs $z_{1:n}$. Let $X_0$ and $X_1$ be two arbitrary neighboring datasets in $\D^n$ and $p^*\in[0,1/(e^{\eps_0}+1)]$ be such that with respect to $X_0$ and $X_1$, $\decompprob{\Aldp[i](z_{1:i-1})}\le p^*$ for all $i\in[n]$ and $z_{1:i-1}$. Then there exists a post-processing function $f$ such that $\shuffler(X_0)$ is distributed identically to $f(\threedistone{\eps_0}{p^*})$ and $\shuffler(X_1)$ is distributed identically to  $f(\threedisttwo{\eps_0}{p^*})$.
\end{lemma}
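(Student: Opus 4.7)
The plan is to show that the full output distribution of $\shuffler(X_b)$ can be generated by first sampling the counts $(n_0,n_1)$ from $\threedistb{\eps_0}{p^*}$ and then applying a post-processing function $f$ (the same $f$ for both $b=0$ and $b=1$). The counts record how many users' contributions fell into the ``type-0'' and ``type-1'' clone components of the decomposition from Corollary~\ref{decomposition}; the remaining randomness should be reconstructed by exploiting the exchangeability created by the uniform shuffle.

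First I would apply Corollary~\ref{decomposition} to each local randomizer $\Aldp[i](z_{1:i-1},\cdot)$ with respect to the neighbors $X_0,X_1$, which for every history $z_{1:i-1}$ gives a decomposition with some parameter $p_i=\decompprob{\Aldp[i](z_{1:i-1})}\le p^*$ of the form
\begin{align*}
\Aldp[i](z_{1:i-1},x_1^0) &= e^{\eps_0} p_i\,\mathcal{Q}_{i}^0 + p_i\,\mathcal{Q}_{i}^1 + (1-(e^{\eps_0}+1)p_i)\,\mathcal{Q}_{i,1},\\
\Aldp[i](z_{1:i-1},x_1^1) &= p_i\,\mathcal{Q}_{i}^0 + e^{\eps_0} p_i\,\mathcal{Q}_{i}^1 + (1-(e^{\eps_0}+1)p_i)\,\mathcal{Q}_{i,1},\\
\Aldp[i](z_{1:i-1},x_u) &= p_i\,\mathcal{Q}_{i}^0 + p_i\,\mathcal{Q}_{i}^1 + (1-2p_i)\,\mathcal{Q}_{i,u},\quad u\ne 1.
\end{align*}
The next step is to ``boost'' each such decomposition so that it uses the common parameter $p^*$. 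The natural attempt is to fix an auxiliary distribution $\mathcal{Q}^c$ and replace $\mathcal{Q}_{i}^0$ by $\tfrac{p_i}{p^*}\mathcal{Q}_{i}^0+(1-\tfrac{p_i}{p^*})\mathcal{Q}^c$, and similarly for $\mathcal{Q}_{i}^1$: this preserves the difference between the two neighbor decompositions while raising the total type-0/type-1 mass to $p^*$, and forces the ``other'' components to absorb a correction proportional to $\mathcal{Q}^c$.

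With a uniform decomposition in hand, $\shuffler(X_b)$ becomes equivalent to the following three-stage process: (i) each user $u$ independently samples a mark $m_u\in\{0,1,2\}$ from $Y_{1,p^*}^b$ if $u=1$ and from $Y_{p^*}$ otherwise; (ii) a uniform random permutation $\pi$ places users into slots; (iii) for each slot $i$, the output $z_i$ is drawn from $\mathcal{Q}_{i}^0$, $\mathcal{Q}_{i}^1$, or the ``other'' component associated to user $\pi(i)$, according to $m_{\pi(i)}\in\{0,1,2\}$, all this respecting the adaptive ordering of histories. The counts $(n_0,n_1)$ are by construction distributed as $\threedistb{\eps_0}{p^*}$. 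The post-processing $f$ then takes $(n_0,n_1)$, samples a uniformly random partition of $[n]$ into sets of sizes $n_0,n_1,n-n_0-n_1$, samples an independent uniform permutation of users into slots, and finally generates the outputs using the boosted decomposition components. By symmetry of the uniform shuffle, the marks given the counts are exchangeable, which is the mechanism that should identify the law of this construction with $\shuffler(X_b)$.

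The hard part will be the boosting step. Matching mass forces the new ``other'' components of non-special users to be proportional to $(1-2p_i)\mathcal{Q}_{i,u}-2(p^*-p_i)\mathcal{Q}^c$, so for every $u\ne 1$, every $i$, and every history, non-negativity demands $(1-2p_i)\mathcal{Q}_{i,u}\ge 2(p^*-p_i)\mathcal{Q}^c$. When the $\mathcal{Q}_{i,u}$ are supported on (nearly) disjoint subsets of the output space, no single $\mathcal{Q}^c$ can be dominated by all of them simultaneously and the boost cannot be carried out in full generality. I expect this to be the main technical obstacle; a workaround is likely either to restrict the class of local randomizers (so that a common sub-measure exists) or to replace $p^*$ by a per-randomizer quantity, giving a bound that depends on the specific sequence $\{\Aldp[i]\}$ rather than only on $\eps_0$.
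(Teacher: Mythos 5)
Your proposal takes a genuinely different route from the paper's, and the obstruction you spend most of your effort on is not where the paper's proof actually breaks. The paper does not ``boost'' the per-randomizer parameter $p_i$ up to $p^*$ by mixing in a common sub-measure $\mathcal{Q}^c$. Instead it goes the other direction: it first draws all $n$ marks at the common level $p^*$ (one sample from $Y_{1,p^*}^b$ and $n-1$ i.i.d.\ samples from $Y_{p^*}$), shuffles them, and only inside the post-processing function \emph{downsamples} the mark in slot $t$ via $y_t' = g(y_t, \decompprob{\Aldp[t](z_{1:t-1})})$, where $g(\cdot,p)$ keeps a $0$ or $1$ with probability $p/p^*$ and otherwise converts it to a $2$. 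Converting a $0$ or $1$ into a $2$ never requires a common reference distribution, so the non-negativity constraint $(1-2p_i)\mathcal{Q}_{i,u}\ge 2(p^*-p_i)\mathcal{Q}^c$ that you correctly identify as unsatisfiable simply never appears in the paper's route. The part you flag as ``the hard part'' is a non-issue there.

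What both your sketch and the paper's proof gloss over is the same Bayesian-conditioning subtlety, and it is what actually kills the lemma as stated. You write that ``by symmetry of the uniform shuffle, the marks given the counts are exchangeable,'' but that only determines which \emph{slots} are $0$/$1$/$2$ slots; it does not determine which \emph{user} occupies a given $2$-slot. User~$1$'s mark is drawn from $Y_{1,p^*}^b$ while all other users' marks are i.i.d.\ $Y_{p^*}$, so conditioned on the observed counts $(n_0,n_1)$, the posterior probability that user~$1$'s mark was a~$2$ has a numerator that is the same for $b=0$ and $b=1$ (since $\Pr[Y_{1,p^*}^0=2]=\Pr[Y_{1,p^*}^1=2]$) but a denominator $\Pr[(n_0,n_1)\mid b]$ that genuinely depends on $b$ whenever $n_0\ne n_1$. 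Hence no single post-processing function $f$ of $(n_0,n_1)$ can reproduce $\shuffler(X_b)$ for both $b$ when $p^*<1/(e^{\eps_0}+1)$; this is precisely the error the errata identifies in Algorithm~\ref{postprocessing}, where the bias $\alpha$ is computed as a function of $n_K$ alone. When $p^*=1/(e^{\eps_0}+1)$ user~$1$ never outputs $2$ and the issue disappears, which is why Theorem~\ref{newtheorem31} survives. So your final diagnosis---that the statement must be weakened to a restricted class of randomizers or to a per-randomizer bound---is exactly right and matches the errata (Lemma~\ref{shuffletobinoms2new} and Theorem~\ref{newtheorem31}), but the true reason is this conditioning failure, not the coupling obstruction you proposed; that obstruction would also have to be overcome in your route, but even if you solved it your argument would still hit the same wall.
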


\begin{proof}

\begin{algorithm}[!t]
  \textbf{Input:} $x^0_1,x^1_1,x_2,\ldots, x_n$; $(n_0, n_1)\in\mathbb{N}^2$ \\
  Sample $y \in \zot^n$ to be a random permutation of a vector with $n_0$ 0s, and $n_1$ 1s and $n-n_0-n_1$ 2s.\\
  $J := \emptyset$, \\
  Let $K = \{i\in[n]\;|\; y_i=2\}$, $n_K=|K|$\\
  Let $\alpha=\frac{2p n_K(1-e^{\eps_0}p-p)}{2pn_K(1-e^{\eps_0}p-p)+(e^{\eps_0}p+p)(1-2p)(n-n_K)}$\label{alphaformula}\\
  Let $\chi=\Ber(\alpha)$\\
  \If{$\chi=1$}{
  Let $i$ be randomly and uniformly chosen from $K$\\
  $j_i=1$\\
  $K=K\backslash\{i\}$\\}
  \For{$i\in K$}{
  Let $j_i$ be a randomly and uniformly chosen element of $[2:n]\setminus J$\\
  $J := J \cup \{j_i\}$
  }
  \For{$t\in[n]$}{
  $y_t' = g(y_t, \decompprob{\Aldp[t](z_{1:t-1})})$\\
  Sample $z_t$ from $ \begin{cases} {\mathcal{Q}_1^0}^{(t)}(z_{1:t-1}) & \text{if } y_i'=0; \\
                                    {\mathcal{Q}_1^1}^{(t)}(z_{1:t-1}) & \text{if } y_i'=1;  \\
                                    \mathcal{Q}_{j_t}^{(t)}(z_{1:t-1}) & \text{if } y_i'=2. \end{cases}$\\
  }
  \textbf{return} $z_1, \ldots, z_n$
  \caption{Post-processing function, $f$} \label{postprocessing}
\end{algorithm}

Let $X_0=\{x_1^0,x_2,\cdots,x_n\}$ and $X_1=\{x_1^1,x_2,\cdots,x_n\}$ be two neighbouring datasets in $\mathcal{D}^n$.
As in the proof of \cite[Lemma 3.3]{FeldmanMT:2020}, the proof relies on a decomposition of the algorithm that shuffles the data and then applies the local randomizers, to an algorithm in which each client first reports which component of the mixture it will sample from, then applies shuffling to these reports and finally applies a post-processing step in which randomizers are applied according to the shuffled mixture component indices. The key difference between the proofs is that the behaviour of user 1 is more complicated, resulting in a more complicated post-processing function.

A description of the post-processing function is given in Algorithm~\ref{postprocessing}. We claim that for $b\in\{0,1\}$, $f(\threedistb{\eps_0}{p^*})\disteq \shuffler(X_b)$. The key observation is that by assumption we have decompositions such that for all $t\in[n]$, $z_{1:t-1}$, there exists $p_t \stackrel{de f}{=}\decompprob{\Aldp[t](z_{1:t-1})}\in[0,p^*]$ such that:
\begin{align*}
\forall b\in\{0,1\}, \;\;\mathcal{R}^{(t)}(z_{1:t-1}, x_1^b) &= e^{\eps}p_t{\mathcal{Q}_1^b}^{(t)}(z_{1:t-1})+p_t{\mathcal{Q}_1^{1-b}}^{(t)}(z_{1:i-1}) + (1-p_t-e^{\eps}p_t)\mathcal{Q}_1^{(t)}(z_{1:i-1}),
\end{align*}
\begin{align*}\forall i\in[2,n], \;\;\mathcal{R}^{(t)}(z_{1:t-1}, x_i) &= p_t{\mathcal{Q}_1^0}^{(t)}(z_{1:i-1})+p_t{\mathcal{Q}_1^1}^{(t)}(z_{1:i-1})
+(1-2p_t)\mathcal{Q}_i^{(t)}(z_{1:i-1}).
\end{align*}

Formally, define random variables $Y_p$, ${Y_1^0}_p$ and ${Y_1^1}_p$ as in eqn~\eqref{individualreports}. 
Given a dataset $X_b$ for $b \in \zo$ we generate a sample from $\threedistb{\eps_0}{p^*}$ as follows. Client number one (holding the first element of the dataset) reports a sample from ${Y_1^b}_{p^*}$. Clients $2,\ldots,n$ each report an independent sample from $Y_{p^*}$. We then count the total number of 0s and 1s. Note that a vector containing a permutation of the users responses contains no more information than simply the number of 0s and 1s, so we can consider these two representations as equivalent. This is why in Algorithm~\ref{postprocessing}, we can immediately turn the sample $(n_0,n_1)$ into a vector $y$ of 0s, 1s and 2s.

The mixture coefficients of the random variables $\Aldp[i]$ do not necessarily match those of ${Y_1^0}_{p^*}, {Y_1^1}_{p^*}$ and $Y_{p^*}$. However, for any $p<p^*$ we can define a post-processing function   $g(\cdot,p)$ such that $g({Y_1^0}_{p^*}, p)= {Y_1^0}_{p}, g({Y_1^1}_{p^*}, p)={Y_1^0}_{p}$ and $g(Y_{p^*}, p)=Y_{p}$. This function is given by $g(0,p)=0$ with probability $p/p^*$ and 2 otherwise, $g(1,p)=1$ with probability $p/p^*$ and 2 otherwise, $g(2,p)=2.$

Let $y\in\zot^n$ be a permutation of the local reports given by $Y_{1,p^*}^b$ and $n-1$ copies of $Y_{p^*}$; recall that this is equivalent to a sample from $\threedistb{\eps_0}{p^*}$. Given the hidden permutation $\pi$, we can generate a sample from $\shuffler(X_b)$ by sequentially transforming $y_t'=g(y_t, \decompprob{\Aldp[t](z_{1:t-1})})$ to obtain the correct mixture components, then sampling from the corresponding mixture component. The difficulty then lies in the fact that conditioned on a particular instantiation $y=v$, the permutation $\pi|_{y=v}$ is not independent of $b$.

The first thing to note is that if $v_t=0$ or $1$, then the corresponding mixture component ${\mathcal{Q}_1^0}^{(t)}(z_{1:t-1})$ or ${\mathcal{Q}_1^1}^{(t)}(z_{1:t-1})$, is independent of $\pi$. Therefore, in order to do the appropriate post-processing, it suffices to know the permutation $\pi$ restricted to the set of users who sampled $2$, $K=\pi(\{i: y_i = 2\})$. The set $K$ of users who select 2 is independent of $b$ since $Y_1^0$ and $Y_1^1$ have the same probability of sampling $2$. The probability of being included in $K$ is identical for each $i\in[2,\cdots,n]$, and slightly smaller for the first user. Given $n_K=|K|$, the probability of user 1 being included in $K$ is given by $\alpha = \mathbb{P}(Y_{1,p^*}^b=2\;|\; |K|=n_K) $. A closed form formula for $\alpha$ is given in Algorithm~\ref{postprocessing},
\begin{align*}
\alpha &= \mathbb{P}(Y_{1,p^*}^b=2\;|\; |K|=n_K)\\
&=\frac{\binom{n-1}{n_K-1}(1-e^{\eps_0}p-p)(1-2p)^{n_K-1}(2p)^{n-n_K}}{\binom{n-1}{n_k-1}(1-e^{\eps_0}p-p)(1-2p)^{n_K-1}(2p)^{n-n_K}+\binom{n-1}{n_K}(e^{\eps_0}p+p)(1-2p)^{n_K}(2p)^{n-1-n_K}}\\
    &= \frac{2p n_K(1-e^{\eps_0}p-p)}{2pn_K(1-e^{\eps_0}p-p)+(e^{\eps_0}p+p)(1-2p)(n-n_K)}
\end{align*}
where the second equality follows since there are $\binom{n-1}{n_K-1}$ possible choices for $K$ that include 1, and each has probability  $(1-e^{\eps_0}p-p)(1-2p)^{n_K-1}(2p)^{n-n_K}$, and there are $\binom{n-1}{n_K}$ choices for $K$ that do not include 1, and each has probability $(e^{\eps_0}p+p)(1-2p)^{n_K}(2p)^{n-1-n_K}$.
Any ordering of the users in $K$ is equally likely, so we can choose a random assignment. Now that we have an assignment for $K$, we can sample from the correct mixture components as desired.
\end{proof}

Theorem~\ref{mainanalyticalthm} is now a direct corollary of Lemma~\ref{shuffletobinoms} and the data processing inequality since Corollary~\ref{decomposition} implies we can always take $p^*=1/(e^{\eps_0}+1)$.

Lemma~\ref{shuffletobinoms} actually indicates that we can obtain better bounds for specific classes of local randomizers. An example of a family of local randomizers with a smaller $p^*$ is $k$- randomized response. In Appendix~\ref{kRR}, we show that we can obtain tight privacy amplification by shuffling bounds for $\kRR$ directly from Lemma~\ref{shuffletobinoms}.

\section{Asymptotically Optimal Bound for R\'enyi DP}
\label{sec:rdp}

Our bound relies on a general way to convert bounds on approximate DP for a range of $\delta$'s to a bound on RDP parameters for a certain range of moments $\alpha$. We use this general conversion together with the bounds on privacy amplification by shuffling given in Theorem \ref{thm:shuffling4adp} to derive our asymptotically optimal bound for RDP. We note that an asymptotically optimal bound is also implied by our conversion applied to the approximate DP bounds on privacy amplification by shuffling in \citep{FeldmanMT:2020}.

We start by stating our general conversion from approximate DP bounds to RDP. Previously such a conversion was proved only for pure differential privacy. Specifically, it is known that $\eps$-DP implies $(\alpha \eps^2/2, \alpha)$-RDP \citep{mironov2017renyi,Bun:2016}. We prove the claim under a condition on the tail of the privacy loss random variable that is essentially equivalent to approximate DP but is easier to work with.

\begin{theorem}
\label{thm:adp2rdp}
Let $P$ and $Q$ be probability distributions over the same domain $X$ such that for some $\sigma>0, \delta_{\min} \geq 0$ and $\eps_0>0$, for all $\delta > \delta_{\min}$, we have that:
\[\Pr_{x\sim Q}\left[  \left| \ln \left(\frac{P(x)}{Q(x)}\right)  \right| \geq \sqrt{\ln(1/\delta)} \cdot \sigma  \right] \leq 2\delta \ \]
and for all $x$, $\left|\ln \left(\frac{P(x)}{Q(x)}\right) \right| \leq \eps_0$.  Then, for all $\alpha \geq 1$,
\[ \E_{x\sim Q}\left[ \left(\frac{P(x)}{Q(x)}\right)^\alpha \right] \leq  e^{2\alpha^2\sigma^2} + 4 \delta_{\min} \cdot e^{\alpha \eps_0} .\]
In particular, if $\delta_{\min} =0$  then, for all $\alpha > 1$, $D^\alpha(P \| Q) \leq \frac{2\alpha^2 \sigma^2}{\alpha - 1} $. Further,
if $\delta_{\min} \leq e^{-\alpha \eps_0} \cdot \alpha^2\sigma^2 /4$ then, for all $\alpha > 1$,
\[D^\alpha(P \| Q) \leq \frac{3\alpha^2 \sigma^2}{\alpha - 1} .\]
\end{theorem}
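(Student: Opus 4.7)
The plan is to bound $\E_Q[e^{\alpha L}]$, where $L = \ln(P(x)/Q(x))$, by decomposing on whether $|L|$ is smaller or larger than a threshold $T$, chosen to match the leading term $e^{2\alpha^2\sigma^2}$ of the target. I would write
\[
\E_Q[e^{\alpha L}] \le \E_Q[e^{\alpha L};\, |L|\le T] + e^{\alpha\eps_0}\,\Pr_Q[|L| > T],
\]
with $T = 2\alpha\sigma^2$, so that the naive bulk bound $\E_Q[e^{\alpha L};|L|\le T] \le e^{\alpha T} = e^{2\alpha^2\sigma^2}$ supplies the first term of the target. The tail probability $\Pr_Q[|L| > T]$ is then bounded via the hypothesis: substituting $\delta = e^{-T^2/\sigma^2} = e^{-4\alpha^2\sigma^2}$ into the hypothesis gives $\Pr_Q[|L| > T] \le 2e^{-4\alpha^2\sigma^2}$ whenever $\delta > \delta_{\min}$, while a limiting argument as $\delta \downarrow \delta_{\min}$ gives $\Pr_Q[|L| > T] \le 2\delta_{\min}$ in the remaining range.

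Combining the two pieces yields $\E_Q[e^{\alpha L}] \le e^{2\alpha^2\sigma^2} + 2\max\bigl(e^{-4\alpha^2\sigma^2},\, \delta_{\min}\bigr)\, e^{\alpha \eps_0}$, and when $\delta_{\min} \ge e^{-4\alpha^2\sigma^2}/2$ this is at most $e^{2\alpha^2\sigma^2} + 4\delta_{\min}e^{\alpha\eps_0}$, yielding the desired inequality. In the regime $\delta_{\min} < e^{-4\alpha^2\sigma^2}/2$ (including the corollary case $\delta_{\min} = 0$), the naive bulk bound $e^{\alpha T}$ is too loose and the residual $2e^{-4\alpha^2\sigma^2}e^{\alpha\eps_0}$ is no longer absorbed by $4\delta_{\min}e^{\alpha\eps_0}$. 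To close the gap, I would replace the crude bulk bound by a sub-Gaussian MGF estimate: integrating the tail hypothesis yields moment bounds of the form $\E_Q[|L|^{2k}] \le 2 k!\,\sigma^{2k}$, and since $L$ is a log-likelihood ratio the identity $\E_Q[e^L] = 1$ forces $\E_Q[L]\le 0$ and (by Markov applied to $e^L$) $\Pr_Q[L>s] \le e^{-s}$ for $s\ge 0$. Plugging these refinements into the Taylor expansion of $\E_Q[e^{\alpha L}]$ gives $\E_Q[e^{\alpha L}] \le e^{c\alpha^2\sigma^2}$ for a constant $c\le 2$, which together with the tail estimate delivers the full inequality.

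The two corollaries then follow easily. For $\delta_{\min} = 0$, the main inequality becomes $\E_Q[(P/Q)^\alpha] \le e^{2\alpha^2\sigma^2}$, so $D^\alpha(P\|Q) = \ln \E_Q[(P/Q)^\alpha]/(\alpha-1) \le 2\alpha^2\sigma^2/(\alpha - 1)$. Under the stronger assumption $\delta_{\min} \le e^{-\alpha\eps_0}\alpha^2\sigma^2/4$, the additional term satisfies $4\delta_{\min}e^{\alpha\eps_0} \le \alpha^2\sigma^2$, and the inequality $e^{2\alpha^2\sigma^2} + \alpha^2\sigma^2 \le e^{2\alpha^2\sigma^2}\bigl(1 + \alpha^2\sigma^2 e^{-2\alpha^2\sigma^2}\bigr) \le e^{3\alpha^2\sigma^2}$ (via $1+x\le e^x$) gives $D^\alpha(P\|Q) \le 3\alpha^2\sigma^2/(\alpha - 1)$. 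The main obstacle is the refined bulk estimate needed in the small-$\delta_{\min}$ regime: pushing the constant in the sub-Gaussian MGF bound down to $2$ rather than a larger absolute constant requires carefully exploiting both the tail hypothesis and the log-likelihood constraint $\E_Q[e^L] = 1$.
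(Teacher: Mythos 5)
Your split of the MGF into a bulk term on $\{|L|\le T\}$ and a tail term on $\{|L|>T\}$, with $T=2\alpha\sigma^2$, is correct and self-contained in the regime $\delta_{\min}\ge e^{-4\alpha^2\sigma^2}/2$: the bulk contributes at most $e^{2\alpha^2\sigma^2}$, the tail contributes at most $2\max(e^{-4\alpha^2\sigma^2},\delta_{\min})\,e^{\alpha\eps_0}\le 4\delta_{\min}e^{\alpha\eps_0}$, and the claim follows. You also correctly observe that this collapses when $\delta_{\min}$ is small, in particular for $\delta_{\min}=0$, which is exactly the case needed for both corollary statements. But the repair you sketch has a genuine gap: the moment bound $\E_{x\sim Q}[|L|^{2k}]\le 2k!\,\sigma^{2k}$ does \emph{not} follow from the hypothesis. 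The sub-Gaussian tail $\Pr_{x\sim Q}[|L|\ge t]\le 2e^{-t^2/\sigma^2}$ is only assumed for $t<\sigma\sqrt{\ln(1/\delta_{\min})}$; for larger $t$ up to $\eps_0$ the best available control is $\Pr_{x\sim Q}[|L|\ge t]\le 2\delta_{\min}$, and integrating the tail gives $\E_{x\sim Q}[|L|^{2k}]\le 2k!\,\sigma^{2k}+2\delta_{\min}\eps_0^{2k}$. The extra $\delta_{\min}\eps_0^{2k}$ term is not automatically absorbable in a Taylor expansion, so the claimed constant $c\le 2$ in $\E_{x\sim Q}[e^{\alpha L}]\le e^{c\alpha^2\sigma^2}$ is never actually established.

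The paper avoids this by truncating the \emph{random variable} rather than the integration region: it sets $Z'=\max\{-\eps_{\max},\min\{\eps_{\max},Z\}\}$ with $\eps_{\max}=\sigma\sqrt{\ln(1/\delta_{\min})}$, notes that $Z'$ satisfies the sub-Gaussian tail $\Pr[|Z'|\ge t]\le 2e^{-t^2/\sigma^2}$ for \emph{every} $t>0$, and then invokes a ready-made sub-Gaussian MGF lemma (Lemma~\ref{lem:subg}) yielding $\E_{x\sim Q}[e^{\alpha Z'}]\le \alpha\E_{x\sim Q}[Z']+e^{2\alpha^2\sigma^2}$, which handles the nonzero first moment explicitly. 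The fact you also use, that $\E_{x\sim Q}[\ln(P/Q)]=-\mbox{KL}(Q\|P)\le 0$, then controls $\alpha\E_{x\sim Q}[Z']$ after a small correction for the truncation, and replacing $Z$ by $Z'$ inside the MGF costs at most $2\delta_{\min}e^{\alpha\eps_0}$ since the two differ only on an event of probability at most $2\delta_{\min}$. The key reordering your sketch is missing is to truncate the privacy loss first, so that the sub-Gaussian machinery applies unconditionally and the error terms are all explicitly proportional to $\delta_{\min}$, rather than trying to patch global moment bounds after the fact.
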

Our proof relies on well-known properties of sub-Gaussian random variables. Specifically, we use the following lemma from \citep{rigollet-course}\footnote{Lemma~1.5 in \citep{rigollet-course} appears to implicitly assume that $Z$ is zero-mean as it omits the first order term $\alpha \E[Z]$ in the statement.}.
\begin{lemma}[\citep{rigollet-course}(Chap.~2, Lemmas~1.4 and 1.5)]
\label{lem:subg}
Let $Z$ be a random variable such that for every $t>0$,
\[\Pr [| Z| > t] \leq 2 \exp\left(-\frac{t^2}{2\sigma^2}\right) .\]
Then, for any $\alpha > 0$, it holds
\[\E[\exp(\alpha Z)] \leq \alpha \E[Z] + e^{4\alpha^2\sigma^2} .\]
\end{lemma}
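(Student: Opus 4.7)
\textbf{Proof plan for Lemma~\ref{lem:subg}.} The strategy is to start from the layer-cake identity for the MGF, extract the linear term $\alpha\E[Z]$ exactly, and then apply the sub-Gaussian tail bound to a non-negative remainder. Specifically, for any real random variable $Z$ and $\alpha>0$, splitting the representation $\E[e^{\alpha Z}]=\int_0^\infty \Pr[e^{\alpha Z}>u]\,du$ at $u=1$ and changing variables $u=e^{\alpha t}$ gives the identity
\[
\E[e^{\alpha Z}] \;=\; 1 + \alpha \int_0^\infty e^{\alpha t}\Pr[Z>t]\,dt \;-\; \alpha\int_{-\infty}^0 e^{\alpha t}\Pr[Z<t]\,dt.
\]
I would then use the decomposition $e^{\alpha t}=1+(e^{\alpha t}-1)$ together with the standard tail identity $\E[Z] = \int_0^\infty \Pr[Z>t]\,dt - \int_{-\infty}^0 \Pr[Z<t]\,dt$ to peel off the $\alpha\E[Z]$ term cleanly. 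This yields
\[
\E[e^{\alpha Z}] - \alpha \E[Z] \;=\; 1 + \alpha\int_0^\infty (e^{\alpha t}-1)\Pr[Z>t]\,dt + \alpha\int_{-\infty}^0 (1-e^{\alpha t})\Pr[Z<t]\,dt,
\]
where crucially both integrands are now non-negative, so the hypothesized upper bound $\Pr[Z>t],\Pr[Z<-t]\le 2e^{-t^2/(2\sigma^2)}$ applies in the upper-bound direction.

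Next, substituting $s=-t$ in the negative-side integral folds both pieces together, giving
\[
\E[e^{\alpha Z}] - \alpha\E[Z] \;\le\; 1 + 2\alpha \int_0^\infty \bigl(e^{\alpha t}-e^{-\alpha t}\bigr) e^{-t^2/(2\sigma^2)}\,dt.
\]
I would then evaluate the Gaussian integral by completing the square: the exponent $\alpha t - t^2/(2\sigma^2)$ equals $-(t-\alpha\sigma^2)^2/(2\sigma^2) + \alpha^2\sigma^2/2$, and combining with the corresponding $-\alpha t$ term gives
\[
\int_0^\infty (e^{\alpha t}-e^{-\alpha t})e^{-t^2/(2\sigma^2)}\,dt \;=\; \sigma e^{\alpha^2\sigma^2/2}\int_{-\alpha\sigma}^{\alpha\sigma} e^{-u^2/2}\,du \;\le\; 2\alpha\sigma^2 e^{\alpha^2\sigma^2/2},
\]
where the last step uses the trivial pointwise bound $e^{-u^2/2}\le 1$ (rather than the Gaussian-tail $\sqrt{2\pi}$ bound). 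This substitution is essential: the $2\alpha\sigma$ factor keeps the remainder quadratic in $\alpha\sigma$ near the origin, which is what makes the target constant of $4$ in the exponent tight enough to work.

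The plan then concludes by showing $\E[e^{\alpha Z}] - \alpha\E[Z] \le 1 + 4\alpha^2\sigma^2 e^{\alpha^2\sigma^2/2} \le e^{4\alpha^2\sigma^2}$. The second inequality, with $y:=\alpha^2\sigma^2\ge 0$, amounts to the elementary claim $\phi(y):=e^{4y}-1-4y e^{y/2}\ge 0$. I would verify this by checking $\phi(0)=0$, $\phi'(0)=0$, and then showing $\phi''(y)=16e^{4y}-(4+y)e^{y/2}\ge 0$ for $y\ge 0$, which reduces to $16e^{7y/2}\ge 4+y$, evident since the exponential dominates the linear term starting from $y=0$. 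The main obstacle is purely a bookkeeping one: the factors of $2$ from the two-sided tail bound, the factor from combining the $\sinh$-like difference, and the $e^{\alpha^2\sigma^2/2}$ from completing the square all need to line up so that the constant $4$ in the stated bound is sufficient; the non-trivial step is recognising that using $e^{-u^2/2}\le 1$ (instead of integrating it to $\sqrt{2\pi}$) keeps the leading behaviour quadratic and makes the elementary calculus step at the end go through.
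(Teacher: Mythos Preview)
Your proof is correct. The paper does not give its own proof of this lemma; it cites Rigollet's lecture notes (with a footnote observing that Rigollet's statement implicitly assumes $\E[Z]=0$). The standard route in those notes is a two-step argument: first convert the tail hypothesis into moment bounds $\E[|Z|^k]\le k(2\sigma^2)^{k/2}\Gamma(k/2)$ via the layer-cake formula for $|Z|^k$ (Lemma~1.4 there), and then bound the Taylor expansion $\E[e^{\alpha Z}]=1+\alpha\E[Z]+\sum_{k\ge 2}\alpha^k\E[Z^k]/k!$ term-by-term using those moment estimates together with Stirling-type inequalities (Lemma~1.5). Your approach is genuinely different and more direct: you apply the layer-cake identity to $e^{\alpha Z}$ itself, peel off the linear term $\alpha\E[Z]$ exactly so that the remainder is manifestly nonnegative, and then evaluate a single Gaussian-type integral by completing the square. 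This bypasses the moment-series bookkeeping entirely, trading it for the elementary calculus verification $1+4ye^{y/2}\le e^{4y}$ at the end. Both routes yield the same constant $4$ in the exponent; yours has the advantage of making the role of the first-order term $\alpha\E[Z]$ completely transparent, which is exactly what the paper's footnote flags as missing in the cited source.
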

\begin{proof}[Proof of Theorem~\ref{thm:adp2rdp}]
Let $Z(x)$ denote $\ln \left(\frac{P(x)}{Q(x)}\right)$. Then, by the assumptions, for all $0<t<\sigma \sqrt{\ln(1/\delta_{\min})}$, we have that:
\[\Pr_{x\sim Q}\left[  \left| Z(x) \right| \geq t \right] \leq 2 \exp\left(-\frac{t^2}{\sigma^2}\right) \ .\]
In addition, for all $x \in X$, $|Z(x)| \leq \eps_0$.

For $\eps_{\max} = \sigma \sqrt{\ln(1/\delta_{\min})}$, we denote by $Z'(x)$, $Z(x)$ truncated to the interval $[-\eps_{\max},\eps_{\max}]$, that is  $Z'(x) = \max\{-\eps_{\max},\min\{\eps_{\max},Z(x)\}\}$.
We note that now, for all $t>0$, we have that:
\[\Pr_{x\sim Q}\left[  \left|Z'(x) \right| \geq t \right] \leq 2 \exp\left(-\frac{t^2}{\sigma^2}\right) \ .\]
By Lemma~\ref{lem:subg} we have that, for all $\alpha > 0$,
\equ{
\E_{x\sim Q}[\exp(\alpha Z'(x))] \leq \alpha \E_{x\sim Q}[Z'(x)] + e^{2\alpha^2\sigma^2} .
}

This allows us to bound $\E_{x\sim Q}\left[ \left(\frac{P(x)}{Q(x)}\right)^\alpha \right]$ as follows:

\alequn{\E_{x\sim Q}\left[ \left(\frac{P(x)}{Q(x)}\right)^\alpha \right] &= \E_{x\sim Q}\left[ \exp(\alpha Z(x)) \right] \\
& \leq  \E_{x\sim Q}\left[  \exp(\alpha Z'(x)) \right] +   \E_{x\sim Q}\left[ \left| \exp(\alpha Z'(x)) - \exp(\alpha Z(x))  \right| \right]   \\
& \leq  \E_{x\sim Q}\left[  \exp(\alpha Z'(x)) \right] + 2 \delta_{\min} \cdot e^{\alpha \eps_0} \\
& \leq  \alpha \E_{x\sim Q}[Z'(x)] + e^{2\alpha^2\sigma^2} + 2 \delta_{\min} \cdot e^{\alpha \eps_0} \\
& \leq  \alpha \E_{x\sim Q}[Z(x)] + \alpha\E_{x\sim Q}[|Z'(x)-Z(x)|]  +  e^{2\alpha^2\sigma^2} + 2 \delta_{\min} \cdot e^{\alpha \eps_0} \\
& \leq  \alpha \E_{x\sim Q}[Z(x)] + 2 \delta_{\min} \alpha\eps_0 + e^{2\alpha^2\sigma^2} + 2 \delta_{\min} \cdot e^{\alpha \eps_0} \\
& \leq  \alpha \E_{x\sim Q}\left[\ln \left(\frac{P(x)}{Q(x)}\right)\right] + e^{2\alpha^2\sigma^2} + 4 \delta_{\min} \cdot e^{\alpha \eps_0} \\
& =  -\alpha \mbox{KL}(Q\|P) + e^{2\alpha^2\sigma^2} + 4 \delta_{\min} \cdot e^{\alpha \eps_0} \\
& \leq  e^{2\alpha^2\sigma^2} + 4 \delta_{\min} \cdot e^{\alpha \eps_0} ,
}
where we used the fact that $\mbox{KL}(Q\|P)$ is always non-negative.
This shows the first part of the claim.
Now if $\delta_{\min} \leq e^{-\alpha \eps_0} \cdot \alpha^2\sigma^2 /4$ then
\[\E_{x\sim Q}\left[ \left(\frac{P(x)}{Q(x)}\right)^\alpha \right] \leq e^{2\alpha^2\sigma^2} + 4 \delta_{\min} \cdot e^{\alpha \eps_0} \leq e^{2\alpha^2\sigma^2} + \alpha^2\sigma^2 \leq e^{3\alpha^2\sigma^2} ,\] where we used that for $a\geq 0$ and any $b$, $e^a + b = e^a (1+ b/e^a) \leq e^a (1+ b) \leq e^a e^b = e^{a+b}$. 
By definition of $D^\alpha$, we now have that $D^\alpha(P \| Q) \leq \frac{3\alpha^2 \sigma^2}{\alpha - 1} $.
\end{proof}
We note that, for $\alpha \geq 2$, $\frac{\alpha}{\alpha - 1} \leq 2$. Thus, for $\alpha \geq 2$, $D^\alpha(P \| Q) \leq 6 \alpha \sigma^2$ and for $\alpha \in [1,2]$ we can simply upper-bound $D^\alpha(P \| Q) \leq D^2(P \| Q) \leq 12 \sigma^2$. 

The tail bound in the condition of Theorem~\ref{thm:adp2rdp}  is somewhat stronger than what is implied by $( \sqrt{\ln(1/\delta)} \cdot \sigma , 2\delta)$-DP. However, $D_{e^\eps}(P\|Q) \leq \delta$ and $D_{e^\eps}(Q\|P) \leq \delta$ imply that $\Pr[|\ln(\frac{P}{Q} )| \geq  2\eps] \leq 2 \delta/ (e^{2\eps} - e^\eps) \leq 2\delta/\epsilon$ \cite[Lemma~9]{CanonneKS20}. Thus the conversion can be applied to  $( \sqrt{\ln(1/\delta)} \cdot \sigma , 2\delta)$-DP with small adjustements in the final bounds. For our application we bypass the need to convert from the hockey-stick divergence to the tail bound by directly proving  a bound on the tail of privacy loss random variable in Lemma~\ref{lem:tail-bound} (which also implies Theorem~\ref{thm:shuffling4adp}). This gives us the following corollary (the proof can be found in Appendix~\ref{app:rdp-cor}).
\begin{corollary}\label{RDPbound} \textcolor{magenta}{Errata: the proof of this theorem was affected by the error in the proof of Lemma 3.5. However, the theorem still holds with a slightly different constant by using results from~\cite{Feldman:2022focs}. See Corollary~\ref{RDPbound-corrected} for the updated constants.}
 For any domain $\mathcal{D}$, let $\Aldp[i]:\out[1]\times\cdots\times\out[i-1]\times\mathcal{D}\to\out[i]$ for $i\in[n]$ (where $\out[i]$ is the range space of $\Aldp[i]$) be a sequence of algorithms such that $\Aldp[i](z_{1:i-1}, \cdot)$ is an $\eps_0$-DP local randomizer for all values of auxiliary inputs $z_{1:i-1}\in\out[1]\times\cdots\times\out[i-1]$.
 Let $\shuffler:\mathcal{D}^n\to\out[1]\times\cdots\times \out[n]$ be the algorithm that given a dataset $x_{1:n}\in\mathcal{D}^n$, samples a uniform random permutation $\pi$ over $[n]$, then sequentially computes $z_i=\Aldp[i](z_{1:i-1}, x_{\pi(i)})$ for $i\in[n]$ and outputs $z_{1:n}$.
 Then there exists a constant $c$ such that for any $\alpha < \frac{n}{16 \eps_0 e^\eps_0}$, $\shuffler$ is $(\alpha \rho,\alpha)$-RDP,  where
\begin{equation}\label{rhobound}
\rho \le c \cdot (1-e^{-\eps_0})^2\frac{e^{\eps_0}}{n} .
\end{equation}
In particular, for $\eps_0 \geq 1$,$\rho \le  \frac{c e^{\eps_0}}{n}$.
\end{corollary}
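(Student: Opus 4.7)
The plan is to apply Theorem~\ref{thm:adp2rdp} to the pair $P = \shuffler(X_0)$, $Q = \shuffler(X_1)$, using the tail bound on the privacy loss (the Lemma~\ref{lem:tail-bound} referred to in the preceding paragraph) that already powers the approximate-DP bound in Theorem~\ref{thm:shuffling4adp}. The task therefore reduces to reading off appropriate values of $\sigma$, $\eps_0$, and $\delta_{\min}$ from that lemma and verifying the side conditions of Theorem~\ref{thm:adp2rdp}.

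First, from the bound~\eqref{epsbound}, using $\ln(1+x) \leq x$ together with $(e^{\eps_0}-1)/\sqrt{e^{\eps_0}+1} = \Theta\bigl((1-e^{-\eps_0})\sqrt{e^{\eps_0}}\bigr)$, one may take
\[ \sigma^2 \;=\; O\bigl((1-e^{-\eps_0})^2 \cdot e^{\eps_0}/n\bigr), \]
absorbing the lower-order $(e^{\eps_0}-1)\cdot 4/n$ term into the leading $\sqrt{\ln(1/\delta)}\cdot\sigma$ term whenever $\ln(1/\delta) \gtrsim 1$. The worst-case bound $|Z| \leq \eps_0$ required by Theorem~\ref{thm:adp2rdp} holds trivially, since each local randomizer is $\eps_0$-DP and shuffling is post-processing.

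Next, I verify the ``tight'' condition $\delta_{\min} \leq e^{-\alpha \eps_0} \alpha^2 \sigma^2 / 4$ in the second half of Theorem~\ref{thm:adp2rdp}. The range of $\delta$ on which the tail bound is valid extends down to $\delta_{\min}$ of order $\exp(-n/(8(e^{\eps_0}+1)))$, coming from the hypothesis $\eps_0 \leq \ln(n/(8\ln(2/\delta))-1)$ of Theorem~\ref{thm:shuffling4adp}. Under the hypothesis $\alpha < n/(16 \eps_0 e^{\eps_0})$ of the corollary, we have $\alpha \eps_0 < n/(16 e^{\eps_0})$, so $e^{-\alpha\eps_0}$ dominates $\delta_{\min}$ with room to spare for the polynomial factor $\alpha^2 \sigma^2/4$. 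Therefore the tight conclusion of Theorem~\ref{thm:adp2rdp} applies and gives $D^\alpha(P\|Q) \leq 3\alpha^2\sigma^2/(\alpha-1)$, so that $\rho = D^\alpha(P\|Q)/\alpha \leq 6\sigma^2$ for $\alpha \geq 2$ (and for $\alpha \in [1,2]$ one uses the remark after Theorem~\ref{thm:adp2rdp} that $D^\alpha \leq D^2$). Substituting $\sigma^2$ yields the claimed $\rho \leq c(1-e^{-\eps_0})^2 e^{\eps_0}/n$, and $\eps_0 \geq 1$ gives the simplified form $\rho \leq c\, e^{\eps_0}/n$.

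The main obstacle is book-keeping: ensuring that the range of $\delta$ covered by the approximate-DP tail bound includes the $\delta_{\min}$ demanded by Theorem~\ref{thm:adp2rdp}, while simultaneously controlling the residual $4\delta_{\min} e^{\alpha\eps_0}$ term by $\alpha^2 \sigma^2$. Both reduce to choosing the constant $16$ in the hypothesis $\alpha < n/(16\eps_0 e^{\eps_0})$ sufficiently large to leave margin for the $\sqrt{\ln(1/\delta_{\min})}$ slack that appears when truncating the privacy loss at $\pm\sigma\sqrt{\ln(1/\delta_{\min})}$ within the proof of Theorem~\ref{thm:adp2rdp}.
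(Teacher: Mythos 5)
Your overall strategy matches the paper's: combine the sub-Gaussian-to-R\'enyi conversion (Theorem~\ref{thm:adp2rdp}) with the privacy-loss tail bound from Lemma~\ref{lem:tail-bound}, then read off $\sigma$ and $\delta_{\min}$ and verify the condition $\delta_{\min}\le e^{-\alpha\eps_0}\alpha^2\sigma^2/4$. However, there is a genuine gap in how you set this up.

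You apply Theorem~\ref{thm:adp2rdp} directly to the pair $P=\shuffler(X_0)$, $Q=\shuffler(X_1)$, and justify the required sub-Gaussian tail on $\ln(P/Q)$ by pointing to Lemma~\ref{lem:tail-bound}. But Lemma~\ref{lem:tail-bound} establishes that tail bound only for the \emph{multinomial} pair $\threedistone{\eps_0}{p}$, $\threedisttwo{\eps_0}{p}$ (or, in the appendix proof, the \cite{FeldmanMT:2020} pair), not for $\shuffler(X_0)$ and $\shuffler(X_1)$. These latter distributions are obtained from the multinomials by a post-processing channel $f$ (Lemma~\ref{shuffletobinoms}), and the hypothesis of Theorem~\ref{thm:adp2rdp} is a \emph{tail bound on the privacy-loss random variable}, which is strictly stronger than $(\eps,\delta)$-DP and does not simply transfer through post-processing; the paper flags exactly this point in the remark following Theorem~\ref{thm:adp2rdp} and bypasses it by proving the tail bound directly for the multinomials. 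The correct order of operations, and the one the paper uses, is: apply Theorem~\ref{thm:adp2rdp} to the multinomial pair to obtain $D^\alpha(P\|Q)=O(\alpha\sigma^2)$, then invoke the reduction (Theorem~\ref{mainanalyticalthm} or Theorem~\ref{mainanalyticalthm-previous}) with the data-processing inequality \emph{for R\'enyi divergence} to conclude $D^\alpha(\shuffler(X_0)\|\shuffler(X_1))\le D^\alpha(P\|Q)$. If you insist on working with $\shuffler(X_0),\shuffler(X_1)$ directly, you must first convert the hockey-stick bound back into a tail bound via something like \cite[Lemma~9]{CanonneKS20}, losing constants. As a secondary point, since you route through Theorem~\ref{thm:shuffling4adp} (which in turn rests on the flawed Lemma~\ref{shuffletobinoms}), the constant $16$ in $\alpha < n/(16\eps_0 e^{\eps_0})$ is only justified for the restricted class $\extremalset{\eps_0}$; the paper's errata proof replaces this reduction with \cite[Theorem 3.2]{FeldmanMT:2020} (applying Lemma~\ref{lem:tail-bound} at $p=1/(2e^{\eps_0})$ rather than $p=1/(e^{\eps_0}+1)$), at the cost of the constant becoming $32$.
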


\section{Improved Bounds for Specific Randomizers}\label{kRR}

In Section~\ref{generalamplificationreduction} we focused on a general amplification by shuffling bound that holds for all sets of local randomizers. Lemma~\ref{shuffletobinoms} actually indicates that we can obtain better bounds for specific classes of local randomizers. Namely, when $p$ as defined in Lemma~\ref{shuffletobinoms} is less than $1/(e^{\eps_0}+1)$.
The following lemma makes explicit the fact that the smaller $p^*$ is, the more amplification we can obtain. Given two random variables $X$ and $Y$, we will use the notation $X\disteq Y$ to denote that $X$ and $Y$ have the same distribution. That is if $X$ is a random variable over a finite space $\mathcal{S}$ and $Y$ is a random variable over a finite space $\mathcal{S}'$ then $X\disteq Y$ if there exists a invertible mapping $g$ such that for all $s\in\mathcal{S}$, $\Pr(X=s)=\Pr(Y=g(s))$.

\begin{lemma}\label{maxdivergence}
For any $p,p'\in[0,1]$ and $\eps>0$, if $p<p'$ then \[\dalpha{e^{\eps}}(\threedistone{\eps_0}{p}\|\threedisttwo{\eps_0}{p})\le \dalpha{e^{\eps}}(\threedistone{\eps_0}{p'}\|\threedisttwo{\eps_0}{p'})\]
\end{lemma}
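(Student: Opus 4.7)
The plan is to exhibit a randomized post-processing map $F$ (on the count space $(n_0,n_1)$) such that $F(\threedistb{\eps_0}{p'}) \disteq \threedistb{\eps_0}{p}$ for both $b\in\{0,1\}$ simultaneously, and then invoke the data processing inequality of Definition~\ref{postprocess} applied to $D_{e^\eps}$.

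I would first construct $F$ at the level of individual per-client samples. Define a single-sample randomized map $f\from\zot\to\zot$ by $f(0)=0$ with probability $p/p'$ and $f(0)=2$ otherwise; $f(1)=1$ with probability $p/p'$ and $f(1)=2$ otherwise; and $f(2)=2$ deterministically. A direct check against the definitions in~\eqref{individualreports} shows $f(Y_{p'})\disteq Y_p$, $f(Y_{1,p'}^0)\disteq Y_{1,p}^0$, and $f(Y_{1,p'}^1)\disteq Y_{1,p}^1$. For instance, $\Pr[f(Y_{1,p'}^0)=0] = e^{\eps_0}p'\cdot(p/p') = e^{\eps_0}p$, and the $\Pr[\cdot=2]$ entry is forced by the total probability. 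The crucial feature of $f$ is that it does not depend on $b$, and treats the two labels symmetrically.

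Next, I would lift $f$ from per-sample transformations to a map on the aggregate counts that actually parametrize $\threedistb{\eps_0}{p'}$. Because the count $(n_0,n_1)$ is a sufficient statistic that ignores which individual produced which symbol, and because $f$ is applied i.i.d.\ and treats all zeros (resp.\ all ones) identically, applying $f$ independently to each of the $n$ reports and then recounting is distributionally identical to the following operation $F$ on $(n_0,n_1)$: independently draw $N_0\sim\Bin(n_0, p/p')$ and $N_1\sim\Bin(n_1, p/p')$ and output $(N_0,N_1)$. By the first step, $F(\threedistb{\eps_0}{p'}) \disteq \threedistb{\eps_0}{p}$ for each $b\in\zo$, since generating one sample from $Y_{1,p'}^b$ plus $n{-}1$ samples from $Y_{p'}$ and then post-processing with $f$ coordinatewise yields one sample from $Y_{1,p}^b$ plus $n{-}1$ samples from $Y_p$, which matches the definition of $\threedistb{\eps_0}{p}$.

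The conclusion now follows from the data processing inequality:
\[
\dalpha{e^\eps}\!\left(\threedistone{\eps_0}{p}\,\big\|\,\threedisttwo{\eps_0}{p}\right)
= \dalpha{e^\eps}\!\left(F(\threedistone{\eps_0}{p'})\,\big\|\,F(\threedisttwo{\eps_0}{p'})\right)
\le \dalpha{e^\eps}\!\left(\threedistone{\eps_0}{p'}\,\big\|\,\threedisttwo{\eps_0}{p'}\right).
\]
No step should be much of an obstacle; the only substantive thing to verify is that the single map $f$ works simultaneously on $Y_{p'}$, $Y_{1,p'}^0$, and $Y_{1,p'}^1$, i.e.\ that the same ratio $p/p'$ serves all three reductions. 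This is exactly the symmetry built into the definitions in~\eqref{individualreports} (the two ``$p$'' coordinates of $Y_p$ and the ``$p$'' coordinate of $Y_{1,p}^b$ in the non-extremal slot all scale by the same factor when $p$ shrinks to $p'$), which is why a single post-processing suffices.
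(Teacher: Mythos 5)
Your proof is correct and takes essentially the same approach as the paper: the paper also defines the post-processing as independent Binomial$(\cdot, p/p')$ thinning of each coordinate and verifies it distribution-by-distribution on the per-sample building blocks before invoking the data processing inequality; your per-sample map $f$ on $\{0,1,2\}$ is just the count-increment representation the paper uses written in a different (isomorphic) coordinate system.
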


An example of a family of local randomizers with a smaller $p^*$ is $k$- randomized response. In fact, we can show that Lemma~\ref{shuffletobinoms} gives a tight upper bound on the privacy amplification by shuffling for this family of local randomizers,. For any $k\in\mathbb{N}$ and $\eps_0 >0$, the $k$-randomized response $\kRR\colon [k]\to[k]$ is defined as \[\kRR(x) = \begin{cases} x & \text{with probability } \frac{e^{\eps_0}-1}{e^{\eps_0}+k-1}\\  y\sim \unif{[k]} & \text{with probability } \frac{k}{e^{\eps_0}+k-1}\end{cases},\] where $\unif{[k]}$ is the uniform distribution over $[k]$. That is, with probability $\frac{e^{\eps_0}-1}{e^{\eps_0}+k-1}$ the true data point is reported, and otherwise a random value is reported.

\begin{theorem}[$\kRR$ upper bound]\label{kRRthm}\textcolor{magenta}{Errata: this theorem is affected by the error in the proof of Lemma 3.5 but a slight variant holds and is given in Theorem~\ref{kRRthmnew}.} For a domain $\mathcal{D}$, let $\Aldp[i]:\out[1]\times\cdots\times\out[i-1]\times\mathcal{D}\to\out[i]$ for $i\in[n]$ (where $\out[i]$ is the range space of $\Aldp[i]$) be a sequence of algorithms such that $\Aldp[i](z_{1:i-1}, \cdot)$ is a $\eps_0$-DP local randomizer for all values of auxiliary inputs $z_{1,i-1}\in\out[1]\times\cdots\times\out[i-1]$. Let $\shuffler:\mathcal{D}^n\to\out[1]\times\cdots\times \out[n]$ be the algorithm that given a dataset $x_{1:n}\in\mathcal{D}^n$, samples a uniformly random permutation $\pi$, then sequentially computes $z_i=\Aldp[i](z_{1:i-1}, x_{\pi(i)})$ for $i\in[n]$ and outputs $z_{1:n}$. Assume that for some $k\in\mathbb{N}$ we have that for all $i\in[n]$, there exists a function $f^{(i)}:\out[1]\times\cdots\times\out[i-1]\times\mathcal{D}\to[k]$ such that $\Aldp[i](z_{1:i-1}, x)=\kRR(f^{(i)}(z_{1:i-1},x))$. Then for any distance measure $D$ that satisfies the data processing inequality, and any two neighbouring datasets $X_0$ and $X_1$, \[D(\shuffler(X_0), \shuffler(X_1))\le D\left(\threedistone{\eps_0}{\frac{1}{e^{\eps_0}+k-1}}, \threedisttwo{\eps_0}{\frac{1}{e^{\eps_0}+k-1}}\right)\]
\end{theorem}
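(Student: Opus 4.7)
The plan is to reduce to Lemma~\ref{shuffletobinoms} with the value $p^* = 1/(e^{\eps_0}+k-1)$ instead of the generic $1/(e^{\eps_0}+1)$. Recall that Lemma~\ref{shuffletobinoms} guarantees a post-processing of $\threedistb{\eps_0}{p^*}$ equal in distribution to $\shuffler(X_b)$ provided that every adaptive randomizer $\Aldp[i](z_{1:i-1},\cdot)$ admits a mixture decomposition as in Corollary~\ref{decomposition} with mixing probability at most $p^*$, relative to the neighboring pair $X_0 = (x_1^0,x_2,\ldots,x_n)$ and $X_1 = (x_1^1,x_2,\ldots,x_n)$. Granting such a decomposition at level $p^* = 1/(e^{\eps_0}+k-1)$, the stated bound then follows immediately from the data processing inequality applied to the post-processing map $f$ of Lemma~\ref{shuffletobinoms}.

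The core of the proof is therefore to exhibit the decomposition. Because $\Aldp[i](z_{1:i-1},x) = \kRR(f^{(i)}(z_{1:i-1},x))$ is a post-processing of $\kRR$, it suffices to produce the decomposition for $\kRR$ on the images $y_1^b \eqdef f^{(i)}(z_{1:i-1},x_1^b)$ for $b\in\{0,1\}$ and $y_j \eqdef f^{(i)}(z_{1:i-1},x_j)$ for $j \in [2,n]$; pushing the resulting base distributions through $\kRR$ then gives a decomposition for $\Aldp[i]$ with the same mixing weight. If $y_1^0 = y_1^1$, the two randomizer distributions coincide and the decomposition is trivial with $p=0$. Otherwise, set $p = 1/(e^{\eps_0}+k-1)$, $\mathcal{Q}_1^0 = \delta_{y_1^0}$, $\mathcal{Q}_1^1 = \delta_{y_1^1}$, and $\mathcal{Q}_1$ equal to the uniform distribution on $[k]\setminus\{y_1^0,y_1^1\}$. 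Direct inspection of the $\kRR$ pmf shows that $\kRR(y_1^0)$ places mass $e^{\eps_0}p$ on $y_1^0$, mass $p$ on $y_1^1$, and mass $p$ on each of the remaining $k-2$ coordinates, giving exactly $e^{\eps_0}p\,\mathcal{Q}_1^0 + p\,\mathcal{Q}_1^1 + (1-p-e^{\eps_0}p)\,\mathcal{Q}_1$; symmetrically for $\kRR(y_1^1)$.

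For the non-distinguished users $j\ge 2$, $\kRR(y_j)$ places mass exactly $p$ on any coordinate in $\{y_1^0,y_1^1\}\setminus\{y_j\}$, so the linear equation $\kRR(y_j) = p\,\delta_{y_1^0} + p\,\delta_{y_1^1} + (1-2p)\,\mathcal{Q}_j$ uniquely determines $\mathcal{Q}_j$ coordinate-wise. I would then verify that the resulting residual $\mathcal{Q}_j$ is a bona fide probability distribution: its coordinates are non-negative (the only potential issue is at $y_j$ when $y_j \in \{y_1^0,y_1^1\}$, where the coordinate equals $p(e^{\eps_0}-1)/(1-2p) \ge 0$), and direct summation shows the total mass equals $(e^{\eps_0}+k-3)p/(1-2p)$, which simplifies to $1$ using $p = 1/(e^{\eps_0}+k-1)$.

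With this decomposition in hand, we have $\decompprob{\Aldp[i](z_{1:i-1})} \le 1/(e^{\eps_0}+k-1)$ uniformly in $i$ and $z_{1:i-1}$, so Lemma~\ref{shuffletobinoms} gives a post-processing $f$ with $\shuffler(X_b) \disteq f(\threedistb{\eps_0}{1/(e^{\eps_0}+k-1)})$ for $b\in\{0,1\}$, and the theorem follows by one application of the data processing inequality. The main obstacle is bookkeeping: one must carefully treat the boundary case $y_j \in \{y_1^0, y_1^1\}$ when constructing $\mathcal{Q}_j$ and confirm the normalization; apart from that, no new analytic machinery beyond Lemma~\ref{shuffletobinoms} is needed.
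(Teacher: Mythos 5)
Your decomposition is correct and matches what the paper's own (original) proof of Theorem~\ref{kRRthm} intends: you exhibit the base distributions $\mathcal{Q}_1^0 = \delta_{y_1^0}$, $\mathcal{Q}_1^1 = \delta_{y_1^1}$, $\mathcal{Q}_1 = \mathrm{Unif}([k]\setminus\{y_1^0,y_1^1\})$ and the residuals $\mathcal{Q}_j$, verify nonnegativity and normalization, conclude $\decompprob{\Aldp[i](z_{1:i-1})}\le 1/(e^{\eps_0}+k-1)$, and then invoke Lemma~\ref{shuffletobinoms} and data processing. This is exactly the paper's argument, just spelled out in more detail.

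However, both your proof and the paper's run into the same gap, which the paper acknowledges in its errata: Lemma~\ref{shuffletobinoms} as stated is incorrect when $p^* < 1/(e^{\eps_0}+1)$. For $k>2$ --- the only regime where this theorem improves on the generic bound --- $p^* = 1/(e^{\eps_0}+k-1)$ is strictly less than $1/(e^{\eps_0}+1)$, so $Y_{1,p^*}^0$ and $Y_{1,p^*}^1$ have positive mass on the outcome $2$. The post-processing map in Algorithm~\ref{postprocessing} decides whether user~$1$ is placed in $K$ using only $n_K = |K|$ (the count of $2$'s), on the premise that this event is independent of the bit $b$. That premise fails: conditioned on the full tuple $(n_0,n_1,n_K)$, the posterior probability that user~$1$ drew $2$ depends on $n_0$ and $n_1$, whose joint law differs between $b=0$ and $b=1$. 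Hence the claimed equality $f(\threedistb{\eps_0}{p^*}) \disteq \shuffler(X_b)$ does not follow. Your construction of the decomposition is sound --- the issue is entirely in the black-box use of Lemma~\ref{shuffletobinoms}. The paper's corrected version, Theorem~\ref{kRRthmnew} together with Lemma~\ref{shuffletobinoms2new}, fixes this by adding a fourth outcome and arranging that user~$1$ never falls in the set $K$ of users whose identity must be assigned, thereby restoring independence; the resulting bound compares $\threedistqb{\eps_0}{p}{q}$ distributions over $\mathbb{N}^3$ rather than $\threedistb{\eps_0}{p}$ over $\mathbb{N}^2$.
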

\begin{proof}
Let $X_0$ and $X_1$ be neighbouring datasets. If for all $i\in[n]$, $\Aldp[i](z_{1:i-1}, x)=\kRR(f^{(i)}(z_{1:i-1},x))$, then for all $i$, the probability density function of $\Aldp[i]$ only takes on two values $e^{\eps_0}p$ and $p$ where $p=1/(e^{\eps_0}+k-1)$. Thus, as in Corollary~\ref{decomposition}, this allows us to show that $\decompprob{\Aldp[i](z_{1:i-1})}\le 1/(e^{\eps0}+k-1)$ and hence by Lemma~\ref{shuffletobinoms} and the data processing inequality, $D(X_0,X_1)\le D(\threedistone{\eps_0}{p}, \threedisttwo{\eps_0}{p})$ as required.
\end{proof}

\begin{theorem}[$\kRR$ lower bound]\label{kRRlowerbound}
Given $k\in\mathbb{N}$, $k \geq 3$ let the data domain be $\mathcal{D}=[k]$. Let $\shuffler:[k]^n\to[k]$ be the algorithm that given a dataset $x_{1:n}\in[k]^n$, samples a uniformly random permutation $\pi$, then computes $z_i=\kRR(x_{\pi(i)})$ for $i\in[n]$ and outputs $z_{1:n}$. Then there exists neighbouring datasets $X_0$ and $X_1$ such that for any distance measure $D$ that satisfies the data processing inequality, \[D\left(\threedistone{\eps_0}{\frac{1}{e^{\eps_0}+k-1}},\threedisttwo{\eps_0}{\frac{1}{e^{\eps_0}+k-1}}\right)\le D(\shuffler(X_0), \shuffler(X_1))\]
\end{theorem}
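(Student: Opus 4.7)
\textbf{Proof plan for Theorem~\ref{kRRlowerbound}.}

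The plan is to exhibit a specific pair of neighboring datasets and a post-processing map $f$ such that $f(\shuffler(X_b))$ is distributed exactly as $\threedistbmain{\eps_0, \frac{1}{e^{\eps_0}+k-1}}$ for $b\in\{0,1\}$. The theorem then follows immediately from the data processing inequality applied to $D$.

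First, since $k \geq 3$, I can choose $X_0 = (1, 3, 3, \ldots, 3)$ and $X_1 = (2, 3, 3, \ldots, 3)$ in $[k]^n$, which differ only in the first coordinate. Set $p = 1/(e^{\eps_0}+k-1)$, so that on any input $x \in [k]$ the randomizer $\kRR(x)$ outputs $x$ with probability $e^{\eps_0} p$ and each other value in $[k]$ with probability $p$. Define the post-processing $f\colon[k]^n \to \mathbb{N}^2$ by $f(z_{1:n}) = (n_0, n_1)$ with $n_0 = |\{i : z_i = 1\}|$ and $n_1 = |\{i : z_i = 2\}|$. Because $f$ is invariant under permutation of its inputs, the uniform shuffle in $\shuffler$ has no effect on the distribution of $f(\shuffler(X_b))$.

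Next I would verify that $f(\shuffler(X_b))$ agrees in distribution with $\threedistbmain{\eps_0, p}$ by checking user-by-user. For any user whose input is $3$, under $\kRR$ the output equals $1$ with probability $p$, equals $2$ with probability $p$, and takes some value in $[k]\setminus\{1,2\}$ with probability $1 - 2p$. After encoding $1 \mapsto 0$, $2 \mapsto 1$, anything else $\mapsto 2$, this is exactly the distribution $Y_p$ from \eqref{individualreports}. For the user holding the differing element: in $X_0$ the input is $1$, so $\kRR$ outputs $1$ with probability $e^{\eps_0} p$, outputs $2$ with probability $p$, and outputs something else with probability $1 - e^{\eps_0} p - p$; after the same encoding this matches $Y_{1,p}^0$. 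In $X_1$ the input is $2$, giving $Y_{1,p}^1$ analogously. Counting the number of $0$s and $1$s across the $n$ encoded outputs therefore produces exactly a sample from $\threedistbmain{\eps_0, p}$ as defined by the construction after \eqref{individualreports}.

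Finally I would invoke the data processing inequality (Definition~\ref{postprocess}) for $D$ applied to $f$:
\[D\!\left(\threedistone{\eps_0}{p}\big\|\threedisttwo{\eps_0}{p}\right) = D(f(\shuffler(X_0)) \| f(\shuffler(X_1))) \leq D(\shuffler(X_0) \| \shuffler(X_1)),\]
which is the desired inequality. There is essentially no hard step here: the entire content is choosing the datasets so that the three ``interesting'' probabilities $p$, $p$, $1-2p$ on the common users and $e^{\eps_0}p$, $p$, $1-e^{\eps_0}p-p$ on the differing user line up exactly with the $\kRR$ probabilities, which is precisely what $p = 1/(e^{\eps_0}+k-1)$ guarantees. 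The only minor subtlety to check is that the hypothesis $k \geq 3$ is needed so that the ``common'' input $3$ is genuinely distinct from both $1$ and $2$, ensuring the third bucket in the encoding is nonempty and the probability $1 - 2p$ accounting works out; this is why the theorem is stated for $k \geq 3$.
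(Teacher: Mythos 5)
Your proof is correct and follows essentially the same route as the paper's: you pick the same neighboring datasets $X_0=(1,3,\ldots,3)$ and $X_1=(2,3,\ldots,3)$, apply the same counting post-processing map $z_{1:n}\mapsto(|\{i:z_i=1\}|,|\{i:z_i=2\}|)$, check by inspection that the per-user output distributions match $Y_{1,p}^b$ and $Y_p$ with $p=1/(e^{\eps_0}+k-1)$, and conclude by the data processing inequality. The only difference is that you spell out the user-by-user verification that the paper leaves implicit, which is a reasonable addition but not a different argument.
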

\begin{proof}
Let $X_0$ be the data set where the first individual has the value 1, and the $n-1$ remaining individuals have value 3. Let $X_1$ be the data set where the first individual has the value 2, and the $n-1$ remaining individuals have value 3. Consider the function $g:[k]\to [3]$ that maps 1 and 2 to themselves, then $g(i)=3$ for all $i\in\{3,\cdots,k\}$. Let $g^n:[k]^n\to\mathbb{N}^2$ be $g(z_{1:n})=(\sum_{i=1}^n \chi(z_i=1), \sum_{i=1}^n \chi(z_i=2))$, where $\chi(y)=1$ if the logical condition $y$ is true, and 0 otherwise. Then $g(\shuffler(X_0))= \threedistone{\eps_0}{1/(e^{\eps_0}+k-1)}$ and $g(\shuffler(X_1))= \threedisttwo{\eps_0}{1/(e^{\eps_0}+k-1)}$. The result then follows from the data processing inequality.
\end{proof}

\section{Numerical Results}

In this section, we provide a numerical evaluation of our bound.

\subsection{Approximate Differential Privacy}

In Figure~\ref{approxgraphs}, \texttt{This work} is the bound from Theorem~\ref{mainanalyticalthm} with the hockey-stick divergence computed numerically. We compare to the best numerical amplification bounds from prior work by ~\citet{FeldmanMT:2020} (\texttt{FMT'20}) \footnote{\texttt{FMT'20} was produced using code released by Feldman, McMillan and Talwar at \url{https://github.com/apple/ml-shuffling-amplification}}. We also compare to the lower bound produced by computing the privacy amplification for the specific local randomizers $\2RR$ (\texttt{2RR, lower bound}) and $\3RR$ (\texttt{3RR, lower bound}). 
Figure~\ref{ADPratio} compares the bound presented in this work to that of \citet{FeldmanMT:2020}. As expected, our new bound is tighter than those in \citep{FeldmanMT:2020} in every regime. The deviation between the two bounds is largest when $\eps_0$ is large. As $\eps_0$ increases, the upper bound presented in this work approaches the lower bound obtained by directly computing the privacy amplification bounds for the specific mechanisms $\2RR$ and $\3RR$. The peak of the "This work" graph, where the upper bound from this work is furthest from the lower bound occurs exactly at the point when the $\3RR$ lower bound starts dominating the $\2RR$ lower bound. That is, to the left of the peak, $\3RR$ amplifies better than $\2RR$, and to the right of the peak, $\2RR$ amplifies better than $\3RR$.

\begin{figure*}
  \centering
  \begin{subfigure}{0.49\textwidth}
    \includegraphics[width=1\textwidth]{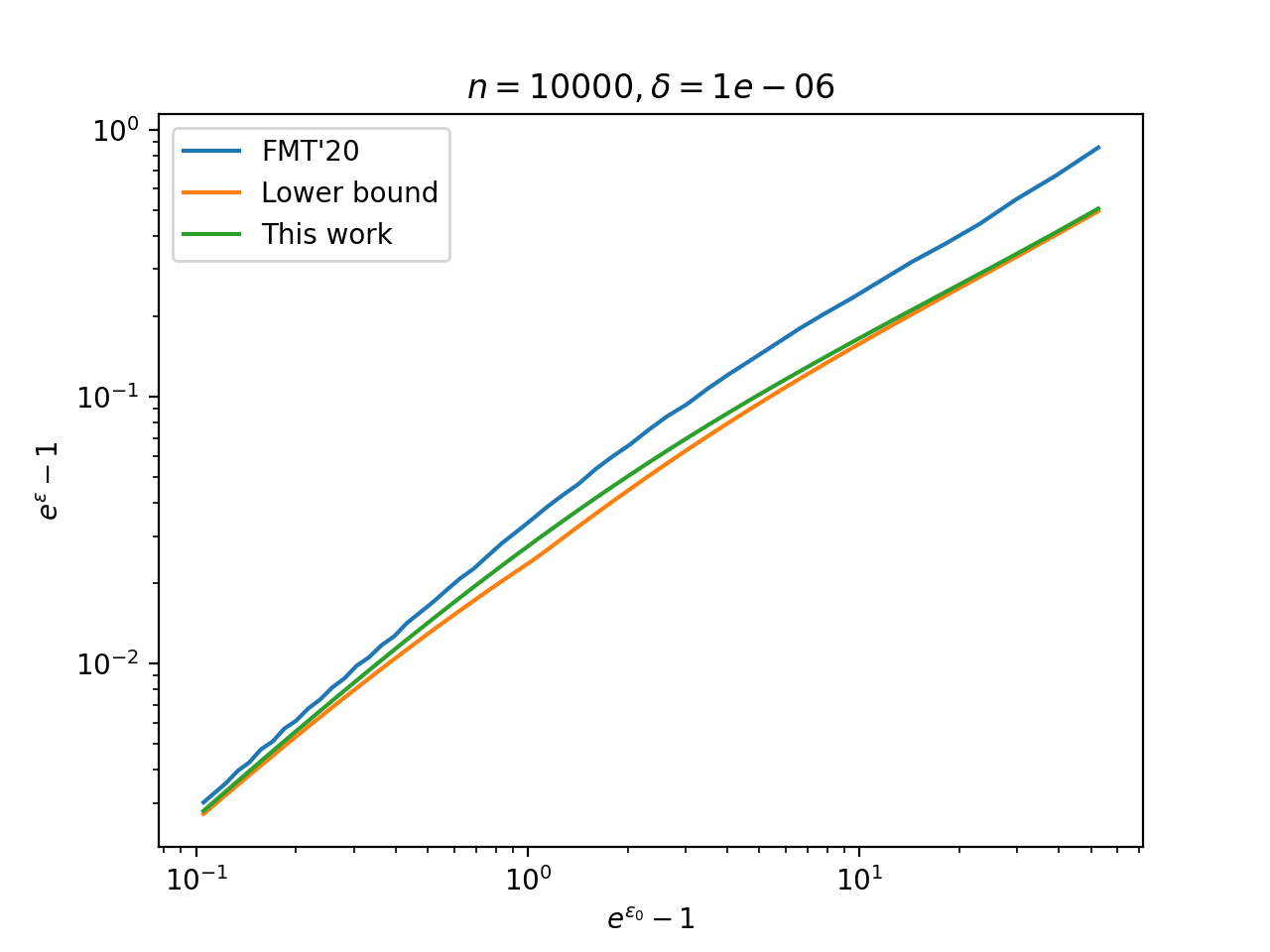}
  \caption{Comparison of privacy guarantees.}\label{approxgraphs}
  \end{subfigure}
  \begin{subfigure}{0.49\textwidth}
  \includegraphics[width=1\textwidth]{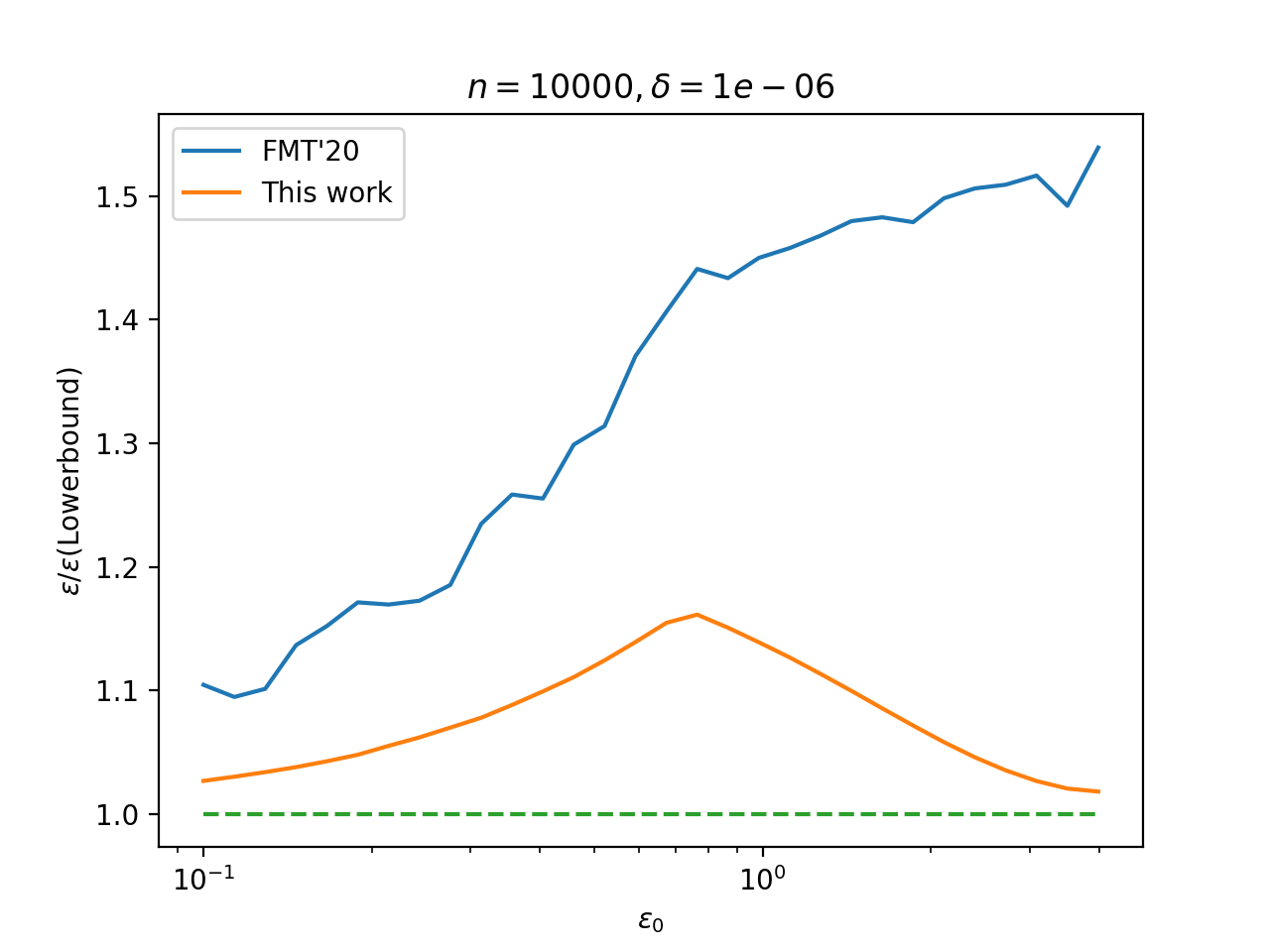}
    \caption{Comparison of privacy guarantees relative to lower bound.}
    \label{ADPratio}
  \end{subfigure}
  \caption{Comparison of the approximate DP shuffling amplification bound presented in this work, and that of \cite{FeldmanMT:2020}, the best bound from prior work. The lower bound on privacy amplification is obtained from $\epsilon(\rm Lower bound) = \max\{\epsilon(\2RR), \epsilon(\3RR)\}$ where $\epsilon(\2RR)$ and $\epsilon(\3RR)$ are the lower bounds on the privacy amplification of the specific mechanisms $\2RR$ and $\3RR$.}
\end{figure*}

Our experiments running on a 2021 MacBook Pro took 30 minutes for $n = 1,000,000$ (in SM), and 20 minutes for $n = 1000$. The bulk of this time is spent on computing our lower bounds. The upper bound computation for a fixed $\eps_0$ and $n=10^6$ takes about 1 minute. The lower bound for $\2RR$ similarly runs is about a minute for $n=10^6$. The lower bound for $\3RR$ runs in about 3 minutes for $n=10^4$, we did not run this algorithm for larger values of $n$.

\subsection{R\'enyi Differential Privacy}

In Figure~\ref{varyingalphafig} we show the privacy amplification bound for R\'enyi differential privacy as a function of $\alpha$. As expected, our bound always improves over \cite{FeldmanMT:2020}, and is very close to the lower bound for some settings of $\alpha$.

In Figure~\ref{varyingTfig}, we plot the privacy guarantee for $T$ adaptively composed outputs of a shuffler with each shuffler operating on $n$, $\epsilon_0$-DP local randomizers.
The advanced composition theorem quantifies the privacy guarantee after composing $T$ $(\epsilon,\delta)$-DP algorithms. However, we can obtain tighter privacy guarantees by computing the composition guarantees in terms of RDP, then converting back to approximate DP [ACGMMTZ16]. In Figure~\ref{varyingTfig}, we compare two methods for computing the resulting privacy guarantee. \texttt{This work, via Approximate DP} computes the amplification in terms of approximate DP then uses advanced composition [KOV15, Theorem 4.3]. \texttt{This work, via RDP} computes the amplification in terms of RDP, using composition in terms of RDP [Mir17], then converts to Approximate DP [CKS20, Proposition 12]. We also compare to the R\'enyi composition version of the best known prior bounds \citep{FeldmanMT:2020}.

\begin{figure}
    \centering
    \includegraphics[scale=0.7]{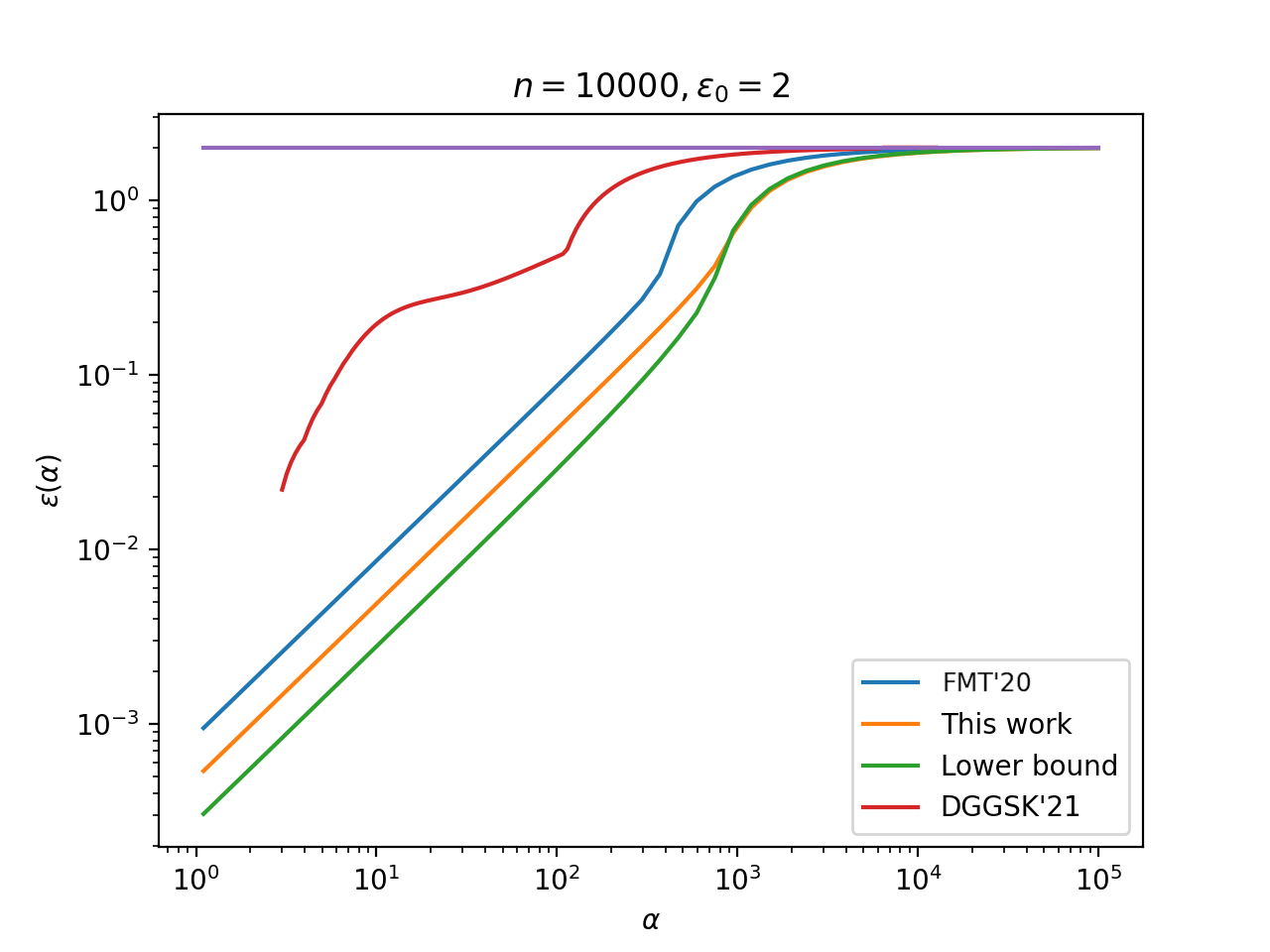}
    \caption{Comparison of the bounds on $\epsilon(\alpha)$ for RDP for fixed $\eps_0=4$ and $\alpha$ ranging between 1.1 and $10^5$. The horizontal line is at $\eps_0$.}
    \label{varyingalphafig}
\end{figure}

\begin{figure}
    \centering
    \includegraphics[scale=0.7]{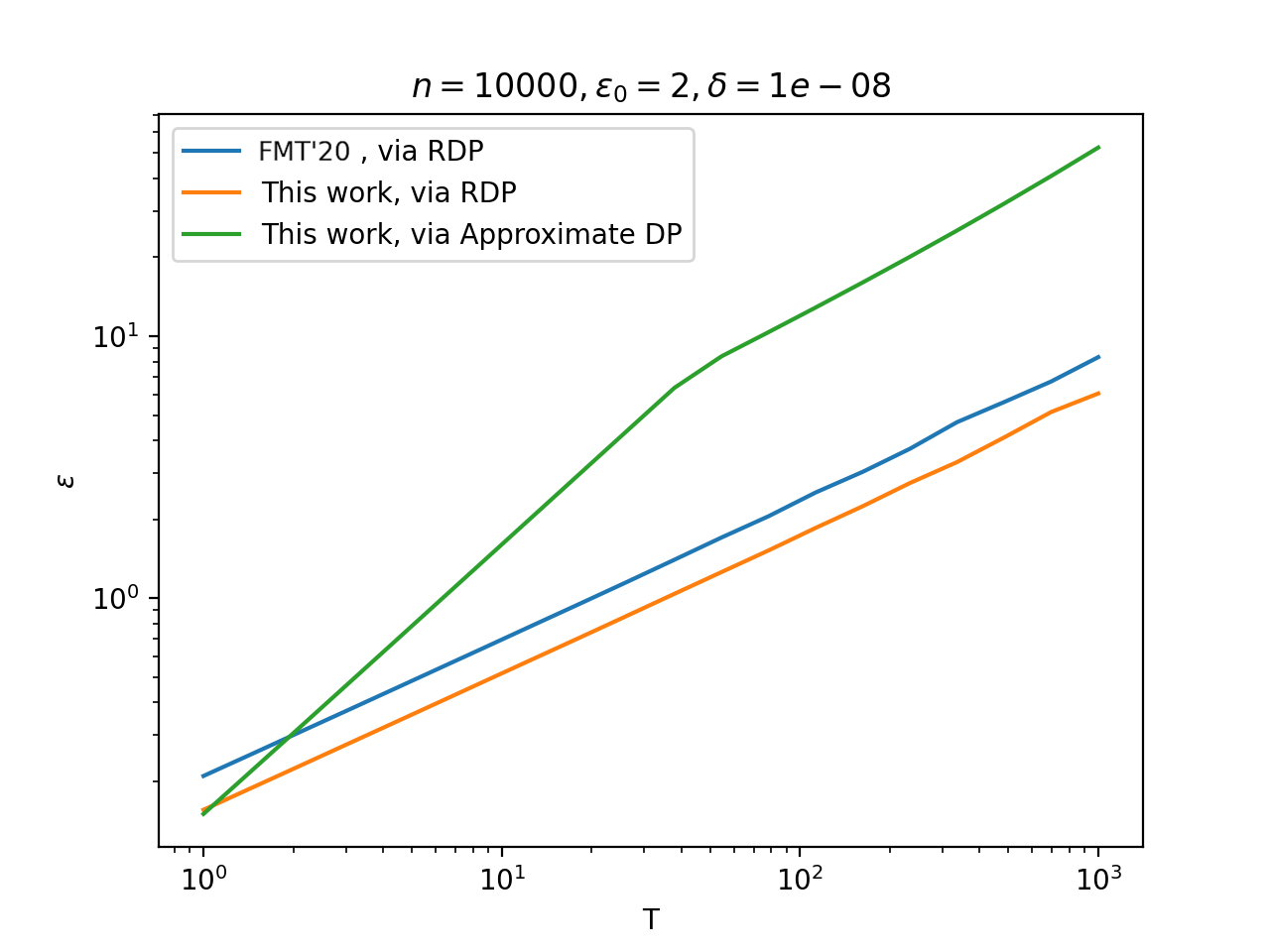}
    \caption{Approximate DP guarantees for composition of a sequence of $T$ results of amplification by shuffling.}
    \label{varyingTfig}
\end{figure}

\subsection{$k$-randomised response}

In Section~\ref{kRR} we showed that the analysis presented in this paper can obtain tight privacy amplification by shuffling bounds for the specific randomizers $\kRR$. In Figure~\ref{kRRfig} we can see that the improved bounds for $\kRR$ are indeed tighter than the general bound, and the privacy guarantee on the output of the shuffler improves as $k$ increases, as expected. We also compare to the best known bound from prior work \cite{Balle:2019}.

\begin{figure}
    \centering
    \includegraphics[scale=0.7]{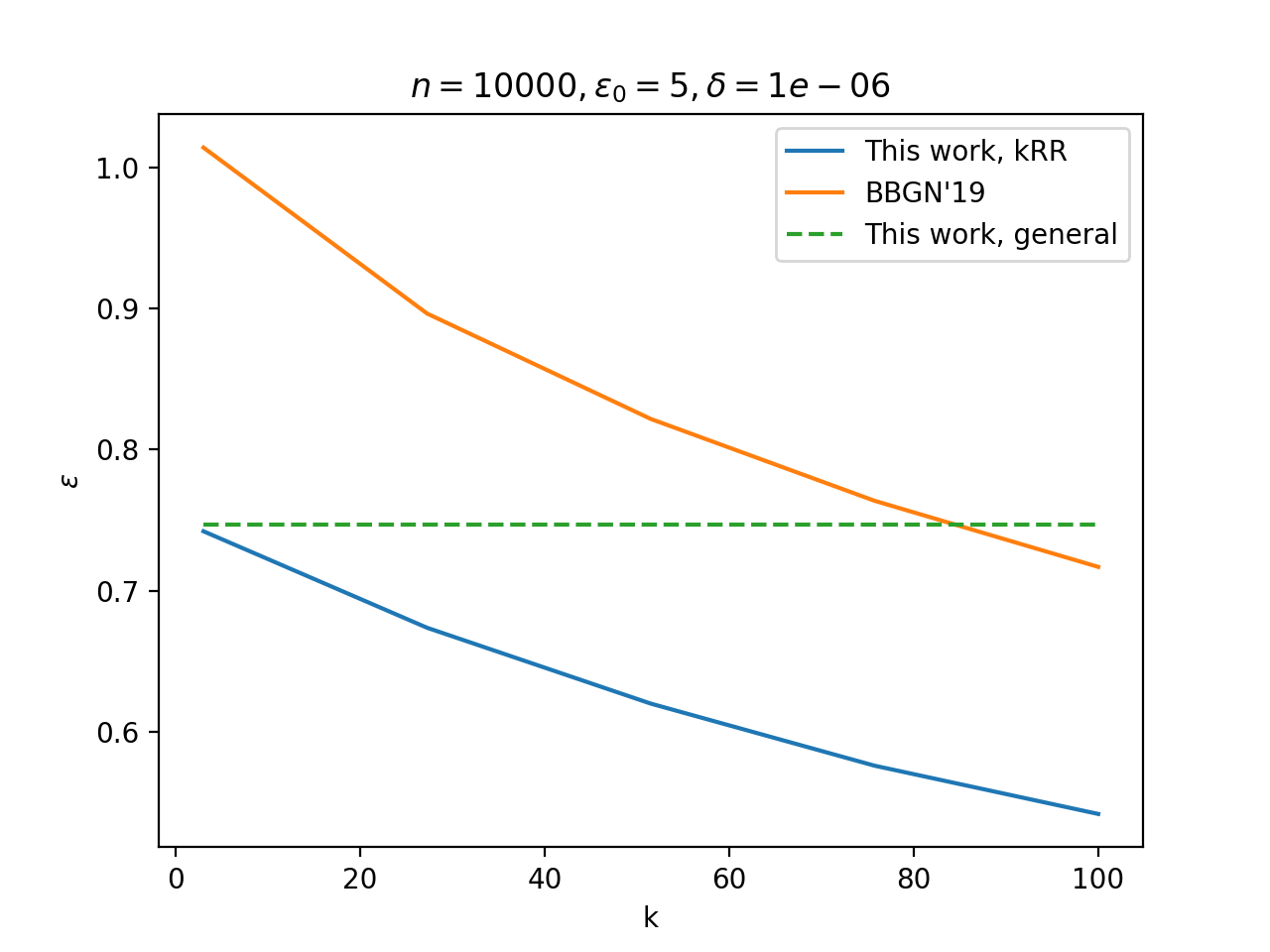}
    \caption{Privacy amplification by shuffling for $\kRR$ as a function of $k$ using bound from Theorem~\ref{kRRthm}. The horizontal dashed line corresponds to the general upper bound from Theorem~\ref{mainanalyticalthm}. }
    \label{kRRfig}
\end{figure}

\section{Errata}
\label{errata}
\subsection{The Error}

Let the set $K$ be as defined in the proof of Lemma 3.5. The problem comes from deciding whether $1\in K$. The proof had assumed that the probability that $1\in K$ only depended on $n_2$ and was the same regardless of whether the input was $X_0$ or $X_1$ since $\Pr(Y_{1,p}^0=2)=\Pr(Y_{1,p}^1=2)$. However, the probability that $1\in K$ actually also depends on $n_0$ and $n_1$, whose distribution depends on whether the input database was $X_0$ or $X_1$. For example, if $n_0>n_1$, then the probability $1\in K$ is higher if the input is $X_1$ than $X_0$.

This error does not arise if the local randomizers used always satisfy a decomposition with $p=\frac{1}{e^{\eps_0}+1}$, since in this setting $Y_{1,p}^0$ and $Y_{1,p}^1$ never output $2$, and hence this issue never arises.

\subsection{A General Statement for a Restricted Class of Local Randomizers}

Let $\extremalset{\eps_0}$ be the set of all local randomizers that satisfy the decomposition in Corollary 3.4 with $p=\frac{1}{e^{\eps_0}+1}$. That is, a local randomizer $\lr\colon \D\to \cS$ is in $\extremalset{\eps_0}$ if and only if it is an $\eps_0$-DP local randomizer and for any $n+1$ inputs $x_1^0, x_1^1, x_2, \ldots, x_n \in\mathcal{D}$, there lexists distributions $\mathcal{Q}_1^0$, $\mathcal{Q}_1^1$, $\mathcal{Q}_2, \ldots, \mathcal{Q}_n$ such that
\begin{align}
\nonumber\mathcal{R}(x_1^0) &= \frac{e^{\eps_0}}{e^{\eps_0}+1}\mathcal{Q}_1^0+\frac{1}{e^{\eps_0}+1}\mathcal{Q}_1^1,\\
\nonumber\mathcal{R}(x_1^1) &= \frac{1}{e^{\eps_0}+1}\mathcal{Q}_1^0+\frac{e^{\eps_0}}{e^{\eps_0}+1}\mathcal{Q}_1^1\\
\forall i\in[2,n], \;\;\mathcal{R}(x_i) &= \frac{1}{e^{\eps_0}+1}\mathcal{Q}_1^0+\frac{1}{e^{\eps_0}+1}\mathcal{Q}_1^1+\frac{e^{\eps_0}-1}{e^{\eps_0}+1}\mathcal{Q}_i. \label{eq:decomposition}
\end{align}

Since Lemma 3.5 still holds for this set of local randomizers, the upper bounds in Theorem 3.1, Theorem 3.2 and Corollary 4.3 still hold for local randomizers in this set. 

\begin{theorem}[Replacement for Theorem 3.1]\label{newtheorem31}
For a domain $\mathcal{D}$, let $\Aldp[i]:\out[1]\times\cdots\times\out[i-1]\times\mathcal{D}\to\out[i]$ for $i\in[n]$ (where $\out[i]$ is the range space of $\Aldp[i]$) be a sequence of algorithms such that $\Aldp[i](z_{1:i-1}, \cdot)\in\extremalset{\eps_0}$ for all values of auxiliary inputs $z_{1:i-1}\in\out[1]\times\cdots\times\out[i-1]$. Let $\shuffler:\mathcal{D}^n\to \out[1]\times\cdots\times \out[n]$ be the algorithm that given a dataset $x_{1:n}\in\mathcal{D}^n$, samples a permutation $\pi$ uniformly at random, then sequentially computes $z_i=\Aldp[i](z_{1:i-1}, x_{\pi(i)})$ for $i\in[n]$ and outputs $z_{1:n}$. Let $X_0$ and $X_1$ be two arbitrary neighboring datasets in $\mathcal{D}^n$. Then for any distance measure $D$ that satisfies the data processing inequality,
\[D(\shuffler(X_0)\|\shuffler(X_1))\le D\left(\threedistonemain{\eps_0}\Big\|\threedisttwomain{\eps_0}\right).\]
\end{theorem}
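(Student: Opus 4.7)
The plan is to follow the structure of the original proof of Lemma~\ref{shuffletobinoms}, but specialized to the extremal value $p^* = \tfrac{1}{e^{\eps_0}+1}$, because at this value the source of the erratum's error disappears. The key point is that when $p = \tfrac{1}{e^{\eps_0}+1}$, the coefficient $1 - e^{\eps_0}p - p$ in the definition of $Y_{1,p}^0$ and $Y_{1,p}^1$ equals zero, so user $1$ has probability zero of sampling the ``middle'' symbol $2$. This eliminates exactly the case that was mis-analyzed in the original proof, since the ambiguity over whether index $1$ belongs to the set $K = \{i : y_i = 2\}$ simply does not arise.

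First, I would invoke the assumption $\Aldp[i](z_{1:i-1}, \cdot) \in \extremalset{\eps_0}$ together with the definition of $\extremalset{\eps_0}$: restricted to the $n+1$ inputs obtained from the entries of $X_0$ and $X_1$ (one differing pair $x_1^0, x_1^1$ and the common tail $x_2, \ldots, x_n$), the randomizer admits the decomposition in \eqref{eq:decomposition} with exponent $p = \tfrac{1}{e^{\eps_0}+1}$. Thus for every prefix $z_{1:i-1}$ there exist distributions ${\mathcal{Q}_1^0}^{(i)}, {\mathcal{Q}_1^1}^{(i)}, \mathcal{Q}_j^{(i)}$ such that the randomizer on $x_1^b$ is a mixture of ${\mathcal{Q}_1^0}^{(i)}$ and ${\mathcal{Q}_1^1}^{(i)}$ with weights $(e^{\eps_0}p, p)$ or $(p, e^{\eps_0}p)$, and the randomizer on any $x_j$, $j\ge 2$, is a mixture of the three components with weights $(p,p, 1-2p) = \bigl(\tfrac{1}{e^{\eps_0}+1},\tfrac{1}{e^{\eps_0}+1}, \tfrac{e^{\eps_0}-1}{e^{\eps_0}+1}\bigr)$.

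Next, I would recast the shuffler as a two-stage sampler, as in the original proof. In stage one, user $1$ draws a type $y_1$ from $Y_{1,p}^b$ (with $b$ determined by whether the dataset is $X_0$ or $X_1$) and users $2,\ldots,n$ each draw a type independently from $Y_p$; then the random permutation $\pi$ acts on these types. The histogram $(n_0, n_1)$ of the shuffled types is precisely distributed as $\threedistbmain{\eps_0}$. In stage two, conditioned on the types, the actual randomizer outputs $z_t$ are produced sequentially from the mixture component ${\mathcal{Q}_1^0}^{(t)}(z_{1:t-1})$, ${\mathcal{Q}_1^1}^{(t)}(z_{1:t-1})$, or $\mathcal{Q}_{j_t}^{(t)}(z_{1:t-1})$ according to $y_t$. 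The crucial observation here is that because $Y_{1,p}^b$ never takes value $2$ at this $p$, the set $K = \pi(\{i : y_i = 2\})$ is a uniformly random subset of $\{2,\ldots,n\}$ whose distribution is independent of $b$; hence assigning a uniformly random bijection from $K$ to a subset of $\{2,\ldots,n\}$ is a valid, $b$-oblivious post-processing step, and the original difficulty (which came from conditionally re-weighting the membership of index $1$ in $K$) is avoided.

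Putting these pieces together, I define a post-processing function $f$ that takes $(n_0, n_1)$, expands it into a uniformly random permutation $y \in \{0,1,2\}^n$ with that histogram, assigns users with $y_t = 2$ uniformly at random to distinct indices $j_t \in \{2,\ldots,n\}$, and sequentially samples $z_t$ from the mixture component prescribed by $y_t$. By the coupling above, $f(\threedistbmain{\eps_0}) \disteq \shuffler(X_b)$ for $b \in \{0,1\}$, and the data processing inequality yields $D(\shuffler(X_0) \| \shuffler(X_1)) \le D(\threedistonemain{\eps_0} \| \threedisttwomain{\eps_0})$. The main obstacle will be the bookkeeping to verify that the joint distribution of types and permutations in the two-stage view actually matches the shuffler output, and in particular that the conditional distribution of $K$ given $|K| = n_K$ is uniform over subsets of $\{2,\ldots,n\}$ and free of $b$; at $p = \tfrac{1}{e^{\eps_0}+1}$ this reduces to a direct calculation since the $\binom{n-1}{n_K}$ term corresponding to ``$1 \notin K$'' is the only nonzero contribution.
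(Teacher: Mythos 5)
Your proposal is correct and takes essentially the same route as the paper: the paper simply observes that the original proof of Lemma~\ref{shuffletobinoms} is valid whenever the decomposition has $p=\tfrac{1}{e^{\eps_0}+1}$ (so $\alpha=0$ in Algorithm~\ref{postprocessing} and user $1$ is deterministically excluded from $K$), which is exactly what membership in $\extremalset{\eps_0}$ guarantees. Your reconstruction correctly isolates the vanishing coefficient $1-e^{\eps_0}p-p=0$ as the reason the erroneous conditioning step disappears, and the rest of the two-stage coupling plus data processing argument is the same as in the original proof.
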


Since Theorem 3.2 and Corollary 4.3 are bounds on $D\left(\threedistonemain{\eps_0}\Big\|\threedisttwomain{\eps_0}\right)$, the bounds provided in these results hold for the same set-up at Theorem~\ref{newtheorem31}.

\paragraph{What local randomizers belong in $\extremalset{\eps_0}$?} The set $\extremalset{\eps_0}$ does not contain all $\eps_0$-DP local randomizers, although it does contain many important local randomizers. For example, for any $k\in\mathbb{N}$, $\texttt{kRR}\in\extremalset{\eps_0}$. 
Further, $\extremalset{\eps_0}$ is closed under convex combinations. 

RAPPOR~\cite{Erlingsson:2014} is also in $\extremalset{\eps_0}$: given $K\in\mathbb{N}$, $\alpha,\beta\in[0,1]$ and $x\in[K]$, the RAPPOR local randomizer first encodes $x$ as a vector of length $K$, which is 1 in the $x$th position, and 0 elsewhere. It then independently tosses a coin for each coordinate (with bias $\alpha$ if the coordinate value is 1 and bias $\beta$ if the coordinate value is 0) and replaces the coordinate value with the result of the coin flip. Whenever $\alpha\ge \max\{1/2, \beta\}$ and $\alpha$ and $\beta$ are chosen such that this local randomizer is $\eps_0$-DP, then it is also in $\extremalset{\eps_0}$ with $\mathcal{Q}_1^0, \mathcal{Q}_1^1$ and $\mathcal{Q}_2$ given by 
\begin{align*}
    \mathcal{Q}_1^0 &= \left[\Ber\left(\frac{\alpha e^{\eps_0}-\beta}{e^{\eps_0}-1}\right), \Ber\left(\frac{\beta e^{\eps_0}-\alpha}{e^{\eps_0}-1}\right), \Ber(\beta), \cdots, \Ber(\beta)\right]\\
    \mathcal{Q}_1^0 &= \left[\Ber\left(\frac{\beta e^{\eps_0}-\alpha}{e^{\eps_0}-1}\right), \Ber\left(\frac{\alpha e^{\eps_0}-\beta}{e^{\eps_0}-1}\right), \Ber(\beta), \cdots, \Ber(\beta)\right]\\
    \mathcal{Q}_2 &= \left[\Ber\left(\frac{\beta e^{\eps_0}-\alpha}{e^{\eps_0}-1}\right), \Ber\left(\frac{\beta e^{\eps_0}-\alpha}{e^{\eps_0}-1}\right), \Ber\left(\frac{(e^{\eps_0}+1)\alpha-2\beta}{e^{\eps_0}-1}\right), \Ber(\beta), \cdots, \Ber(\beta)\right]
\end{align*}
where $x_1^0=1, x_1^1=2$ and $x_2=3$.

\subsection{Asymptotically Optimal Bound for R\'enyi DP}
Here we restate Corollary~\ref{RDPbound} with the corrected constant (in the condition on $\alpha$)	. The proof now relies on the reduction in \cite{FeldmanMT:2020} (which is identical up to a factor of at most 2 to the one claimed in Theorem~\ref{mainanalyticalthm}). The proof is in Appendix~\ref{app:rdp-cor}.
\begin{corollary}\label{RDPbound-corrected}
		For any domain $\mathcal{D}$, let $\Aldp[i]:\out[1]\times\cdots\times\out[i-1]\times\mathcal{D}\to\out[i]$ for $i\in[n]$ (where $\out[i]$ is the range space of $\Aldp[i]$) be a sequence of algorithms such that $\Aldp[i](z_{1:i-1}, \cdot)$ is an $\eps_0$-DP local randomizer for all values of auxiliary inputs $z_{1:i-1}\in\out[1]\times\cdots\times\out[i-1]$.
		Let $\shuffler:\mathcal{D}^n\to\out[1]\times\cdots\times \out[n]$ be the algorithm that given a dataset $x_{1:n}\in\mathcal{D}^n$, samples a uniform random permutation $\pi$ over $[n]$, then sequentially computes $z_i=\Aldp[i](z_{1:i-1}, x_{\pi(i)})$ for $i\in[n]$ and outputs $z_{1:n}$.
		Then there exists a constant $c$ such that for any $\alpha < \frac{n}{32 \eps_0 e^\eps_0}$, $\shuffler$ is $(\alpha \rho,\alpha)$-RDP,  where
		\begin{equation}\label{rhobound}
			\rho \le c \cdot (1-e^{-\eps_0})^2\frac{e^{\eps_0}}{n} .
		\end{equation}
		In particular, for $\eps_0 \geq 1$,$\rho \le  \frac{c e^{\eps_0}}{n}$.
	\end{corollary}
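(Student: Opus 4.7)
The plan is to combine two ingredients: the (correct) reduction of \cite{FeldmanMT:2020} from adaptive shuffling to the divergence of a fixed three-valued pair of distributions (their Theorem 3.2), and the generic approximate-DP-to-RDP conversion given here as Theorem~\ref{thm:adp2rdp}. The original proof of Corollary~\ref{RDPbound} invoked the now-erroneous Lemma~\ref{shuffletobinoms} only in the first step; replacing it with the \cite{FeldmanMT:2020} reduction introduces at most a constant-factor loss (since $(Q_0(\eps_0),Q_1(\eps_0))$ is a post-processing of our $(\threedistonemain{\eps_0},\threedisttwomain{\eps_0})$, as discussed right after Theorem~\ref{mainanalyticalthm}), and this is exactly what accounts for the weakening of the constant in the hypothesis on $\alpha$ from $16$ to $32$.

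First I would invoke the approximate-DP amplification bound from \cite{FeldmanMT:2020}: for absolute constants $c_1, C_1$, whenever $\eps_0 \le \ln(c_1 n/\ln(1/\delta))$, the shuffler $\shuffler$ is $(\eps,\delta)$-DP with
\[\eps \le C_1 (1-e^{-\eps_0})\sqrt{e^{\eps_0}\ln(1/\delta)/n}.\]
Next I would convert this into a tail bound on the privacy loss random variable $Z(x) = \ln(\shuffler(X_0)(x)/\shuffler(X_1)(x))$ using the standard hockey-stick-to-tail reduction noted after Theorem~\ref{thm:adp2rdp} (namely \cite[Lemma~9]{CanonneKS20}), which turns two-sided $(\eps,\delta)$-DP into $\Pr[|Z|\ge 2\eps]\le 2\delta/\eps$. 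Rescaling $\eps$ and $\delta$ and absorbing the $1/\eps$ factor into logarithms, this yields, for some absolute constants $C, c_2$, all $\delta' > \delta_{\min}$ with $\delta_{\min} = 2\exp(-c_2 n/e^{\eps_0})$,
\[\Pr_{x\sim \shuffler(X_1)}\bigl[|Z(x)| \ge \sigma\sqrt{\ln(1/\delta')}\bigr] \le 2\delta', \qquad \sigma^2 = C(1-e^{-\eps_0})^2 e^{\eps_0}/n.\]
In addition, since each $\Aldp[i]$ is $\eps_0$-DP and $X_0, X_1$ differ in exactly one entry, the pointwise bound $|Z(x)| \le \eps_0$ holds (the whole shuffler transcript is a randomized post-processing of the single differing local randomizer's output).

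Finally I would apply Theorem~\ref{thm:adp2rdp} with this $\sigma$, $\delta_{\min}$, and this $\eps_0$. Its stronger conclusion $D^\alpha(P\|Q) \le 3\alpha^2\sigma^2/(\alpha-1)$ requires $\delta_{\min} \le e^{-\alpha\eps_0}\alpha^2\sigma^2/4$; taking logarithms and substituting the values of $\sigma^2$ and $\delta_{\min}$, this reduces to $c_2 n/e^{\eps_0} \ge \alpha\eps_0 + O(\log n)$, which is comfortably implied by $\alpha \le n/(32 \eps_0 e^{\eps_0})$ for a suitable choice of constants. Since $\alpha/(\alpha-1) \le 2$ for $\alpha \ge 2$, this gives $D^\alpha \le 6\alpha\sigma^2$, hence $\rho \le 6C(1-e^{-\eps_0})^2 e^{\eps_0}/n$, and in particular $\rho \le c e^{\eps_0}/n$ when $\eps_0 \ge 1$. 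The main bookkeeping obstacle will be tracking constants through the two conversions (hockey-stick to tail bound, then tail bound to RDP) so that the upper bound on $\ln(1/\delta_{\min})$ coming from the \cite{FeldmanMT:2020} amplification hypothesis is consistent with the condition $\alpha\eps_0 + 2\ln(1/(\alpha\sigma))$ needed to invoke the stronger branch of Theorem~\ref{thm:adp2rdp}; that interplay is what pins down the final constant $32$ in the restriction on $\alpha$.
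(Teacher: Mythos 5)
Your route is sound and arrives at the same final statement, but it differs from the paper's in one structural respect that is worth naming. You propose going through the chain: FMT'20 reduction $\to$ FMT'20 $(\eps,\delta)$-DP bound $\to$ hockey-stick-to-tail conversion via \cite[Lemma~9]{CanonneKS20} $\to$ Theorem~\ref{thm:adp2rdp}. The paper instead proves the tail bound \emph{directly}: it applies the FMT'20 reduction to pass to the two-dimensional pair $(P,Q)$ with $C\sim\bin(n-1,e^{-\eps_0})$, then invokes Lemma~\ref{lem:tail-bound} with $p=1/(2e^{\eps_0})$, which already has the exact form $\Pr_{Q}[|Z|\ge \sigma\sqrt{\ln(1/\delta)}]\le \delta$ needed by Theorem~\ref{thm:adp2rdp}, with a clean $\sigma=16\sqrt{e^{\eps_0}/n}$ and $\delta_{\min}=e^{-n/(16e^{\eps_0})}$. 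The paper explicitly flags (right after Theorem~\ref{thm:adp2rdp}) that the CKS20 conversion you propose \emph{would} work ``with small adjustments,'' but it bypasses it precisely because of the bookkeeping you yourself flag as the main obstacle: after the CKS20 step the tail probability is $2\delta/\eps$ rather than $2\delta$, so $\delta$ reappears inside $\eps(\delta)$ and the self-consistent solution forces an $O(\log n)$ additive term in $\ln(1/\delta')$. That can indeed be absorbed into a constant-factor inflation of $\sigma$, but only after a careful case split on the range of $\delta'$; the direct Lemma~\ref{lem:tail-bound} computation avoids this entirely.

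Two smaller remarks. First, your justification of the pointwise bound $|Z|\le\eps_0$ (``the whole shuffler transcript is a randomized post-processing of the single differing local randomizer's output'') is a little too quick: the transcript depends on all $n$ randomizer calls, not just the one applied to the differing element. The correct argument is to condition on the permutation $\pi$: for each fixed $\pi$, exactly one call $\Aldp[i]$ sees the differing input, all other calls and all subsequent calls are identically distributed conditionally, so the per-$\pi$ density ratio lies in $[e^{-\eps_0},e^{\eps_0}]$, and averaging over $\pi$ preserves that interval. Equivalently (and this is what the paper does), apply Theorem~\ref{thm:adp2rdp} to the reduced pair $(P,Q)$ of Theorem~\ref{mainanalyticalthm-previous} rather than to the shuffler itself; for that pair the bound $|Z|\le\eps_0$ is immediate from $P=\alpha P_0+(1-\alpha)Q_0$, $Q=(1-\alpha)P_0+\alpha Q_0$ with $\alpha=e^{\eps_0}/(e^{\eps_0}+1)$. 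Second, you state that $(Q_0(\eps_0),Q_1(\eps_0))$ is a post-processing of $(\threedistonemain{\eps_0},\threedisttwomain{\eps_0})$; the paper actually establishes the opposite direction (the new pair is a post-processing of the old). The direction does not matter for your argument---what you need is that using the FMT'20 pair costs at most a constant factor, which is true---but the way you stated it would not, on its own, imply the comparison you want.
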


\subsection{Improved Bounds for Specific Randomizers}

Our replacement Lemma 3.5 relies on a slightly different decomposition, and multinomial distributions that are over $\mathbb{N}^4$ rather than $\mathbb{N}^3$. This allows us to avoid the issue that arose in the proof of Lemma 3.5
Given an $\eps_0$-DP local randomizer $\mathcal{R}:\mathcal{D}\to\mathcal{S}$, and any $n+1$ inputs $x_1^0, x_1^1, x_2, \cdots, x_n \in\mathcal{D}$, suppose there exists $p\in[0,1/(e^{\eps_0}+1)]$ and $q\in[0, 1-2p]$ and distributions $\mathcal{Q}_1^0$, $\mathcal{Q}_1^1$, $\mathcal{Q}_1, \mathcal{Q}_2, \cdots, \mathcal{Q}_n$ such that
\begin{align}
\nonumber\mathcal{R}(x_1^0) &= e^{\eps}p\mathcal{Q}_1^0+p\mathcal{Q}_1^1+(1-p-e^{\eps_0}p)\mathcal{Q}_1,\\
\nonumber\mathcal{R}(x_1^1) &= p\mathcal{Q}_1^0+e^{\eps_0}p\mathcal{Q}_1^1+(1-p-e^{\eps_0}p)\mathcal{Q}_1\\
\forall i\in[2,n], \;\;\mathcal{R}(x_i) &= p\mathcal{Q}_1^0+p\mathcal{Q}_1^1+q\mathcal{Q}_1+(1-2p-q)\mathcal{Q}_i. \label{eq:decomposition2}
\end{align}
Corollary~\ref{decomposition} states that such a decomposition always exists with $q=0$. With a slight modification to that proof, we can show that such a decomposition always exists with $q\ge e^{-\eps}(1-p-e^{\eps}p)$. We begin by formally defining the distributions $\threedistqone{\eps_0}{p}{q}$ and $\threedistqtwo{\eps_0}{p}{q}$.
Define the random variable $Y_{p,q}$
\begin{equation}\label{individualreportsq}
Y_{p,q} = \begin{cases} 0 & \text{w.p.   } p\\ 1 & \text{w.p.   } p \\ 2 & \text{w.p. }q \\ 3 & \text{w.p.   } 1-2p-q \end{cases}
\end{equation}
For $b\in\{0,1\}$, to obtain a sample from $\threedistqb{\eps_0}{p}{q}$, sample one copy from $Y_{1,p}^b$ (as originally defined in eq.~\eqref{individualreports}) and $n-1$ copies of $Y_{p,q}$, then output $(n_0,n_1,n_2)$ where $n_0, n_1$ and $n_2$ are the total number of 0s, 1s and 2s, respectively.

\begin{lemma}[Replacement for Lemma 3.5]\label{shuffletobinoms2new}
For a domain $\mathcal{D}$, let $\Aldp[i]:\out[1]\times\cdots\times\out[i-1]\times\mathcal{D}\to\out[i]$ for $i\in[n]$ (where $\out[i]$ is the range space of $\Aldp[i]$, and $\out[i]$ is finite for all $i$) be a sequence of algorithms such that $\Aldp[i](z_{1:i-1}, \cdot)$ is an $\eps_0$-DP local randomizer for all values of auxiliary inputs $z_{1:i-1}\in\out[1]\times\cdots\times\out[i-1]$. Let $\shuffler:\mathcal{D}^n\to \out[1]\times\cdots\times \out[n]$ be the algorithm that given a dataset $x_{1:n}\in\mathcal{D}^n$, samples a permutation $\pi$ uniformly at random, then sequentially computes $z_i=\Aldp[i](z_{1:i-1}, x_{\pi(i)})$ for $i\in[n]$ and outputs $z_{1:n}$. Let $X_0$ and $X_1$ be two arbitrary neighboring datasets in $\D^n$. Let $(p^*, q^*)$ be such that for all $i\in[n]$ and $z_{1:i-1}\in\out[1]\times\cdots\times\out[i-1]$, $\Aldp[i](z_{1:i-1},\cdot)$ satisfies eqn~\eqref{eq:decomposition2} with $(p,q)$ such that $p<p^*$ and $q>q^*+2(p^*-p)$.
Then there exists a post-processing function $f$ such that $\shuffler(X_0)$ is distributed identically to $f(\threedistqone{\eps_0}{p^*}{q^*})$ and $\shuffler(X_1)$ is distributed identically to  $f(\threedistqtwo{\eps_0}{p^*}{q^*})$.
\end{lemma}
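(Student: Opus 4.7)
The plan is to follow the structure of the proof of the original Lemma~\ref{shuffletobinoms}, describing an explicit post-processing algorithm $f$ that, given a draw from $\threedistqb{\eps_0}{p^*}{q^*}$, produces a sample distributed as $\shuffler(X_b)$. The new decomposition~\eqref{eq:decomposition2} introduces a fourth category (``value $3$'') that is inaccessible to user $1$, which is precisely what eliminates the $b$-dependent probability of ``user $1$ lying in $K$'' that was the source of the error in the original proof.

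First, I would construct a randomized mapping $g(\cdot\,;p,q)\colon \zotf\to \zotf$ that converts draws under the canonical parameters $(p^*,q^*)$ to draws under the per-randomizer parameters $(p_t,q_t)$. Concretely, $g$ sends $0$ and $1$ to themselves with probability $p_t/p^*$ and to $2$ otherwise; leaves $2$ fixed; and sends $3$ to $3$ with the unique probability $\beta\in[0,1]$ that makes the total mass on $2$ equal to $q_t$. A short direct computation verifies that $g(Y_{p^*,q^*};p_t,q_t)\disteq Y_{p_t,q_t}$ and $g(Y_{1,p^*}^b;p_t,q_t)\disteq Y_{1,p_t}^b$ for both $b\in\{0,1\}$ using the same rule. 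The hypotheses $p_t\le p^*$ and $q_t\ge q^*+2(p^*-p_t)$ are exactly the conditions ensuring that all probabilities defining $g$ lie in $[0,1]$.

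Next, I would define the post-processing $f$ as follows. Given $(n_0,n_1,n_2)$, set $n_3:=n-n_0-n_1-n_2$, let $y\in\zotf^n$ be a uniformly random permutation of a vector with these counts, and draw a uniformly random injection from $\{t:y_t=3\}$ into $\{2,\ldots,n\}$, producing indices $j_t$. Then, sequentially for $t=1,\ldots,n$, let $y_t':=g(y_t;p_t,q_t)$ where $(p_t,q_t)$ are the parameters of the decomposition of $\Aldp[t](z_{1:t-1},\cdot)$, and sample $z_t$ from ${\mathcal{Q}_1^0}^{(t)}$, ${\mathcal{Q}_1^1}^{(t)}$, $\mathcal{Q}_1^{(t)}$, or $\mathcal{Q}_{j_t}^{(t)}$ according to $y_t'\in\{0,1,2,3\}$.

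To verify $f(\threedistqb{\eps_0}{p^*}{q^*})\disteq \shuffler(X_b)$, I would couple the two processes as in the original proof: in the true shuffling experiment, draw the hidden permutation $\pi$ first, have each position's user draw a mixture index from $Y_{1,p_t}^b$ (if it is user $1$) or $Y_{p_t,q_t}$ (otherwise), then sample $z_t$ from the chosen component. By the previous step, the mixture indices can equivalently be drawn from $Y_{1,p^*}^b$ and $Y_{p^*,q^*}$ and then post-processed by $g$. The main obstacle---and the precise point where the original argument failed---is to show that the joint distribution of $\bigl(y,\,\text{assignment of users to positions with } y_t=3\bigr)$ in the reparametrised true experiment agrees with the one constructed by $f$. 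A multinomial calculation, exploiting that $Y_{1,p^*}^b$ places probability $0$ on $3$, shows that the marginal law of $y$ coincides in the two experiments, and that conditional on $y$ the positions with $y_t=3$ are occupied by a uniformly random injection from $\{2,\ldots,n\}$ into them, independently of $b$. This last independence is exactly what failed when value $2$ played the role now played by value $3$; it works here precisely because user $1$ cannot emit value $3$, so the identity of the user at each position with $y_t=3$ is immaterial to the analysis beyond picking a uniform assignment.
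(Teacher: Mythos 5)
Your proposal follows the same approach as the paper's proof: you construct the same post-processing map $g$ (with the identical probability $(1-q_t-2p_t)/(1-q^*-2p^*)$ for keeping a $3$), the same algorithm $f$, and isolate the same key observation — user $1$ emits $3$ with probability zero, so the assignment of users in $\{2,\ldots,n\}$ to positions labelled $3$ is a uniform injection independent of $b$, which is exactly what repairs the flaw in Lemma~3.5. The argument is correct and matches the paper's at the same level of detail.
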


The proof of Lemma~\ref{shuffletobinoms2new} is very similar to the (flawed) proof of Lemma 3.5. The key difference is that we only need to assign identities to users who output ``3'', and user 1 never outputs ``3''.

\begin{algorithm}
\caption{Post-processing function, $f$} \label{postprocessing2}
  {\textbf{Input:} $x^0_1,x^1_1,x_2,\ldots, x_n$; $(n_0, n_1, n_2)\in\mathbb{N}^3$}\\
  {Sample $y \in \{0,1,2,3\}^n$ to be a random permutation of a vector with $n_0$ 0s, and $n_1$ 1s, $n_2$ 2s, and $n-n_0-n_1-n_2$ 3s.}\\
  {$J := \emptyset$,}\\
  {Let $K = \{i\in[n]\;|\; y_i=3\}$}\\
  \For{$i\in K$}{
  {Let $j_i$ be a randomly and uniformly chosen element of $[2:n]\setminus J$}\\
  {$J := J \cup \{j_i\}$}
  }
  \For{$t\in[n]$}{
  {$y_t' = g(y_t, \decompprob{\Aldp[t](z_{1:t-1})}, q_{\Aldp[t](z_{1:t-1})})$}\\
  {Sample $z_t$ from $ \begin{cases} {\mathcal{Q}_1^0}^{(t)}(z_{1:t-1}) & \text{if } y_i'=0; \\
                                    {\mathcal{Q}_1^1}^{(t)}(z_{1:t-1}) & \text{if } y_i'=1;  \\
                                    \mathcal{Q}_{1}^{(t)}(z_{1:t-1}) & \text{if } y_i'=2;  \\
                                    \mathcal{Q}_{j_t}^{(t)}(z_{1:t-1}) & \text{if } y_i'=3. \end{cases}$}
  
  }
  {\textbf{return} $z_1, \ldots, z_n$}
\end{algorithm}

\begin{proof}
Let $X_0=\{x_1^0,x_2,\cdots,x_n\}$ and $X_1=\{x_1^1,x_2,\cdots,x_n\}$ be two neighbouring datasets in $\mathcal{D}^n$. 
A description of the post-processing function is given in Algorithm~\ref{postprocessing2}. We claim that for $b\in\{0,1\}$, $f(\threedistqb{\eps_0}{p^*}{q^*})\disteq \shuffler(X_b)$. The key observation is that by assumption we have decompositions such that for all $t\in[n]$ and $z_{1:t-1}$, there exists $(\decompprob{\Aldp[t](z_{1:t-1})}, q_{\Aldp[t](z_{1:t-1})})$ such that $\decompprob{\Aldp[t](z_{1:t-1})}\le p^*$ and $q_{\Aldp[t](z_{1:t-1})}\ge q^*+2(p^*-\decompprob{\Aldp[t](z_{1:t-1})})$ and letting $p_t:=\decompprob{\Aldp[t](z_{1:t-1})}$ and $q_t:=q_{\Aldp[t](z_{1:t-1})}$:
\begin{align*}
\forall b\in\{0,1\}, \;\;\mathcal{R}^{(t)}(z_{1:t-1}, x_1^b) &= e^{\eps}p_t{\mathcal{Q}_1^b}^{(t)}(z_{1:t-1})+p_t{\mathcal{Q}_1^{1-b}}^{(t)}(z_{1:i-1}) + (1-p_t-e^{\eps_0}p_t)\mathcal{Q}_1^{(t)}(z_{1:i-1}),
\end{align*}

\begin{align*}\forall i\in[2,n], \;\;\mathcal{R}^{(t)}(z_{1:t-1}, x_i) &= p_t{\mathcal{Q}_1^0}^{(t)}(z_{1:i-1})+p_t{\mathcal{Q}_1^1}^{(t)}(z_{1:i-1})
+q_t\mathcal{Q}_1^{(t)}(z_{1:i-1})+(1-q_t-2p_t)\mathcal{Q}_i^{(t)}(z_{1:i-1}).
\end{align*}

The mixture coefficients of the random variables $\Aldp[t]$ do not necessarily match those of ${Y_{1,p^*}^b}$, and $Y_{p^*,q^*}$. However, for any $p<p^*$ and $q>q^*+2(p^*-p)$ we can define a post-processing function   $g(\cdot,p,q)$ such that $g({Y_{1,p^*}^b}, p,q)= {Y_{1,p}^b}$, and $g(Y_{p^*,q^*}, p, q)=Y_{p,q}$. This function is given by $g(0,p,q)=0$ with probability $p/p^*$ and 2 otherwise, $g(1,p,q)=1$ with probability $p/p^*$ and 2 otherwise, $g(2,p,q)=2$ with probability 1, and $g(3,p,q)=3$ with probability $(1-q-2p)/(1-q^*-2p^*)$ and 2 otherwise.

Let $y\in\{0,1,2,3\}^n$ be a permutation of the local reports given by 1 copy of $Y_{1,p^*}^b$ and $n-1$ copies of $Y_{p^*,q^*}$; this is equivalent to a sample from $\threedistqb{\eps_0}{p^*}{q^*}$. Given the hidden permutation $\pi$, we can generate a sample from $\shuffler(X_b)$ by sequentially transforming $y_t'=g(y_t, \decompprob{\Aldp[t](z_{1:t-1})}, q_{\Aldp[t](z_{1:t-1})})$ to obtain the correct mixture components, then sampling from the corresponding mixture component. The difficulty then lies in the fact that conditioned on a particular instantiation $y=v$, the permutation $\pi|_{y=v}$ is not independent of $b$.

The first thing to note is that if $v_t=0, 1$ or $2$, then the corresponding mixture component ${\mathcal{Q}_1^0}^{(t)}(z_{1:t-1}), {\mathcal{Q}_1^1}^{(t)}(z_{1:t-1})$ or ${\mathcal{Q}_1}^{(t)}(z_{1:t-1})$ (respectively), is independent of $\pi$. Therefore, in order to do the appropriate post-processing, it suffices to know the permutation $\pi$ restricted to the set of users who sampled $3$, $K=\pi(\{i: y_i = 3\})$. The set $K$ of users who select 3 is independent of $b$ since $Y_{1,p^*}^0$ and $Y_{1,p^*}^1$ both have zero probability of sampling $3$. The probability of being included in $K$ is identical for each $i\in[2 \colon n]$, and
any ordering of the users in $K$ is equally likely, so we can choose a random assignment. Now that we have an assignment for $K$, we can sample from the correct mixture components as desired.
\end{proof}

\subsubsection{$k$-randomized response}
Lemma~\ref{shuffletobinoms2new} allows us give an upper bound for \texttt{kRR}.

\begin{theorem}[Replacement for Theorem 5.2]\label{kRRthmnew} For a domain $\mathcal{D}$, let $\Aldp[i]:\out[1]\times\cdots\times\out[i-1]\times\mathcal{D}\to\out[i]$ for $i\in[n]$ (where $\out[i]$ is the range space of $\Aldp[i]$) be a sequence of algorithms such that $\Aldp[i](z_{1:i-1}, \cdot)$ is a $\eps_0$-DP local randomizer for all values of auxiliary inputs $z_{1,i-1}\in\out[1]\times\cdots\times\out[i-1]$. Let $\shuffler:\mathcal{D}^n\to\out[1]\times\cdots\times \out[n]$ be the algorithm that given a dataset $x_{1:n}\in\mathcal{D}^n$, samples a uniformly random permutation $\pi$, then sequentially computes $z_i=\Aldp[i](z_{1:i-1}, x_{\pi(i)})$ for $i\in[n]$ and outputs $z_{1:n}$. Assume that for some $k\in\mathbb{N}$ we have that for all $i\in[n]$, there exists a deterministic function $f^{(i)}:\out[1]\times\cdots\times\out[i-1]\times\mathcal{D}\to[k]$ such that $\Aldp[i](z_{1:i-1}, x)=\kRR(f^{(i)}(z_{1:i-1},x))$. 
Then for any distance measure $D$ that satisfies the data processing inequality, and any two neighbouring datasets $X_0$ and $X_1$, \[D(\shuffler(X_0), \shuffler(X_1))\le D\left(\threedistqone{\eps_0}{\frac{1}{e^{\eps_0}+k-1}}{\frac{k-2}{e^{\eps_0}+k-1}}, \threedistqtwo{\eps_0}{\frac{1}{e^{\eps_0}+k-1}}{\frac{k-2}{e^{\eps_0}+k-1}}\right)\]
\end{theorem}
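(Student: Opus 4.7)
The plan is to reduce the claim to Lemma~\ref{shuffletobinoms2new} applied with parameters $p^* = \frac{1}{e^{\eps_0}+k-1}$ and $q^* = \frac{k-2}{e^{\eps_0}+k-1}$, and then close the argument by the data processing inequality. The entire technical burden lies in exhibiting, for every $i \in [n]$ and every transcript $z_{1:i-1}$, a decomposition of $\Aldp[i](z_{1:i-1},\cdot) = \kRR \circ f^{(i)}(z_{1:i-1},\cdot)$ of the form~\eqref{eq:decomposition2} with parameters $(p,q) = (p^*, q^*)$.

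Fix $i$ and $z_{1:i-1}$, and write $a = f^{(i)}(z_{1:i-1}, x_1^0)$, $b = f^{(i)}(z_{1:i-1}, x_1^1)$, and $c_j = f^{(i)}(z_{1:i-1}, x_j)$ for $j \ge 2$, where $x_j$ is the common value of $X_0$ and $X_1$ at position $j$. In the principal case $a \neq b$, I will take
\[\mathcal{Q}_1^0 = \delta_a, \qquad \mathcal{Q}_1^1 = \delta_b, \qquad \mathcal{Q}_1 = \unif{[k]\setminus\{a,b\}}, \qquad \mathcal{Q}_j = \delta_{c_j}.\]
Since $q^* = (k-2)p^*$ and $(e^{\eps_0}+k-1)p^* = 1$, the mixture $p^*\delta_a + p^*\delta_b + q^* \cdot \unif{[k]\setminus\{a,b\}}$ places mass exactly $p^*$ on every element of $[k]$, and the residual coefficient is $1 - 2p^* - q^* = (e^{\eps_0}-1)p^*$. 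Because $\kRR(y)$ is by definition uniform mass $p^*$ on $[k]$ together with an extra $(e^{\eps_0}-1)p^*$ on $y$, each of the three equations in~\eqref{eq:decomposition2} collapses to this identity with $y$ equal to $a$, $b$, and $c_j$ in turn, and the verification is a one-line substitution.

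With the uniform decomposition $(p^*, q^*)$ in hand, Lemma~\ref{shuffletobinoms2new} produces a post-processing function $f$ with $\shuffler(X_b) \disteq f(\threedistqb{\eps_0}{p^*}{q^*})$ for $b \in \{0,1\}$. The data processing inequality then yields
\[D(\shuffler(X_0), \shuffler(X_1)) \le D\!\left(\threedistqone{\eps_0}{p^*}{q^*}, \threedistqtwo{\eps_0}{p^*}{q^*}\right),\]
which is the claimed bound once the values of $p^*$ and $q^*$ are substituted.

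The main obstacle I foresee is the degenerate case $a = b$: subtracting the two user-$1$ equations of~\eqref{eq:decomposition2} forces $\mathcal{Q}_1^0 = \mathcal{Q}_1^1$, and a short computation shows that no $(p, q)$ satisfying the lemma's boundary constraints admits a valid decomposition with the target $(p^*, q^*)$. Intuitively this case is harmless because whenever $a = b$ the randomizer $\Aldp[i](z_{1:i-1},\cdot)$ produces the same output distribution on $x_1^0$ and $x_1^1$ and therefore contributes nothing to distinguishing $\shuffler(X_0)$ from $\shuffler(X_1)$ along this transcript. The cleanest way to fold this into the framework is to permit Lemma~\ref{shuffletobinoms2new} to accept coincident $\mathcal{Q}_1^0 = \mathcal{Q}_1^1$, under which Algorithm~\ref{postprocessing2} simply draws the user-$1$ output from the common distribution whenever the user-$1$ mixture indicator is selected; this adjustment leaves the counting argument controlling the assignment of ``type $3$'' indicators to users $2,\ldots,n$ intact.
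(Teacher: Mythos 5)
Your main argument matches the paper's proof exactly: take $\mathcal{Q}_1^0 = \delta_a$, $\mathcal{Q}_1^1 = \delta_b$, $\mathcal{Q}_1 = \unif{[k]\setminus\{a,b\}}$, $\mathcal{Q}_j = \delta_{c_j}$, verify that the resulting mixture is the uniform-plus-spike form of $\kRR$, and then invoke Lemma~\ref{shuffletobinoms2new} followed by the data processing inequality. The paper's version is the same construction phrased via $\mathcal{U}(x)$ and $\mathcal{U}$; the verification is indeed the one-line substitution you describe.

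Your instinct about the degenerate case $a=b$ is correct, and it is a genuine gap that the paper's proof shares: when $f^{(i)}(z_{1:i-1},x_1^0) = f^{(i)}(z_{1:i-1},x_1^1) = a$, the set $[k]\setminus\{a,b\}$ has $k-1$ elements, so the coefficient $q^* = (k-2)p^*$ no longer makes the central mixture uniform with mass $p^*$ per element, and the first two equations of~\eqref{eq:decomposition2} cannot hold with $(p^*,q^*)$ as written. However, your proposed repair — allowing $\mathcal{Q}_1^0 = \mathcal{Q}_1^1$ inside Lemma~\ref{shuffletobinoms2new} — does not by itself rescue the decomposition. With $\mathcal{Q}_1^0=\mathcal{Q}_1^1$, subtracting the first equation from the third forces $\mathcal{Q}_j = \mathcal{Q}_1^0 + \delta_{c_j} - \delta_a$; for $\mathcal{Q}_j$ to be a probability distribution when some $c_j \ne a$ one needs $\mathcal{Q}_1^0(a)\ge 1$, i.e.\ $\mathcal{Q}_1^0=\delta_a$, and then the first equation asserts $\kRR(a)(a) = (e^{\eps_0}+1)p^*$, contradicting $\kRR(a)(a)=e^{\eps_0}p^*$. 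So no choice of $\mathcal{Q}_1^0=\mathcal{Q}_1^1$ produces a valid $(p^*,q^*)$-decomposition in this case (and one can check, as you partly did, that shrinking $p$ runs into the constraint $q > q^* + 2(p^*-p)$). Closing the gap seems to require a separate monotonicity argument — that a transcript on which the differing inputs $x_1^0,x_1^1$ are mapped to the same $\kRR$ input can only make $\shuffler(X_0)$ and $\shuffler(X_1)$ harder to distinguish than the non-degenerate worst case — rather than a pointwise decomposition.

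Two minor additional points: the condition in Lemma~\ref{shuffletobinoms2new} is stated with strict inequalities $p<p^*$, $q>q^*+2(p^*-p)$, which technically excludes the equality $(p,q)=(p^*,q^*)$ used here; this should be read as $\le,\ge$. And for $k=2$ the set $[k]\setminus\{a,b\}$ is empty, but its coefficient $q^*=0$ vanishes, so your verification still goes through with the natural convention.
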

\begin{proof}
Let $X_0$ and $X_1$ be neighbouring datasets. Given $i$ and $z_{1:i-1}$, if $\Aldp[i](z_{1:i-1}, x)=\kRR(f^{(i)}(z_{1:i-1},x))$, let $\mathcal{U}(x)$ be the random variable that outputs $f^{(i)}(z_{1:i-1},x)$ with probability 1, $\mathcal{U}$ be the uniform distribution over $\mathcal{D}\backslash \{f^{(i)}(z_{1:i-1},x_1^0), f^{(i)}(z_{1:i-1},x_1^1)\}$.
Then, letting $p=\frac{1}{e^{\eps_0}+k-1}$ and $q=\frac{k-2}{e^{\eps_0}+k-1}$
\begin{align*}
\Aldp[i](z_{1:i-1}, x_1^b) &= e^{\eps_0}p\mathcal{U}(x_0^b)+p\mathcal{U}(x_0^{1-b})+(1-p-e^{\eps_0}p))\mathcal{U}\\
\forall i\in[2:n], \Aldp[i](z_{1:i-1}, x_i) &= p\mathcal{U}(x_0^b)+p\mathcal{U}(x_0^{1-b})+q\mathcal{U}+(1-q-2p)\mathcal{U}(x_i)
\end{align*}
Hence by Lemma~\ref{shuffletobinoms2new} and the data processing inequality, $D(\shuffler(X_0),\shuffler(X_1))\le D(\threedistqone{\eps_0}{p}{q}, \threedistqtwo{\eps_0}{p}{q})$ as required.
\end{proof}

In Figure~\ref{kRRfig} we can see that the improved bounds for $\kRR$ are indeed tighter than the general bound (for all $k$, $\texttt{kRR}\in\extremalset{\eps_0}$), and the privacy guarantee on the output of the shuffler improves as $k$ increases, as expected. We also compare to the best known bound from prior work \cite{Balle:2019}. We can see that the upper and lower bounds presented in Theorem~\ref{kRRlowerbound} and Theorem~\ref{kRRthm} are indeed numerically close.

\begin{figure}
    \centering
    \includegraphics[scale=0.7]{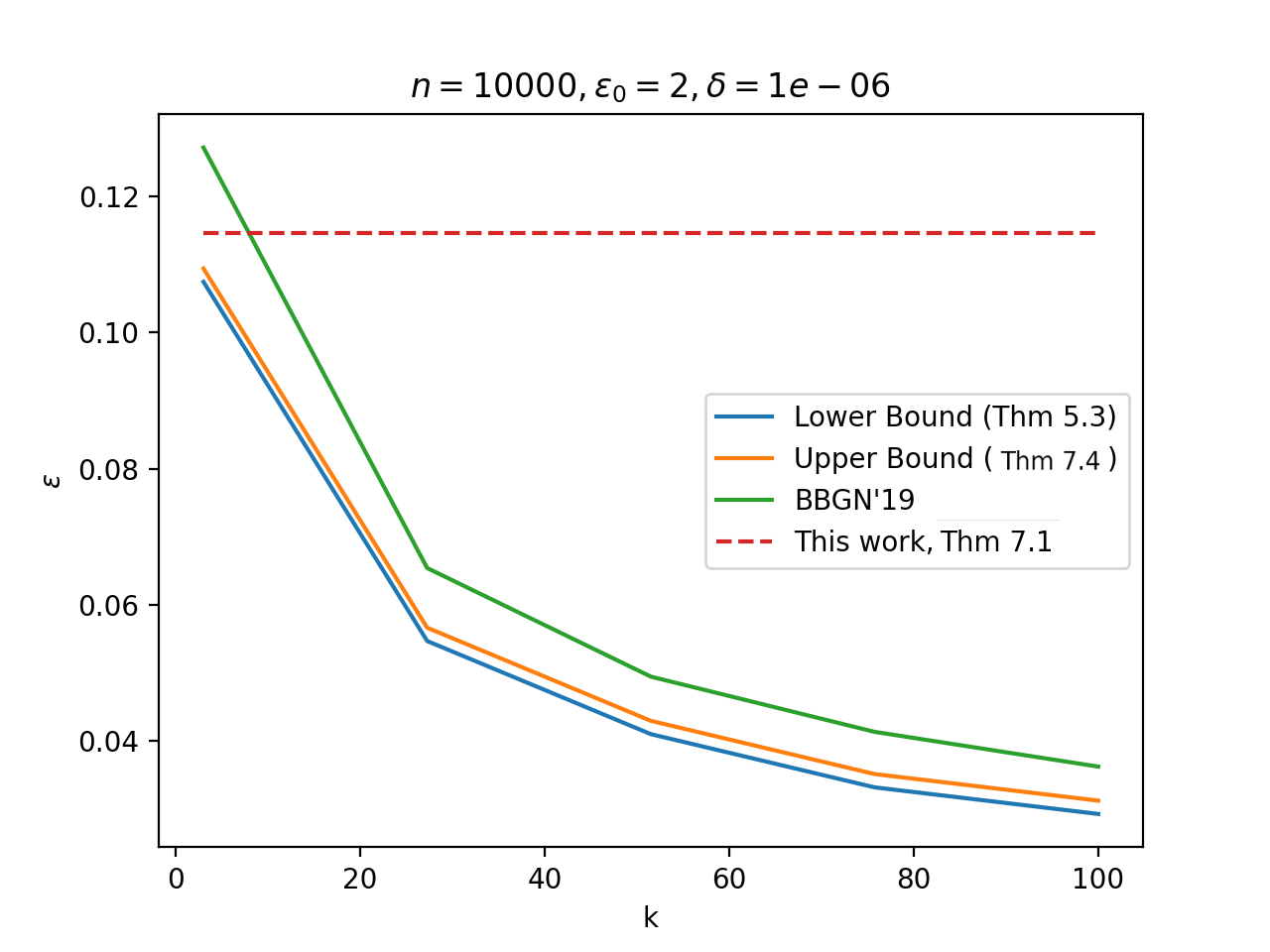}
    \caption{Privacy amplification by shuffling for $\kRR$ as a function of $k$ using bound from Theorem~\ref{kRRthm}. The horizontal dashed line corresponds to the general upper bound from Theorem~\ref{mainanalyticalthm}. }
    \label{kRRfig}
\end{figure}

\newpage

\printbibliography
\newpage

\appendix
\section{Proofs for Section~\ref{generalamplificationreduction}}\label{appendixreduction}

\begin{lemma}\label{hypercube}
Let $A=[1,e^{\eps}]^k$ and $B=\{1,e^{\eps}\}^k$. Every vector in $A$ can be written as a convex combination of vectors in $B$.
\end{lemma}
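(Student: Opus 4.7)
The plan is to give an explicit constructive proof by tensorizing the one-dimensional decomposition. Given any $a = (a_1, \ldots, a_k) \in A$, I would first write each coordinate as a convex combination of the two endpoints: set $\lambda_i = \frac{e^{\eps} - a_i}{e^{\eps} - 1} \in [0,1]$, so that $a_i = \lambda_i \cdot 1 + (1-\lambda_i) \cdot e^{\eps}$.

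Next, I would index the $2^k$ vertices of $B$ by subsets $S \subseteq [k]$, where the vertex $v^S \in B$ has $v^S_j = e^{\eps}$ if $j \in S$ and $v^S_j = 1$ otherwise. Define the candidate coefficient
\[c_S = \prod_{i \in S}(1-\lambda_i)\,\prod_{i \notin S} \lambda_i.\]
Clearly $c_S \geq 0$, and by expanding the product, $\sum_{S \subseteq [k]} c_S = \prod_{i=1}^{k}\bigl(\lambda_i + (1-\lambda_i)\bigr) = 1$, so $\{c_S\}$ is a valid distribution over the vertices.

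The final step is to verify that the resulting convex combination recovers $a$. For each coordinate $j$, I would split the sum over $S$ into those containing $j$ and those not, and observe that marginalizing the product in $c_S$ over the other coordinates yields $\sum_{S \ni j} c_S = 1-\lambda_j$ and $\sum_{S \not\ni j} c_S = \lambda_j$. Hence
\[\sum_{S \subseteq [k]} c_S \, v^S_j = (1-\lambda_j)\,e^{\eps} + \lambda_j \cdot 1 = a_j,\]
by the choice of $\lambda_j$. This completes the decomposition. There is no real obstacle here: the result is just the standard fact that a product of intervals is the convex hull of its vertices, and the tensor-product distribution over vertices is the natural explicit witness.
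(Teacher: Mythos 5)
Your proof is correct. Note that the paper itself does not actually supply a proof of this lemma: it is stated in the appendix without justification, treated as the standard fact that the box $[1,e^{\eps}]^k$ is the convex hull of its $2^k$ vertices. Your tensor-product construction — writing each coordinate $a_i = \lambda_i\cdot 1 + (1-\lambda_i)e^{\eps}$, defining $c_S = \prod_{i\in S}(1-\lambda_i)\prod_{i\notin S}\lambda_i$, and checking via marginalization that $\sum_S c_S v^S = a$ — is a clean and fully explicit witness for this fact, and every step checks out (nonnegativity, normalization, and the coordinatewise marginals). So you have filled a gap the paper chose to leave implicit, using exactly the natural product-measure decomposition one would expect; there is no divergence of approach to report, only that yours is spelled out and the paper's is omitted.
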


\begin{lemma}\label{sizetwopdf-app}\cite[Lemma IV.4]{ye2018optimal}
If $\mathcal{R}:\mathcal{D}\to\mathcal{S}$ is an $\eps$-DP local randomizer, and both $\mathcal{D}$ and $\mathcal{S}$ are finite, then there exists a finite output space $\mathcal{Z}$, a local randomizer $\mathcal{R}':\mathcal{D}\to\mathcal{Z}$, and a post-processing function $\Phi:\mathcal{Z}\to\mathcal{S}$ such that $\Phi\circ\mathcal{R'}=\mathcal{R}$ and for all $z\in\mathcal{S}$, there is a $p_z\in[0,1/(e^{\eps_0}+1)]$ so that every $x\in\mathcal{D}$ satisfies $\Pr(\mathcal{R}'(x)=z)\in\{p_z, e^{\eps_0}p_z\}$.
\end{lemma}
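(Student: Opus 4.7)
The plan is to build $\mathcal{R}'$ in two stages. First, I will apply Lemma~\ref{hypercube} column by column to the conditional probability matrix of $\mathcal{R}$ in order to produce an intermediate randomizer $\mathcal{R}''$ whose probability entries already take only two values per output but whose base probabilities $q_{z''}$ may be too large. Second, I will split each output of $\mathcal{R}''$ into enough identical copies so that every resulting base probability drops below $1/(e^{\eps}+1)$. Throughout, the relevant post-processing will just collapse the added labels.

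For the first stage, fix $s \in \mathcal{S}$ and let $v_s \in \mathbb{R}^{|\mathcal{D}|}$ denote the vector with $v_s(x) = \Pr(\mathcal{R}(x)=s)$. If $v_s \equiv 0$ I discard $s$; otherwise the $\eps$-DP condition forbids any entry of $v_s$ from being zero while another is positive, so $p_s^{\min} \eqdef \min_{x} v_s(x) > 0$ and the rescaled vector $v_s/p_s^{\min}$ lies in $[1,e^{\eps}]^{|\mathcal{D}|}$. Lemma~\ref{hypercube} then yields nonnegative weights $\{\lambda_T^s\}_{T \subseteq \mathcal{D}}$ summing to one with $v_s = p_s^{\min}\sum_T \lambda_T^s w_T$, where $w_T(x) = e^{\eps}$ for $x \in T$ and $w_T(x)=1$ for $x \notin T$. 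Define $\mathcal{Z}'' = \{(s,T) : \lambda_T^s>0\}$ and let $\mathcal{R}'':\mathcal{D}\to\mathcal{Z}''$ be the randomizer given by $\Pr(\mathcal{R}''(x)=(s,T)) = \lambda_T^s p_s^{\min} w_T(x)$, with post-processing $\Phi''(s,T)=s$. Summing over $T$ and $s$ shows these probabilities add to one for every $x$, and $\Phi''\circ\mathcal{R}'' = \mathcal{R}$. For each $z''=(s,T)$, the probability $\Pr(\mathcal{R}''(x)=z'')$ takes the value $q_{z''} \eqdef \lambda_T^s p_s^{\min}$ when $x \notin T$ and $e^{\eps}q_{z''}$ when $x \in T$, as desired in the two-value pattern.

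The one genuine obstacle is that $q_{z''}$ can exceed $1/(e^{\eps}+1)$, and this is what forces the second stage. I resolve it with a routine splitting trick: for each $z'' \in \mathcal{Z}''$, set $K_{z''} = \lceil (e^{\eps}+1)\, q_{z''} \rceil$ (which is finite since $q_{z''}\le 1$) and introduce $K_{z''}$ fresh labels $z''_1,\ldots,z''_{K_{z''}}$. Let $\mathcal{R}'$ simulate $\mathcal{R}''$ and then, conditioned on the outcome being $z''$, output one of the labels $z''_j$ uniformly at random; extend $\Phi$ by $\Phi(z''_j) = \Phi''(z'') = s$. Then $\Pr(\mathcal{R}'(x)=z''_j) \in \{q_{z''}/K_{z''},\, e^{\eps}q_{z''}/K_{z''}\}$, so setting $p_{z''_j} = q_{z''}/K_{z''}$ yields $p_{z''_j}\le 1/(e^{\eps}+1)$ by the choice of $K_{z''}$, and $\Phi\circ\mathcal{R}' = \mathcal{R}$ as required. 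The final output space is finite because $\mathcal{D}$ and $\mathcal{S}$ are finite, only finitely many $\lambda_T^s$ are nonzero (subsets of $\mathcal{D}$), and each $K_{z''}$ is bounded by $\lceil e^{\eps}+1\rceil$.
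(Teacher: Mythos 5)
Your proof is correct and rests on the same core idea as the paper's—decomposing the likelihood‑ratio vectors via Lemma~\ref{hypercube} (a Carathéodory argument on the hypercube $[1,e^\eps]^{|\mathcal{D}|}$)—but it departs from the paper's write‑up in two ways, one cosmetic and one substantive. Cosmetically, you index outputs by pairs $(s,T)$ and use a deterministic $\Phi$, whereas the paper merges the $s$‑contributions into a single weight $p_z=\sum_s p_s\lambda_{s,z}$ on the corner $z$ and uses the randomized post‑processing $\Pr(\Phi(z)=s)=p_s\lambda_{s,z}/p_z$. Both constructions work; yours is a bit more wasteful in output alphabet size but is arguably cleaner.

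The substantive difference is your second ``splitting'' stage, and it is a genuine improvement. The paper's proof establishes the two‑value pattern $\Pr(\mathcal{R}'(x)=z)\in\{p_z,e^{\eps_0}p_z\}$ but never verifies the quantitative claim $p_z\le 1/(e^{\eps_0}+1)$. That claim is in fact false for the paper's construction: take the corner $z=(1,\dots,1)$ (the set $T=\emptyset$), which arises with positive weight whenever some $s$ has likelihood‑ratio vector not identically $e^{\eps}$; for a near‑constant randomizer (e.g.\ $\Pr(\mathcal{R}(x)=s^*)=1$ for all $x$) one gets $p_z=1$. Your splitting into $K_{z''}=\lceil(e^\eps+1)q_{z''}\rceil$ copies is exactly the right patch and brings every base probability below $1/(e^\eps+1)$. (For context, the downstream use in Corollary~3.4 only needs $\sum_{z\in L}p_z\le 1/(e^{\eps_0}+1)$, which follows from normalization alone, so the per‑$z$ bound is not load‑bearing there—but it is stated in the lemma, and your argument is the one that actually delivers it.)
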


\begin{proof}[Proof of Lemma~\ref{sizetwopdf-app}]
Let $\mathcal{D}=[k]$ and $\mathcal{Z}=\{1,e^{\eps}\}^k$ so we can use the elements of $\mathcal{D}$ to index the coordinates of $\mathcal{Z}$. For all $s\in\mathcal{S}$, let $p_s=\min\{\Pr(\mathcal{R}(x)=s)\;|\; x\in\mathcal{D}\}$. Differential privacy implies that the vector $(1/p_s)[\Pr(\mathcal{R}(x_1), \cdots, ,\mathcal{R}(x_k)]\in[1,e^{\eps}]^k$.
By Lemma~\ref{hypercube}, there exists $\{\lambda_{s,z}\;|\; s\in\mathcal{S}, z\in\mathcal{Z}\}$ such that $\sum_{z\in\mathcal{Z}}\lambda_{s,z}=1$ and \begin{align*}
\frac{1}{p_s}\Pr(\mathcal{R}(x)=s)&=\sum_{z\in\mathcal{Z}} \lambda_{s,z} z_x.
\end{align*}
Let $p_z=\sum_{s\in\mathcal{S}}p_s\lambda_{s,z}$ and write
\[\Pr(\mathcal{R}(x)=s)=\sum_{z\in\mathcal{Z}} \left(\frac{p_s\lambda_{s,z}}{p_z}\right) p_z z_x.\]
Define $\mathcal{R'}:\mathcal{D}\to\mathcal{Z}$ by $\Pr(\mathcal{R'}(x)=z)=p_z z_x$. This is well-defined since \[\sum_{z\in\mathcal{Z}}p_z z_x = \sum_{z\in\mathcal{Z}}\sum_{s\in\mathcal{S}}p_s\lambda_{s,z}z_x = \sum_{s\in\mathcal{S}}p_s\sum_{z\in\mathcal{Z}}\lambda_{s,z}z_x = \sum_{s\in\mathcal{S}}\Pr(\mathcal{R}(x)=s)=1.\]
Define $\Pr(\Phi(z)=s)=\frac{p_s\lambda_{s,z}}{p_z}$. Note that $\phi$ is well-defined since for every $z\in\mathcal{Z}$, $\sum_{s\in\mathcal{S}}\frac{p_s\lambda_{s,z}}{p_z}=1$, by the definition of $p_z$. Therefore, $\Phi\circ\mathcal{R}'=\mathcal{R}$. Finally, since each $z\in\{1,e^{\eps}\}^k$, we have the final claim that $\Pr(\mathcal{R'}(x)=z)\in\{p_z, e^{\eps_0}p_z\}.$
\end{proof}

The proof of Theorem~\ref{thm:shuffling4adp} relies on the following lemma from \citep{FeldmanMT:2020}.

\begin{lemma}\cite[Lemma A.1]{FeldmanMT:2020}\label{binomials-aux} Let $p\in(0,1/2]$, $\delta >0$, and $n\in\mathbb{N}$ be such that $n~\ge~ \frac{8\ln(2/\delta)}{p}$, and let \[\eps=\ln\left(1+\frac{\sqrt{32\ln(4/\delta)}}{\sqrt{pn}}+\frac{4}{pn}\right).\]
Consider the process where we sample $C\sim \bin(n-1, 2p)$ and $A\sim\bin(C, 1/2)$. Let $P=(A+1,C-A)$ and $Q=(A,C-A+1)$, then
\[\Pr_{(a,c)\sim P}\left(-\eps\le\ln\frac{\Pr(P=(a,c))}{\Pr(Q=(a,c))}\le \eps\right)\ge 1-\delta \text{ and } \Pr_{(a,c)\sim Q}\left(-\eps\le\ln\frac{\Pr(P=(a,c))}{\Pr(Q=(a,c))}\le \eps\right)\ge 1-\delta.\]
In particular, $P$ and $Q$ are $(\eps, \delta)$-indistinguishable.
\end{lemma}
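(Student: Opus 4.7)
The plan is to compute the privacy loss random variable explicitly and then control it via standard binomial concentration inequalities. First, I would observe that both $P$ and $Q$ are supported on pairs $(a,c)\in\mathbb{N}^2$ with $a+c=C+1$, and that $\Pr(P=(a,c))$ and $\Pr(Q=(a,c))$ differ only in the conditional probability of $A$ given $C=a+c-1$. Concretely, $\Pr(P=(a,c)) = \Pr(C=a+c-1)\binom{C}{a-1}2^{-C}$ and $\Pr(Q=(a,c)) = \Pr(C=a+c-1)\binom{C}{a}2^{-C}$, so the ratio simplifies to
\[
\frac{\Pr(P=(a,c))}{\Pr(Q=(a,c))} \;=\; \frac{\binom{C}{a-1}}{\binom{C}{a}} \;=\; \frac{a}{C-a+1} \;=\; \frac{a}{c}.
\]
Thus the privacy loss reduces to $\ln(a/c)$, which depends only on the output coordinates and not on the sampling hierarchy.

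Next, under $P$ we have $(a,c)=(A+1,C-A)$, so controlling $|\ln(a/c)|\le\eps$ amounts to showing $A+1$ is close to $C-A$, i.e.\ $|2A-C+1|$ is small relative to $c=C-A$. Under $Q$ the analogous quantity is $|2A-C-1|$ relative to $c=C-A+1$. I would handle both by the same two-step concentration argument: (i) a lower tail bound on $C$, and (ii) a Hoeffding bound on $2A-C$ conditional on $C$.

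For step (i), since $C\sim\Bin(n-1,2p)$ has mean $(n-1)\cdot 2p\ge np$, a multiplicative Chernoff bound gives $\Pr[C<np]\le\exp(-np/8)\le \delta/2$, using the hypothesis $n\ge 8\ln(2/\delta)/p$. For step (ii), conditional on $C$, writing $2A-C=\sum_{i=1}^C\xi_i$ with $\xi_i\in\{-1,+1\}$ i.i.d.\ mean zero, Hoeffding's inequality gives $\Pr\bigl[|2A-C|>\sqrt{2C\ln(4/\delta)}\,\bigl|\,C\bigr]\le\delta/2$. By a union bound, both events hold simultaneously with probability at least $1-\delta$ under $P$ (and by the same calculation under $Q$).

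On this good event, $C\ge np$ and $|2A-C|\le\sqrt{2C\ln(4/\delta)}$, so $c\ge (C-\sqrt{2C\ln(4/\delta)})/2$, and in particular $c\ge np/4$ under the assumed range of $n$. A direct algebraic bound then gives
\[
\left|\frac{a}{c}-1\right| \;=\; \frac{|a-c|}{c} \;\le\; \frac{|2A-C|+1}{c} \;\le\; \frac{\sqrt{32\ln(4/\delta)}}{\sqrt{pn}} + \frac{4}{pn},
\]
so $|\ln(a/c)|\le\eps$ with $\eps$ as in the statement. The main obstacle is a bookkeeping one: tracking the constants carefully through the two concentration steps so that the final ratio bound matches the stated expression $\ln(1+\sqrt{32\ln(4/\delta)/(pn)}+4/(pn))$ rather than a loose constant, and handling the small but nonzero asymmetry between the $P$-calculation (where $a=A+1$) and the $Q$-calculation (where $c=C-A+1$). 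The conclusion that $P$ and $Q$ are $(\eps,\delta)$-indistinguishable then follows because on the failure event of probability at most $\delta$ the hockey-stick mass is bounded by $\delta$, while on the good event the pointwise log-ratio is in $[-\eps,\eps]$.
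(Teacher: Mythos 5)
The paper does not contain its own proof of this lemma; it is quoted directly from \cite{FeldmanMT:2020}. That said, your plan---compute the privacy loss ratio in closed form as $a/c$, then control $C$ by a lower-tail Chernoff bound and $2A-C$ by Hoeffding conditionally on $C$---is the natural route and is essentially the argument used in the cited reference; the simplification of the ratio to $a/c$ is correct.

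There is, however, a genuine gap in your closing inference. You establish $\left|\frac{a}{c}-1\right|\le t$ for $t=\frac{\sqrt{32\ln(4/\delta)}}{\sqrt{pn}}+\frac{4}{pn}$ and conclude $|\ln(a/c)|\le\ln(1+t)=\eps$. This does not follow for the lower tail: $|a/c-1|\le t$ gives $\ln(a/c)\le\ln(1+t)$, but on the other side it gives only $\ln(a/c)\ge\ln(1-t)$, and $\ln(1-t)<-\ln(1+t)$ for every $t>0$ (since $(1-t)(1+t)<1$). You need a second, symmetric estimate on $|c/a-1|=|a-c|/a$, using that $a$ is also bounded below on the good event (by the same concentration, $a=A+1\ge\tfrac{1}{2}(C-\sqrt{2C\ln(4/\delta)})$), which yields $c/a\le 1+t$ and hence $a/c\ge(1+t)^{-1}=e^{-\eps}$; the same two-sided check is then needed under $Q$. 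Relatedly, your claim $c\ge np/4$ requires $\sqrt{2C\ln(4/\delta)}\le C/2$, i.e.\ $C\ge 8\ln(4/\delta)$, whereas the tail event you describe gives only $C\ge np\ge 8\ln(2/\delta)$, and $\ln(4/\delta)>\ln(2/\delta)$. You flag constant-tracking as a bookkeeping obstacle, but this particular slack is real and must be closed, e.g.\ by extracting a somewhat stronger lower tail on $C$ than $C\ge np$ (the Chernoff bound has room to spare under the hypothesis $n\ge 8\ln(2/\delta)/p$).
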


The proof of Theorem~\ref{mainanalyticalthm} is directly implied by the following lemma 
with $p = 2/(e^{\eps_0}+1)$ which proves a slightly stronger statement that we use in the proof of Corollary~\ref{RDPbound}.
\begin{lemma}
\label{lem:tail-bound}
Let $\eps_0\ge 0$, $\delta >0$, $p\in(0,1/(e^{\eps_0}+1)]$, and $n\in\mathbb{N}$ be such that $n~\ge~ \frac{8\ln(2/\delta)}{p}$. 
	Let $C\sim \bin(n-1, 2p)$, $A\sim\bin(C, 1/2)$ and $\Delta \sim \Ber\left(\frac{e^{\eps_0}}{e^{\eps_0}+1}\right)$.  Let $P=(A+\Delta, C-A+1-\Delta)$ and $Q=(A+1-\Delta,C-A+\Delta)$. 
For \[\eps = \ln
\left(1+\frac{e^{\eps_0}-1}{e^{\eps_0}+1}\left(\frac{\sqrt{32\ln(4/\delta)}}{\sqrt{pn}}+\frac{4}{pn}\right)\right)\] we have that
\[\Pr_{(a,c)\sim P}\left(-\eps\le\ln\frac{\Pr(P=(a,c))}{\Pr(Q=(a,c))}\le \eps\right)\ge 1-\delta \text{ and }\]
\[\Pr_{(a,c)\sim Q}\left(-\eps\le\ln\frac{\Pr(P=(a,c))}{\Pr(Q=(a,c))}\le \eps\right)\ge 1-\delta .\]
In particular, $P$ and $Q$ are $(\eps, \delta)$- indistinguishable.
\end{lemma}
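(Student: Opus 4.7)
The plan is to reduce the claim to Lemma~\ref{binomials-aux} by exhibiting $P$ and $Q$ as mixtures of $P' = (A+1, C-A)$ and $Q' = (A, C-A+1)$, and showing that the new likelihood ratio has a simple closed form in $r = a/c$ that can be controlled on the high-probability event supplied by that lemma.

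First I would compute the likelihood ratio. Conditioning on the value of $C = a+c-1$ and of $\Delta$, and using the identity $\binom{N}{a-1}/\binom{N}{a} = a/c$ with $N = a+c-1$, a short calculation gives
\[\frac{P(a,c)}{Q(a,c)} = \frac{\binom{a+c-1}{a} + e^{\eps_0}\binom{a+c-1}{a-1}}{\binom{a+c-1}{a-1} + e^{\eps_0}\binom{a+c-1}{a}} = \frac{c + e^{\eps_0}a}{a + e^{\eps_0}c}.\]
With $r = a/c$ this equals $(1+e^{\eps_0}r)/(r+e^{\eps_0})$, which is close to $1$ whenever $r$ is close to $1$.

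The same case analysis on $\Delta$ shows $P = \tfrac{e^{\eps_0}}{e^{\eps_0}+1}P' + \tfrac{1}{e^{\eps_0}+1}Q'$ and, symmetrically, $Q = \tfrac{1}{e^{\eps_0}+1}P' + \tfrac{e^{\eps_0}}{e^{\eps_0}+1}Q'$. In the notation of Lemma~\ref{binomials-aux} one has $P'(a,c)/Q'(a,c) = a/c$, so with $\eps' = \ln\bigl(1 + \sqrt{32\ln(4/\delta)/(pn)} + 4/(pn)\bigr)$ that lemma guarantees $\Pr_{P'}(E), \Pr_{Q'}(E) \ge 1-\delta$ for the event $E = \{|\ln(a/c)| \le \eps'\}$. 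Since $P$ and $Q$ are convex combinations of $P'$ and $Q'$, $\Pr_P(E)$ and $\Pr_Q(E)$ are also at least $1-\delta$.

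It then remains to verify that $|\ln(P/Q)| \le \eps$ on $E$. Rearranging the closed form gives the identity $P/Q - 1 = (e^{\eps_0}-1)(r-1)/(r+e^{\eps_0})$. For $r \ge 1$ the bound $r+e^{\eps_0} \ge 1+e^{\eps_0}$ yields $P/Q \le 1 + \tfrac{e^{\eps_0}-1}{e^{\eps_0}+1}(r-1) \le 1 + \tfrac{e^{\eps_0}-1}{e^{\eps_0}+1}(e^{\eps'}-1) = e^{\eps}$ on $E$, and the case $r < 1$ follows by applying the same bound to $Q/P$ with $s = c/a > 1$. The main obstacle is identifying the closed form of the likelihood ratio and spotting the mixture decomposition; the factor $\tfrac{e^{\eps_0}-1}{e^{\eps_0}+1}$ in the final bound is exactly the derivative of $\ln\tfrac{1+e^{\eps_0}r}{r+e^{\eps_0}}$ at $r = 1$, which explains why no further slack is introduced in passing from $\eps'$ to $\eps$.
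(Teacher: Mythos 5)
Your proposal is correct and follows essentially the same route as the paper's proof: both decompose $P$ and $Q$ as the mixtures $\frac{e^{\eps_0}}{e^{\eps_0}+1}P'+\frac{1}{e^{\eps_0}+1}Q'$ and $\frac{1}{e^{\eps_0}+1}P'+\frac{e^{\eps_0}}{e^{\eps_0}+1}Q'$, invoke Lemma~\ref{binomials-aux} to control $r=P'(a,c)/Q'(a,c)=a/c$, and then bound the transformed ratio $(1+e^{\eps_0}r)/(r+e^{\eps_0})$ on the event $|\ln r|\le\eps'$ via the same elementary inequality $r+e^{\eps_0}\ge 1+e^{\eps_0}$ (which is how the paper gets from $f(e^{\eps'})$ to $1+\frac{e^{\eps_0}-1}{e^{\eps_0}+1}(e^{\eps'}-1)$). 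Your version is slightly more explicit in deriving the closed form of the likelihood ratio directly from the binomial coefficients and in noting the convexity argument that transfers the high-probability event from $P',Q'$ to $P,Q$, and you handle the lower tail by the $r\mapsto 1/r$ symmetry rather than by writing it out as the paper does; these are cosmetic differences, not a different method.
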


\begin{proof}
	Consider the process where we sample $C\sim \bin(n-1, e^{-\eps_0})$ and $A\sim\bin(C, 1/2)$. Let $P_0=(A+1,C-A)$ and $Q_0=(A,C-A+1)$ then according to Lemma~\ref{binomials-aux},
\[\Pr_{(a,c)\sim P_0}\left(-\eps'\le\ln\frac{\Pr(P_0=(a,c))}{\Pr(Q_0=(a,c))}\le \eps'\right)\ge 1-\delta \text{.   and    } \Pr_{(a,c)\sim Q_0}\left(-\eps'\le\ln\frac{\Pr(P_0=(a,c))}{\Pr(Q_0=(a,c))}\le \eps'\right)\ge 1-\delta,\]
where $\eps'=\ln\left(1+\frac{\sqrt{32\ln(4/\delta)}}{\sqrt{pn}}+\frac{4}{pn}\right).$
Now, let $\alpha=e^{\eps_0}/(e^{\eps_0}+1)$ note that $P = \alpha P_0 +(1-\alpha) Q_0$ and $Q = (1-\alpha)P_0+\alpha Q_0.$
Let $f(z)=\frac{\alpha z + (1-\alpha)}{(1-\alpha)z+\alpha}$, then $f'(z) = \frac{2\alpha-1}{((1-\alpha)z+\alpha)^2}\ge 0$. Therefore, $\max_{z\in[e^{-\eps'},e^{\eps'}]}f(z) = f(e^{\eps'})$. Therefore, if $\frac{\Pr(P_0=(a,c))}{\Pr(Q_0=(a,c))}\in[e^{-\eps'},e^{\eps'}]$ then
\begin{align*}
	\frac{\Pr(P=(a,c))}{\Pr(Q=(a,c))}   &= \frac{\alpha\frac{\Pr(P_0=(a,c))}{\Pr(Q_0=(a,c))}+(1-\alpha)}{(1-\alpha)\frac{\Pr(P_0=(a,c))}{\Pr(Q_0=(a,c))}+\alpha}\\
	&\le \frac{\alpha e^{\eps'}+(1-\alpha)}{(1-\alpha)e^{\eps'}+\alpha}\\
	&= 1+\frac{(2\alpha-1)(e^{\eps'}-1)}{(1-\alpha)e^{\eps'}+\alpha}\\
	&\le 1+(2\alpha-1)(e^{\eps'}-1)\\
    &\le 1+\frac{e^{\eps_0}-1}{e^{\eps_0}+1}\left(\frac{\sqrt{32\ln(4/\delta)}}{\sqrt{pn}}+\frac{4}{pn}\right)
\end{align*}
and
\begin{align*}
	\frac{\Pr(P=(a,c))}{\Pr(Q=(a,c))}&\ge \frac{\alpha e^{-\eps'}+(1-\alpha)}{(1-\alpha)e^{-\eps'}+\alpha}\\
	&=\frac{(1-\alpha)e^{\eps'}+\alpha}{\alpha e^{\eps'}+(1-\alpha)}\\
&\ge \frac{1}{1+\frac{e^{\eps_0}-1}{e^{\eps_0}+1}\left(\frac{\sqrt{32\ln(4/\delta)}}{\sqrt{pn}}+\frac{4}{pn}\right)}
\end{align*}

\end{proof}

\section{Proof of Corollary~\ref{RDPbound-corrected}}
\label{app:rdp-cor}

To prove Corollary~\ref{RDPbound-corrected}  we use the main reduction in \cite{FeldmanMT:2020} together with the slight 
strengthening of their approximate DP  bound that directly controls the tail of the privacy loss random variable that we gave in Lemma~\ref{lem:tail-bound}.

\begin{theorem}\cite[Theorem 3.2]{FeldmanMT:2020}\label{mainanalyticalthm-previous}
	For a domain $\mathcal{D}$, let $\Aldp[i]:\out[1]\times\cdots\times\out[i-1]\times\mathcal{D}\to\out[i]$ for $i\in[n]$ (where $\out[i]$ is the range space of $\Aldp[i]$) be a sequence of algorithms such that $\Aldp[i](z_{1:i-1}, \cdot)$ is an $\eps_0$-DP local randomizer for all values of auxiliary inputs $z_{1:i-1}\in\out[1]\times\cdots\times\out[i-1]$. Let $\shuffler:\mathcal{D}^n\to \out[1]\times\cdots\times \out[n]$ be the algorithm that given a dataset $x_{1:n}\in\mathcal{D}^n$, samples a permutation $\pi$ uniformly at random, then sequentially computes $z_i=\Aldp[i](z_{1:i-1}, x_{\pi(i)})$ for $i\in[n]$ and outputs $z_{1:n}$. Let $X_0$ and $X_1$ be two arbitrary neighboring datasets in $\mathcal{D}^n$.  Let $C\sim \bin(n-1, e^{-\eps_0})$, $A\sim\bin(C, 1/2)$ and $\Delta \sim \Ber\left(\frac{e^{\eps_0}}{e^{\eps_0}+1}\right)$. 
	Then for any distance measure $D$ that satisfies the data processing inequality,
	\[D(\shuffler(X_0)\|\shuffler(X_1))\le D\left((A+\Delta, C-A+1-\Delta)\Big\|(A+1-\Delta,C-A+\Delta)\right).\]
\end{theorem}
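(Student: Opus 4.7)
The plan is to prove the theorem by establishing a ``clone decomposition'' of each local randomizer that reveals a coarser sufficient statistic---the pair of counts $(n_0, n_1)$---whose distribution coincides with the pair in the theorem statement. Fix neighboring datasets $X_0$ and $X_1$ differing in the first entry, with $x_1^0$ vs.\ $x_1^1$. For any $\eps_0$-DP local randomizer $\mathcal{R}$, the $\eps_0$-DP guarantee between $x_1^0$ and $x_1^1$ yields distributions $R_0,R_1$ (depending on $\mathcal{R}$ and the two differing inputs) such that $\mathcal{R}(x_1^b) = \tfrac{e^{\eps_0}}{e^{\eps_0}+1}R_b + \tfrac{1}{e^{\eps_0}+1}R_{1-b}$; this is straightforward from the two pointwise inequalities $\mathcal{R}(x_1^0)\le e^{\eps_0}\mathcal{R}(x_1^1)$ and $\mathcal{R}(x_1^1)\le e^{\eps_0}\mathcal{R}(x_1^0)$, by solving the resulting mixture equations. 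For $i\ge 2$, applying $\eps_0$-DP between $x_i$ and each of $x_1^0, x_1^1$ gives $\mathcal{R}(x_i)\ge\tfrac{e^{-\eps_0}}{2}(\mathcal{R}(x_1^0)+\mathcal{R}(x_1^1))=\tfrac{e^{-\eps_0}}{2}(R_0+R_1)$, so there exists a residual distribution $\mathcal{R}_i''$ with $\mathcal{R}(x_i) = \tfrac{e^{-\eps_0}}{2}R_0 + \tfrac{e^{-\eps_0}}{2}R_1 + (1-e^{-\eps_0})\mathcal{R}_i''$. Crucially, the decomposition for user~1 has no residual component, so the identity-ambiguity issue described in the errata for Lemma~\ref{shuffletobinoms} does not arise here.

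Next, attach a ``type'' $T_i$ to each user recording which mixture component will be sampled from. For user~1 holding $x_1^b$, $T_1\in\{0,1\}$ with $\Pr[T_1=b]=\tfrac{e^{\eps_0}}{e^{\eps_0}+1}$; for $i\ge 2$, $T_i\in\{0,1,\ast\}$ with $\Pr[T_i=0]=\Pr[T_i=1]=\tfrac{e^{-\eps_0}}{2}$ and $\Pr[T_i=\ast]=1-e^{-\eps_0}$. I will argue that $\shuffler(X_b)$ is distributed identically to the following procedure: draw the $T_i$ independently, draw a uniform permutation $\pi$, and then sequentially sample $z_i$ from $R_{T_{\pi(i)}}$ if $T_{\pi(i)}\in\{0,1\}$ or from $\mathcal{R}_{\pi(i)}''$ if $T_{\pi(i)}=\ast$, where in the adaptive case the $R_0, R_1, \mathcal{R}_i''$ at step $i$ are the decomposition of $\Aldp[i](z_{1:i-1},\cdot)$ with respect to the reference inputs. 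Because $\pi$ is uniform and independent of $(T_1,\ldots,T_n)$ and of $b$, the conditional law of the output given the type counts $(n_0,n_1,n_\ast)$ does not depend on which user held which type, and in particular does not depend on $b$. The map ``given $(n_0,n_1)$, assign positions to types uniformly and draw the outputs'' therefore defines a single (randomized) post-processing function $f$ that works for both $X_0$ and $X_1$.

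It remains to compute the law of $(n_0,n_1)$ under each $X_b$. The $n-1$ middle users contribute a total of $C\sim\Bin(n-1,e^{-\eps_0})$ clones (types in $\{0,1\}$), split as $A\sim\Bin(C,1/2)$ zeros and $C-A$ ones. User~1 contributes a single clone whose type matches $b$ with probability $\Delta\sim\Ber\!\left(\tfrac{e^{\eps_0}}{e^{\eps_0}+1}\right)$. Hence $(n_0,n_1)\mid X_0 \disteq (A+\Delta, C-A+1-\Delta)$ and $(n_0,n_1)\mid X_1 \disteq (A+1-\Delta, C-A+\Delta)$. Since $\shuffler(X_b)\disteq f((n_0,n_1)\mid X_b)$ with the same $f$ for both $b$, the data processing inequality gives $D(\shuffler(X_0)\|\shuffler(X_1))\le D((A+\Delta,C-A+1-\Delta)\|(A+1-\Delta,C-A+\Delta))$, as claimed.

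The main obstacle is the adaptive case: the base distributions $R_0, R_1, \mathcal{R}_i''$ at step $i$ depend on the previous outputs $z_{1:i-1}$, so the ``commit to all types, then reveal outputs'' picture must be justified by a chain-rule coupling. The key point is that the \emph{mixture weights} in the decomposition ($\tfrac{e^{\eps_0}}{e^{\eps_0}+1}$, $\tfrac{1}{e^{\eps_0}+1}$, $\tfrac{e^{-\eps_0}}{2}$, $1-e^{-\eps_0}$) are universal constants independent of $z_{1:i-1}$, so the types can be drawn in advance and the data-dependent base distributions instantiated step by step; an induction on $i$ then matches the two laws outcome-by-outcome. A secondary routine check is that the existence of $R_0, R_1$ and $\mathcal{R}_i''$ holds on general (e.g.\ possibly infinite) output spaces, which reduces to the finite case via a standard measurable discretization or by the explicit construction given in Lemma~\ref{sizetwopdf-app}.
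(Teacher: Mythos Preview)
The proposal is correct. The paper does not prove this theorem itself---it is quoted from \cite{FeldmanMT:2020}---but your argument is precisely the ``clone'' reduction of that work, and it is the same technique the present paper adapts in Lemma~\ref{shuffletobinoms} and Algorithm~\ref{postprocessing} (with different mixture weights). Your explicit observation that user~$1$'s decomposition has no residual component, so the $\ast$-users always form a uniformly random subset of $\{2,\ldots,n\}$ independent of $b$, is exactly why the original FMT'20 argument goes through while the paper's Lemma~\ref{shuffletobinoms} fails for $p<1/(e^{\eps_0}+1)$; your post-processing therefore needs only a uniform injection from $\ast$-positions into $\{2,\ldots,n\}$, just as in Algorithm~\ref{postprocessing2}. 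One small point worth tightening: when you write that ``the conditional law of the output given the type counts does not depend on which user held which type,'' the output \emph{does} depend on the identities of the $\ast$-users through the residuals $\mathcal{R}_j''$; what you mean (and what the argument needs) is that, conditioned on $(n_0,n_1)$, the joint law of the permuted type vector together with the identities at $\ast$-positions is the same for $b=0$ and $b=1$, so a single post-processing $f$ reproduces both.
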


\begin{proof}[Proof of Corollary~\ref{RDPbound-corrected}]
We can, for brevity, restrict ourselves to $\eps_0 \geq 1$ since the result is implied by Erlingsson et al.~\cite{ErlingssonFMRTT19} for $\eps_0 \leq 1$. Now let $P$ and $Q$ be the distributions defined in Theorem~\ref{mainanalyticalthm-previous}. Namely, for $C\sim \bin(n-1, e^{-\eps_0})$, $A\sim\bin(C, 1/2)$ and $\Delta \sim \Ber\left(\frac{e^{\eps_0}}{e^{\eps_0}+1}\right)$,  let $P=(A+\Delta, C-A+1-\Delta)$ and $Q=(A+1-\Delta,C-A+\Delta)$. 
Then, by Lemma~\ref{lem:tail-bound} with $p=1/(2e^{\eps_0})$, we have that, for $n~\ge~ \frac{16\ln(2/\delta)}{e^{\eps_0}}$ and $\eps = \ln
\left(1+\frac{e^{\eps_0}-1}{e^{\eps_0}+1}\left(\frac{8\sqrt{e^{\eps_0}\ln(4/\delta)}}{\sqrt{n}}+\frac{8e^{\eps_0}}{n}\right)\right)$, it holds that
\[\Pr_{(a,c)\sim Q}\left[e^{-\eps}\le \frac{\Pr[P=(a,c)]}{\Pr[Q=(a,c)]}\le e^\eps\right]\ge 1-\delta .\]
Note that for $n~\ge~ \frac{16\ln(2/\delta)}{e^{\eps_0}}$ and $\delta \leq 1$, \[ \eps \leq  \frac{e^{\eps_0}-1}{e^{\eps_0}+1}\left(\frac{8\sqrt{e^{\eps_0}\ln(4/\delta)}}{\sqrt{n}}+\frac{8e^{\eps_0}}{n}\right)\leq \frac{16 \sqrt{e^{\eps_0}}}{\sqrt{n}} \cdot \sqrt{\ln(2/\delta)} .\]
The condition on $n$ is equivalent to $\delta \geq 2 e^{-\frac{n}{16 e^{\eps_0}}}$. This implies that for $\sigma = 16 \sqrt{\frac{e^{\eps_0}}{n}}$ we have that for any $\delta \geq e^{-\frac{n}{16 e^{\eps_0}}}$,
\[\Pr_{(a,c)\sim Q}\left[\left| \ln \frac{\Pr[P=(a,c)]}{\Pr[Q=(a,c)]} \right|\ge \sigma \sqrt{\ln(1/\delta)}  \right]\leq 2 \delta ,\]
To ensure that the condition $\delta_{\min} = e^{-\frac{n}{16 e^{\eps_0}}} \leq e^{-\alpha \eps_0} \cdot \alpha^2\sigma^2 /4$ is satisfied for $\alpha > 1$ it suffices to take \[\alpha \leq \frac{1}{\eps_0} \left(\frac{n}{16e^{\eps_0}} - \ln \left(\frac{n}{64 e^{\eps_0}}\right) \right) .\]
In particular, it is satisfied for $\alpha \leq \frac{n}{32 \eps_0 e^{\eps_0}}$.
This means that we can apply Theorem~\ref{thm:adp2rdp} to conclude that for $1 \leq  \alpha \leq \frac{n}{32 \eps_0 e^{\eps_0}}$ , we have that \[D^\alpha(P\|Q) =O(\alpha \sigma^2) = O(\alpha e^{\eps_0}/n).\]

\end{proof}

\section{Proof of Lemma~\ref{maxdivergence}}

\begin{proof}[Proof of Lemma~\ref{maxdivergence}]
Define a post-processing function $g:\mathbb{N}^2\to\mathbb{N}^2$ by $g(d,e)=(\bin(d,\frac{p}{p'}), \bin(e,\frac{p}{p'}))$. 
We claim that $g(\threedistone{\eps_0}{p'})\disteq \threedistone{\eps_0}{p}$ and $g(\threedisttwo{\eps_0}{p'}) \disteq \threedisttwo{\eps_0}{p}$.
Let $\gamma=p/p'$. Consider three random variables $X(p)$, $Y_1(p)$ and $Y_2(p)$ with outputs in 
$\{(1,0),(0,1),(0,0)\}$. Suppose that\[\Pr(X(p)=(1,0))=\Pr(X(p)=(0,1))=p\;\;\;\text{and}\;\;\;\Pr(X(p)=(0,0))=1-2p,\]
\[\Pr(Y_1(p)=(1,0))=p,\;\Pr(Y_1(p)=(0,1))=e^{\eps}p\;\;\;\text{and}\;\;\;\Pr(Y_1(p)=(0,0))=1-p-e^{\eps}p,\]
and
\[\Pr(Y_2(p)=(1,0))=e^{\eps}p,\;\Pr(Y_2(p)=(0,1))=p\;\;\;\text{and}\;\;\;\Pr(Y_2(p)=(0,0))=1-p-e^{\eps}p.\]
Note that $g(X(p'))\disteq X(p)$, $g(Y_1(p'))\disteq Y_1(p)$ and $g(Y_2(p'))\disteq Y_2(p)$. Also, if $X_1, \cdots, X_{n-1}$ are $n-1$ independent copies of $X$ then
\begin{align*}
g(\threedistone{\eps_0}{p'})&\disteq g(\sum_{i=1}^{n-1} X_i(p') + Y_1(p'))\\
&\disteq \sum_{i=1}^{n-1} g(X_i(p'))+g(Y_1(p'))\\
&\disteq \sum_{i=1}^{n-1} X_i(p)+Y_1(p)\\
&\disteq \threedistone{\eps_0}{p}
\end{align*}
Similarly, $g(\threedisttwo{\eps_0}{p'})\disteq\threedisttwo{\eps_0}{p}$. Therefore, the lemma follows from the post-processing inequality (Lemma~\ref{postprocess}).
\end{proof}

\end{document}